\def\indist{\rightsquigarrow}
\def\ind{\perp\!\!\!\perp}
\def\T{{ \mathrm{\scriptscriptstyle T} }}
\newcommand{\var}{\text{var}}
\newcommand{\cov}{\text{cov}}
\newcommand{\Pb}{\mathbb{P}}
\newcommand{\Pn}{\mathbb{P}_n}
\newcommand{\E}{\mathbb{E}}
\newcommand{\R}{\mathbb{R}}
\newcommand{\bO}{\mathbf{O}}
\newcommand{\bo}{\mathbf{o}}
\newcommand{\bV}{\mathbf{V}}
\newcommand{\bv}{\mathbf{v}}
\newcommand{\bX}{\mathbf{X}}
\newcommand{\bx}{\mathbf{x}}
\def\logit{\text{logit}}
\def\expit{\text{expit}}
\DeclareMathOperator*{\argmin}{arg\,min}
\DeclareMathOperator*{\argmax}{arg\,max}
\DeclareSymbolFont{bbold}{U}{bbold}{m}{n}
\DeclareSymbolFontAlphabet{\mathbbold}{bbold}
\newcommand{\one}{\mathbbold{1}}
\newtheorem{theorem}{Theorem}
\newtheorem{lemma}{Lemma}
\newtheorem{corollary}{Corollary}
\newtheorem{proposition}{Proposition}
\theoremstyle{definition}
\newtheorem{definition}{Definition}
\newtheorem{property}{Property}
\theoremstyle{remark}
\newtheorem{assumption}{Assumption}
\newtheorem{remark}{Remark}
\def\spacingset#1{\renewcommand{\baselinestretch}%
{#1}\small\normalsize} \spacingset{1}
\numberwithin{equation}{section}
\theoremstyle{plain}
\begin{document}

\begin{frontmatter}
\title{Sharp Instruments for Classifying Compliers and Generalizing Causal Effects}
\runtitle{Sharp Instruments}

\begin{aug}
\author{\fnms{Edward H.} \snm{Kennedy}\ead[label=e1]{edward@stat.cmu.edu}},
\author{\fnms{Sivaraman} \snm{Balakrishnan}\ead[label=e2]{siva@stat.cmu.edu}},
\author{\fnms{Max} \snm{G'Sell}\ead[label=e3]{mgsell@stat.cmu.edu}}

\thankstext{t1}{The authors thank the Johns Hopkins Causal Inference Working Group, Luke Keele, Betsy Ogburn, Jamie Robins, Fredrik S{\"a}vje, and Larry Wasserman for helpful comments and discussions. This work was partially supported by the NSF grant DMS-17130003.}
\runauthor{E.H. Kennedy et al.}

\affiliation{Carnegie Mellon University}

\address{Department of Statistics and Data Science \\
Carnegie Mellon University\\
Baker Hall 132 \\
Pittsburgh, Pennsylvania 15213 USA \\
\printead{e1} \\
\phantom{E-mail:\ }\printead*{e2} \\
\phantom{E-mail:\ }\printead*{e3}}

\end{aug}

\begin{abstract}
It is well-known that, without restricting treatment effect heterogeneity, instrumental variable (IV) methods only identify ``local'' effects among compliers, i.e., those subjects who take treatment only when encouraged by the IV. Local effects are controversial since they seem to only apply to an unidentified subgroup; this has led many to denounce these effects as having little policy relevance. However, we show that such pessimism is not always warranted: it can be possible to accurately predict who compliers are, and obtain tight bounds on more generalizable effects in identifiable subgroups. We propose methods for doing so and study estimation error and asymptotic properties, showing that these tasks can sometimes be accomplished even with very weak IVs. We go on to introduce a new measure of IV quality called ``sharpness'', which reflects the variation in compliance explained by covariates, and captures how well one can identify compliers and obtain tight bounds on identifiable subgroup effects. We develop an estimator of sharpness, and show that it is asymptotically efficient under weak conditions. Finally we explore finite-sample properties via simulation, and apply the methods to study canvassing effects on voter turnout. We propose that sharpness should be presented alongside strength to assess IV quality.
\end{abstract}

\begin{keyword}[class=MSC]
\kwd{62G05} \kwd{62H30}
\end{keyword}

\begin{keyword}
\kwd{causal inference; instrumental variable; noncompliance; observational study; generalizability.}
\end{keyword}

\end{frontmatter}

\section{Introduction}
\label{sec:intro}
Instrumental variable (IV) methods are a widespread tool for identifying causal effects in studies where treatment is subject to unmeasured confounding. These methods have been used in econometrics since the 1920s \citep{wright1934method}, but have only been set within a formal potential outcome framework more recently \citep{robins1989analysis, manski1990nonparametric, imbens1994identification}. Roughly speaking, an instrument is a variable that is associated with treatment, but is itself unconfounded and does not directly affect outcomes. An archetypal example is in randomized experiments with noncompliance, where initial randomization can be an instrument for the treatment that was actually received. IV methods are also used widely in observational studies, where investigators try to exploit natural randomness in, for example, treatment preference, distance, or  time. We refer to  \citet{hernan2006instruments, imbens2014instrumental, baiocchi2014instrumental} for a more comprehensive review and examples. 

Despite their popularity and prevalence, instrument variable methods bring some difficulties that do not arise in studies of unconfounded treatments. In particular, without restricting treatment effect heterogeneity in some way or adding extra assumptions, one cannot identify average treatment effects across the entire population. For example, even in the simplest setting involving a randomized study with one-sided noncompliance (e.g., where subjects randomized to control cannot access treatment), the treatment effect is nonparametrically identified only among those who actually receive treatment.

One option then is to pursue bounds on the overall average treatment effect \citep{robins1989analysis, manski1990nonparametric, balke1997bounds}. This approach is robust, but has been criticized on the grounds that the resulting inferences can be so imprecise that they are not helpful for making policy decisions. Others argue that even wide bounds are useful, by making explicit that any more precision would require further assumptions \citep{robins1996identification}. An alternative  approach incorporates extra assumptions to achieve point identification. Classically this was often accomplished via constant treatment effect assumptions within linear structural equation models. More recent generalizations allow for heterogeneous treatment effects and non-linear models based on weaker homogeneity restrictions, e.g., no effect modification by the instrument, or other no-interaction or parametric assumptions  \citep{robins1994correcting,tan2010marginal}. However, as noted by \citet{tchetgen2013alternative}, parametric identification can be problematic since it a priori restricts the effect of interest, and such functional form knowledge is not typically available in practice.

Yet another strategy instead assumes monotonicity \citep{robins1989analysis, imbens1994identification}, which rules out the possibility that the instrument could encourage someone to take control when they would otherwise take treatment (i.e., rules out so-called defiers).  This approach is unique in allowing nonparametric identification of a causal effect, but only a local effect among the subgroup of compliers, i.e., those subjects who would only take treatment when encouraged by the instrument \citep{imbens1994identification, angrist1996identification}. These local average treatment effects (LATEs) have generated some controversy, since they are defined in an unidentified subgroup that is not directly observed; we refer to \citet{imbens2014instrumental} and \citet{swanson2014think} for a recent debate. The issue is that, for encouraged subjects, we never get to see whether they would have taken treatment if not encouraged, and vice versa for unencouraged subjects. Therefore it is generally unknown whether any given subject is a complier or not. 

One justification for continuing to pursue complier effects is that they allow something causal to be learned in broken or ``second-best'' studies with unmeasured confounding, even without restricting effect heterogeneity \citep{imbens2010better, imbens2014instrumental}. In other words, although complier effects may not be an ideal target estimand, in reality most observational studies are confounded and so the ideal is not attainable. Despite this, one might argue, complier effects can still reveal a piece of the puzzle of the causal structure, and can in principle be used together with bounds on more standard effects.

However, such justification is not always convincing, yielding some lively debate. \citet{robins1996identification} stressed early on that the complier subgroup is not identified, and gave examples where complier effects are not of primary policy interest. \citet{pearl2009causality} says the complier ``subpopulation cannot be identified and, more seriously, it cannot serve as a basis for policies.'' \citet{deaton2010instruments} compares targeting local effects to the drunk who only looks for his keys near the lamppost, since that is where the light is. \citet{swanson2014think} state that complier effects ``only pertain to an unknown subset of the population'', and that ``as we do not know who is a complier, we do not know to whom our new policy should apply.'' These kinds of critiques suggest that generalization via  complier effects is a hopeless endeavor. In this paper, we explore whether this is necessarily the case.

\subsection{Motivating Example}

The most common way to judge an instrument's quality is by its strength, typically defined as the proportion of compliers  $\Pb(C=1)$ \citep{baiocchi2014instrumental}, where $C$ is the unobserved indicator of complier status.  However, consider  Figure 1. 

\begin{figure}[h!] 
\centering
\includegraphics[width=.64\textwidth]{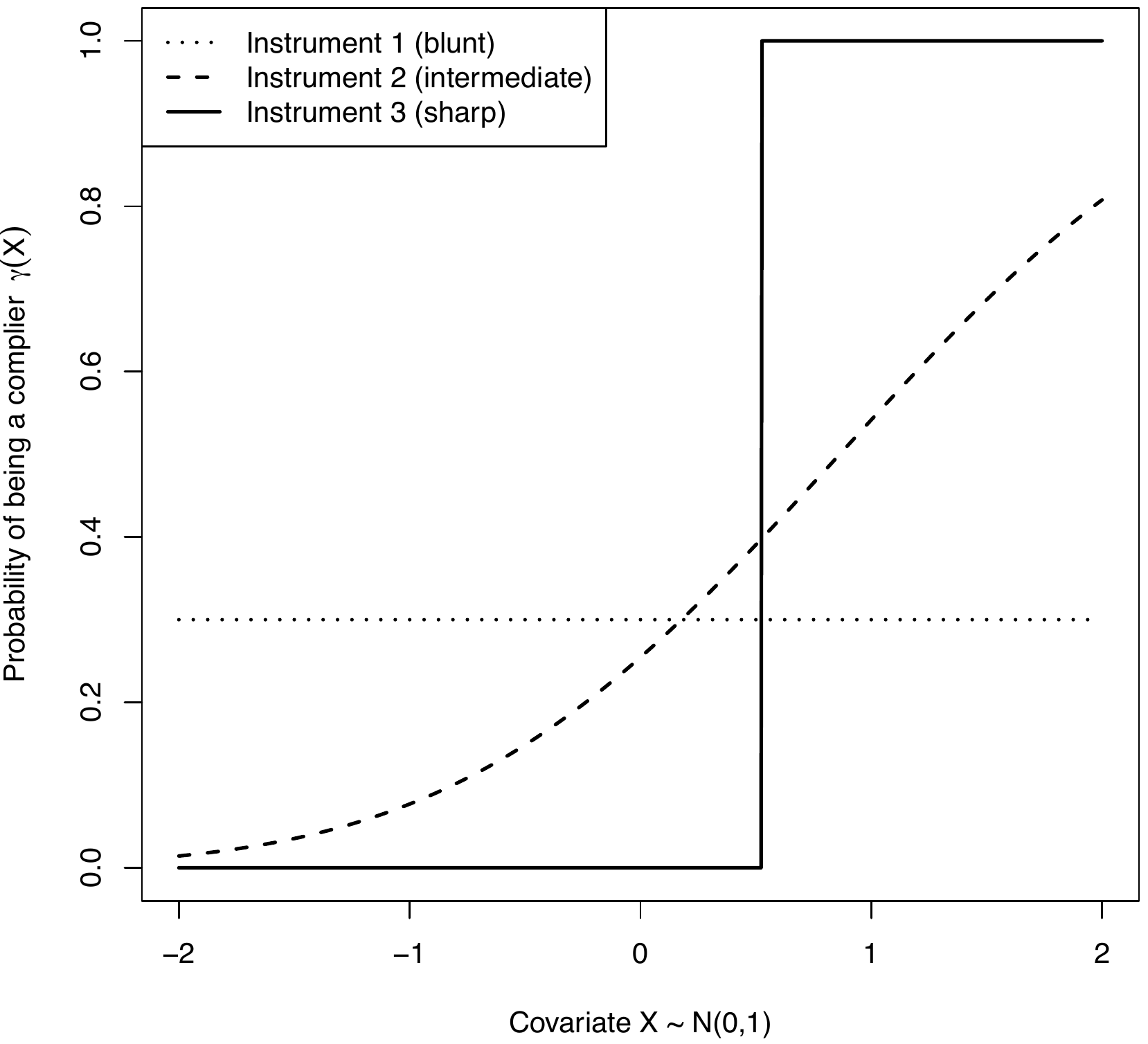} 
\caption{Compliance probability $\gamma(x)=\Pb(C=1 \mid X=x)$ for three equally strong IVs.} \label{fig:explot}
\end{figure}

In this toy example, there is a single covariate $X \sim N(0,1)$ and three candidate instruments, $(Z_1, Z_2, Z_3)$. All three instruments have exactly the same strength, each yielding $30\%$ compliers in the population. However, the available information about compliers changes drastically across the three cases. For the first instrument $Z_1$, it is only known that the probability of compliance is 30\% for each subject, regardless of covariate value. Thus there is no additional information beyond the marginal strength; this is the worst-case setup often considered in critiques of complier-specific effects. However, consider the third instrument $Z_3$. For this instrument, the covariate $X$ perfectly predicts compliance, so that 
$C=\one( X>0.5244\ldots )$ 
and the complier-specific effect 
$$ \E(Y^{a=1} - Y^{a=0} \mid C=1 ) = \E(Y^{a=1} - Y^{a=0} \mid X > 0.5244\ldots) $$
is in fact just a conditional effect within an observable subgroup. Therefore when using $Z_3$ as an instrument all aforementioned concerns about local effects fall away completely. Importantly, this fact is not reflected at all in the strength of the instrument. It is also missed by the first-stage F-statistic, another common measure of instrument quality \citep{bound1995problems, staiger1997instrumental}, regardless of whether modeling assumptions are correct or not; we provide a simulated example in Appendix~\ref{app:fstat}. The second instrument $Z_2$ is an intermediate between $Z_1$ and $Z_3$. 

This example raises many interesting questions, which arise more generally in any instrument variable study. How can we quantify the extra information afforded by instruments like $Z_2$ relative to $Z_1$? Can we leverage this information to obtain more accurate guesses of who the compliers are? Can this help us go beyond local effects and instead identify effects in observable subgroups? The goal of this paper is to provide answers to these questions. Overall, we find that pessimism about local effects may be warranted in studies with blunt instruments. However, our work indicates that many concerns can be ameliorated or avoided in studies with sharp instruments, even if they are weak.

\subsection{Outline \& Contributions}
In this paper we characterize sharp instruments as those that admit accurate complier predictions, and tight bounds on effects in identifiable subgroups. We present some notation and our assumptions in Section~\ref{sec:not}. In Section~\ref{sec:class} we discuss the problem of classifying compliers. We propose several complier classification rules, derive their large-sample errors, and discuss optimality and estimation.  In Section~\ref{sec:bounds} we discuss using instruments to bound effects in identifiable subgroups, characterize the subgroup that yields tightest bounds, and propose corresponding estimators for these bounds. In Section~\ref{sec:sharp} we propose a new summary measure of instrument quality called sharpness, which is separate from strength, and measures the variation in compliance explained by compliance scores. We show that sharper instruments yield better identification of compliers and tighter bounds on effects in identifiable subgroups, and present an efficient nonparametric estimator of the sharpness of an instrument. Our estimators are based on influence functions so as to yield fast convergence rates and tractable inference even when relying on modern flexible regression methods;  all methods are implemented in the \verb|npcausal| package in R. Finally, in Section~\ref{sec:sims} we study finite-sample properties via simulation, and apply our methods in a study of effects of canvassing on voter turnout \citep{green2003getting}.

\section{Notation \& Setup}
\label{sec:not}

We consider the usual instrumental variable setup, where one observes an iid sample 
$\{\bO_1,\ldots,\bO_n\} \sim \Pb$ with
$$ \bO = (\bX,Z,A,Y) $$
for  covariates $\bX \in \mathcal{X} \subseteq \R^p$, a binary instrument $Z \in \{0,1\}$, a binary treatment $A \in \{0,1\}$, and some outcome $Y \in [0,1]$ of interest. We let $Y^a$ denote the potential outcome \citep{rubin1974estimating} that would have been observed had treatment been set to $A=a$, and the goal is to learn about the distribution of the treatment effect $Y^{a=1}-Y^{a=0}$. We also need to define potential outcomes under interventions on the instrument. Thus let $Y^{za}$ denote the potential outcome that would have been observed under both $Z=z$ and $A=a$, and similarly let $A^z$ and $Y^z=Y^{zA^z}$ denote the potential treatment and outcome when the instrument is set to $Z=z$. In the statement of some of our results we use the standard statistical big-O notation, as well as the shorthand $a \lesssim b$ to denote $a \leq Cb$ for some universal positive constant $C > 0$.

To ease the presentation we let
\begin{align*}
\pi_z(\bx) = \Pb(Z=z \mid \bX=\bx) , \
\lambda_z(\bx) = \Pb(A=1 \mid \bX=\bx, Z=z) , \
\end{align*}
denote the instrument propensity score and treatment regression,
and let 
\begin{align*}
\gamma(\bx) = \lambda_1(\bx) - \lambda_0(\bx)
\end{align*} 
denote the corresponding IV-difference. 

We  let $C=\one(A^{z=1}>A^{z=0})$ denote the latent variable indicating whether a subject is a complier, i.e., whether a subject would respond to encouragement by the instrument. As mentioned in Section~\ref{sec:intro}, $C$ is not directly observed. Nonetheless, it is well-known \citep{angrist1996identification, abadie2003semiparametric, hernan2006instruments}  that causal effects among compliers are nonparametrically identified under the following assumptions:
\begin{assumption}[Consistency] $A=ZA^{z=1} + (1-Z)A^{z=0}$ and $Y=ZY^{z=1} + (1-Z)Y^{z=0}$.
\end{assumption}
\begin{assumption}[Positivity] $\Pb\{ \epsilon \leq \pi_z(\bX) \leq 1-\epsilon \}=1$ for some $\epsilon>0$.
\end{assumption}
\begin{assumption}[Unconfounded IV] $Z \ind (A^z, Y^z) \mid \bX$.
\end{assumption}
\begin{assumption}[Exclusion Restriction] $Y^{za}=Y^a$.
\end{assumption}
\begin{assumption}[Strong Monotonicity] $\Pb(A^{z=1}<A^{z=0})=0$ and $\Pb(C=1) \geq \epsilon>0$.
\end{assumption}
(Note the lower-case indices $z,a$ represent arbitrary values of the instrument and treatment). We refer elsewhere  \citep{angrist1996identification, abadie2003semiparametric, hernan2006instruments} for a detailed discussion of the above assumptions, which are standard in the literature (as mentioned in Section~\ref{sec:intro}, monotonicity is sometimes replaced by effect homogeneity or no-interaction assumptions). Assumptions 1--5 imply that the average effect among compliers (called the local average treatment effect, or LATE) with $\bV=\bv$ (for any subset $\bV \subseteq \bX$) is given by
\begin{equation}
\label{eq:late}
\E(Y^{a=1} - Y^{a=0} \mid \bV, C=1) =   \frac{ \E\{ \E(Y \mid \bX,Z=1) - \E(Y \mid \bX,Z=0) \mid \bV \} }{ \E\{ \E(A \mid \bX,Z=1) - \E(A \mid \bX,Z=0) \mid \bV \} }  . 
\end{equation}
This is the kind of local effect discussed in Section~\ref{sec:intro}. Crucially, Assumptions 1--3 and 5 also imply that the chance of being a complier given covariates is given by 
$$\Pb(C=1 \mid \bX=\bx)=\gamma(\bx)$$ 
and so strength is given by $\mu \equiv \Pb(C=1)=\E\{ \gamma(\bX)\}$. The function $\gamma(\bx)$ has been termed the ``compliance score''  \citep{follmann2000effect, joffe2003compliance, aronow2013beyond}, and is an example of a ``principal score'' \citep{jo2009use, stuart2015assessing, ding2017principal, feller2017principal}. Note that the principal score literature typically assumes independence between principal strata indicators (e.g., $C$) and potential outcomes, which we avoid here. 

\section{Classifying Compliers}
\label{sec:class}

Heuristically, we propose calling instruments sharp when it is possible to predict compliance well, and obtain tight bounds on effects in identifiable subgroups. In this section we discuss the first of these properties, i.e., that of predicting the latent complier status $C$ based on observed covariate information $\bX$. We present several complier classification rules, characterize their errors and the relations between them, and discuss optimality. Finally we present corresponding estimators, and discuss estimation error and large-sample properties.

\begin{remark}
Our view is that complier classification can be a valuable tool in practice, complementary to assessing compliance scores $\gamma$ on their own. A first reason why is pragmatic: it may be simply preferred (e.g., based on ease of interpretation) for practitioners to inspect a concrete set of likely compliers. Also, as we will discuss shortly, there is one particular classifier whose predicted compliers can act as surrogates for estimating any complier characteristic. Another pragmatic justification is that, statistically, complier classification is at least as easy as  compliance score estimation: as in standard classification, one's score estimates could be severely biased and yet good classification error might still be attainable. For a trivial example, suppose $\gamma=0$ for all $\bx$ so there are no compliers, but estimated compliance scores $\widehat\gamma = 0.4$ everywhere and so are highly biased; even so, the classifier $\widehat{h}=\one(\widehat\gamma>t)$ is perfectly accurate for all $t \geq 0.4$. 

Importantly, classification is also particularly crucial whenever decision-making is required. For example, from a policy perspective, encouraging non-compliers may be wasted effort since non-compliers will by definition have the same behavior regardless of encouragement. Thus one could consider the following two-stage treatment policy: first compliance status is predicted, and then treatment is recommended only to those predicted compliers who are expected to benefit. Complier classification could also be useful for simultaneously minimizing non-compliance and increasing generalizability in experiments: for example  one could run a doubly randomized preference trial \citep{marcus2012estimating} where those subjects who are predicted to be compliers are randomized to the experimental arm with a higher probability, whereas predicted non-compliers are randomized to the observational arm with a higher probability. We aim to explore the use of complier classification in these specific decision-making contexts in detail in future work.
\end{remark}

\subsection{Classifiers \& Properties} \label{sec:classprop}
As noted earlier, although compliers are not strictly identified it is possible to predict compliance status based on the fact that 
Assumptions 1--3 and 5 suffice to ensure that
$$ \Pb(C=1 \mid \bX=\bx) = \gamma(\bx).$$
As stated in the following proposition, we can similarly identify the classification error $\mathcal{E}(h)=\Pb(C \neq h)$ for any given complier classification rule $h$, which we define as an arbitrary measurable function $h: \mathcal{X} \mapsto \{0,1\}$ mapping the covariates to a binary prediction.  As discussed further following~\eqref{eq:h2}
this proposition and subsequent results generalize in a natural way to classifiers that are stochastic.

\begin{proposition} \label{errid}
For any complier classification rule $h: \mathcal{X} \mapsto \{0,1\}$, the corresponding classification error $\mathcal{E}(h) = \Pb\{C \neq h(\bX)\}$ is identified under Assumptions 1--3 and 5 as
$$ \mathcal{E}(h) = \E\Big[ \gamma(\bX) \{1- h(\bX) \} + \{1- \gamma(\bX) \} h(\bX) \Big] . $$
\end{proposition}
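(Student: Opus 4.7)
The proof will be a short direct computation leveraging the previously stated identification result $\Pb(C=1 \mid \bX=\bx) = \gamma(\bx)$, which holds under Assumptions 1--3 and 5. I will not need Assumption 4 (the exclusion restriction), since the claim concerns only classification of the latent complier indicator $C$, not outcomes.

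The plan is as follows. First, since both $C$ and $h(\bX)$ are binary, I would rewrite the indicator of a misclassification event as a linear combination:
\begin{equation*}
\one\{C \neq h(\bX)\} \;=\; C\{1 - h(\bX)\} \;+\; (1-C)\, h(\bX),
\end{equation*}
which one verifies by checking the four combinations of $(C, h(\bX)) \in \{0,1\}^2$. Taking expectations gives $\mathcal{E}(h) = \E[C\{1-h(\bX)\}] + \E[(1-C)h(\bX)]$.

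Second, I would apply the tower property conditioning on $\bX$. Since $h(\bX)$ is $\bX$-measurable, it factors out of the inner conditional expectation, yielding
\begin{equation*}
\mathcal{E}(h) \;=\; \E\Big[\Pb(C=1 \mid \bX)\{1 - h(\bX)\} + \Pb(C=0 \mid \bX)\, h(\bX)\Big].
\end{equation*}
Finally, substituting the identification result $\Pb(C=1 \mid \bX) = \gamma(\bX)$ (already established in the discussion preceding the proposition) and $\Pb(C=0 \mid \bX) = 1 - \gamma(\bX)$ yields the claimed expression.

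There is no real obstacle here: the content of the proposition is essentially a restatement of the identification of $\Pb(C=1 \mid \bX)$, plus the trivial binary-indicator algebra. The only thing to be mindful of is ensuring the argument uses only Assumptions 1--3 and 5 (not Assumption 4), which is automatic since the identification of the compliance score $\gamma(\bx)$ itself requires only those assumptions. The extension to stochastic classifiers alluded to after the statement would follow by the same computation after additionally conditioning on the external randomization.
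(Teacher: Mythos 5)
Your proposal is correct and matches the paper's proof essentially verbatim: both rewrite $\one\{C \neq h(\bX)\}$ as $C\{1-h(\bX)\} + (1-C)h(\bX)$, take expectations, and apply iterated expectation together with the identification $\Pb(C=1 \mid \bX) = \gamma(\bX)$ under Assumptions 1--3 and 5. Nothing further is needed.
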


\noindent A proof of Proposition \ref{errid} and all other results can be found in the Appendix. Although the compliance score has been discussed in the literature since at least  \citet{follmann2000effect} and \citet{joffe2003compliance}, we have not seen it used before for the specific purpose of predicting who the compliers are, nor have we seen any discussion of the error of this task. In contrast, most work seems to focus on the related but separate problem of estimating complier characteristics, such as $\E(\bX \mid C=1)$ \citep{abadie2003semiparametric, baiocchi2014instrumental}. As explained above, we feel compliance classification is practically important and yet under-studied, particularly for so-called sharp instruments that allow for accurate prediction. If the error $\mathcal{E}(h)$ can be made small, then it is possible to know who the compliers are quite precisely. A main point of this paper is to formalize this, and show that it is possible for compliers to be accurately classified even with weak instruments. 

The optimal classifier $h_0$ in terms of minimizing the error $\mathcal{E}(h)$ is given by the Bayes decision function
\begin{equation}
\argmin_{h: \mathcal{X} \mapsto \{0,1\}} \mathcal{E}(h) = \one\{ \gamma(\bx) > 1/2\} \equiv h_0(\bx) .
\end{equation}
The proof of this fact follows from the same logic as in standard classification problems \citep{devroye1996probabilistic}. Shortly we will discuss estimation of the Bayes decision via the plug-in estimator $\one(\widehat\gamma > 1/2)$. One could also consider empirical risk minimizers of the form
$$ \widehat{h} = \argmin_{h \in \mathcal{H}} \widehat{\mathcal{E}}(h) $$
for an appropriate class $\mathcal{H}$ (e.g., linear classifiers) and estimator $\widehat{\mathcal{E}}(h)$ of the error. We leave this to future work, only considering plug-in 
classifiers in this paper. 

Despite its simplicity and optimality (with respect to classification error), the rule $h_0$ may have some practically unfavorable properties in the setting of complier classification. In particular, the set of putative compliers returned by $h_0$ could have a very different size compared to the true set. We call classifiers strength-calibrated if they output sets with the same size as the true set.

\begin{property}
A complier classification rule $h: \mathcal{X} \mapsto \{0,1\}$ is \textit{strength-calibrated} if
\begin{equation}
\Pb\{h(\bX)=1\} = \Pb(C=1) . \tag{P1}
\end{equation}
\end{property}
\noindent If for no other reason, strength calibration can be important in complier classification simply because strength $\mu=\Pb(C=1)$ is such a fundamental quantity in instrumental variable problems. Strength is often the primary criterion used to judge instrument quality, since the more compliers there are, the more subjects there are for whom the local effect is relevant, and so the more meaningful and generalizable the effect is. Thus one might prefer to trade off some error for a classification rule that accurately reflects the underlying size of the complier population, for instance, in settings where achieving a minimum error threshold is sufficient, rather than precise minimization.

Similarly, it is possible that the optimal rule $h_0$ would never guess any compliers (i.e., $h_0=0$ with probability one), which could be unfavorable for a practical analysis. For example, suppose $\gamma=\mu=49\%$, or that the covariate $X$ was uniform and $\gamma(x)=x/2$. Then the optimal rule $h_0$ would return the empty set in both cases, even though the proportion of compliers is nearly one-half and a quarter, respectively. The empty set could be an unsatisfying result for a practitioner who was curious about identifying which particular subjects were compliers. 

A simple strength-calibrated rule is given by the quantile-threshold classifier
\begin{equation} \label{eq:quantile}
h_q(\bx) = \one\{ \gamma(\bx) > q \}
\end{equation}
where $q = F^{-1}(1-\mu)$, and $F(t) = \Pb\{ \gamma(\bX) \leq t\}$ is the cumulative distribution function of the compliance score. The rule $h_q$ simply predicts that the 100$\mu$\% of subjects with the highest compliance scores are the compliers. That $h_q$ is strength-calibrated follows since 
$$ \Pb(h_q=1) = \Pb\{F(\gamma) > 1-\mu \} = 1-(1-\mu)$$
because $F(\gamma)$ follows a uniform distribution. Here we have assumed there exists an exact (unique) quantile $q$ such that $F(q) = 1-\mu$; when this does not hold, one could instead enforce a weaker condition like $\Pb(h=1) \geq \Pb(C=1)$. In the next subsection we show that, when there is a unique quantile, no other strength-calibrated rule can achieve a better classification error than $h_q$. 

One could similarly consider rules of the form $h_t(\bx) = \one\{ \gamma(\bx) > t \}$ for a generic $t \in [0,q]$, if a finer trade-off between classification error and size is required, e.g., if the increase in classification error when moving from $h_0$ to $h_q$ is too severe. 

Another restriction that may be useful to consider in complier classification problems is that of ensuring the covariate distributions among the predicted and true compliers are the same. We call this distribution-matching.

\begin{property}
A complier classification rule $h: \mathcal{X} \mapsto \{0,1\}$ is \textit{distribution-matched} if
\begin{equation}
\Pb\{ \bX \leq \bx \mid h(\bX) = 1 \} = \Pb(\bX \leq \bx \mid C=1) \ \forall \bx . \tag{P2}
\end{equation}
\end{property}
\noindent Distribution matching is useful as it allows practitioners to query the covariate distribution among predicted compliers to learn about the true complier distribution. This provides a user-friendly method for assessing complier characteristics, which can be an alternative to direct estimation via the identifying expressions given for example by \citet{abadie2003semiparametric}. Strength-calibration and distribution-matching together imply that $\Pb(\bX \leq \bx \mid h=0) = \Pb(\bX \leq \bx \mid C=0)$, so the statistician can also estimate prevalence ratios \citep{baiocchi2014instrumental} like $\Pb(\bX \leq \bx \mid C=1)/\Pb(\bX \leq \bx)$ by simply comparing  predicted compliers to the whole sample, i.e., by estimating $\Pb(\bX \leq \bx \mid h=1)/\Pb(\bX \leq \bx)$  for a distribution-matched classifier $h$.

In fact, we show in the next subsection that the only  rule that is both strength-calibrated and distribution-matched is the stochastic classifier
\begin{equation}
h_s(\bx) = \one\{ \gamma(\bx) > U \} \sim \text{Bernoulli}\{\gamma(\bx)\} \label{eq:h2}
\end{equation}
where $U \sim \text{Unif}(0,1)$ is an independent draw from the uniform distribution on [0,1].  Note that $h_s$ randomly predicts that a subject with covariates $\bx$ is a complier with probability $\gamma(\bx)$. To be precise, since $h_s$ is stochastic it should really also be indexed by $U$, as in $h_s(\bx)=h_s(\bx,U)$. It is implicit that any expectations $\E(h)=\E\{h(\bX,U)\}$ are over both $\bX$ and $U$. 

\subsection{Classifier Errors \& Relations}

 In the following results, we characterize the errors of the classifiers $h_q$ and $h_s$, show that they are optimal in the classes of strength-calibrated and distribution-matched classifiers, respectively, and relate their error to the minimal Bayes error $\mathcal{E}(h_0)$. Interestingly, the classification error for the stochastic classifier $h_s$ takes a simple form, which equals the quadratic entropy, i.e., the asymptotic error of a nearest neighbor classifier \citep{cover1967nearest,devroye1996probabilistic}.

\begin{theorem} \label{hsthm}
Suppose there is a unique $(1-\mu)$ quantile so that $\Pb( \gamma > q)=\mu$. Then for  the quantile-threshold classifier $h_q$ defined in \eqref{eq:quantile} we have
$$ \mathcal{E}_q \equiv \mathcal{E}(h_q) = 2 \E[ \gamma(\bX) \one\{\gamma(\bX) \leq q\}] \leq \mathcal{E}(h) $$
for any strength-calibrated $h: \mathcal{X} \mapsto \{0,1\}$ with $\E(h)=\mu$.

Further, the only classifier that is both strength-calibrated and distribution-matched is the stochastic classifier $h_s$ defined in \eqref{eq:h2}. Its error is given by
$$ \mathcal{E}_s  \equiv \mathcal{E}(h_s) = 2 \E\{ \gamma(\bX) - \gamma(\bX)^2 \} . $$
\end{theorem}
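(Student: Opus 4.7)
The plan is to deduce all three conclusions from the identification formula $\mathcal{E}(h) = \E[\gamma(\bX)\{1-h(\bX)\} + \{1-\gamma(\bX)\}h(\bX)]$ of Proposition~\ref{errid}, combined with a rearrangement argument for the optimality claim and a moment-matching argument for the characterization claim. For the error formula of $h_q = \one\{\gamma > q\}$, I would substitute directly into the identification formula and split it into two pieces, then collapse them using the twin identities $\Pb(\gamma > q) = \mu = \E[\gamma]$ to recover a second copy of $\E[\gamma\one\{\gamma \leq q\}]$, yielding $\mathcal{E}_q = 2\E[\gamma\one\{\gamma \leq q\}]$.

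For optimality among strength-calibrated classifiers, the key observation is that $\mathcal{E}(h) = 2\mu - 2\E[\gamma h]$ whenever $\E[h] = \mu$, so minimizing $\mathcal{E}(h)$ under the calibration constraint is equivalent to maximizing $\E[\gamma h]$ under the mean constraint $\E[h] = \mu$. A Neyman--Pearson style rearrangement then finishes: using $\gamma > q$ on $\{h_q = 1\}$ and $\gamma \leq q$ on $\{h_q = 0\}$,
\[ \E[\gamma(h_q - h)] \;\geq\; q\bigl\{\Pb(h_q = 1, h = 0) - \Pb(h_q = 0, h = 1)\bigr\} \;=\; q\bigl\{\Pb(h_q = 1) - \Pb(h = 1)\bigr\} \;=\; 0, \]
since both classifiers carry the same marginal $\mu$.

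For the characterization of $h_s$, combining strength-calibration and distribution-matching produces the joint identity $\Pb(\bX \leq \bx,\, h = 1) = \Pb(\bX \leq \bx,\, C = 1) = \E[\one\{\bX \leq \bx\}\gamma(\bX)]$ for every $\bx$. Writing the left-hand side as $\E[\one\{\bX \leq \bx\}\E(h \mid \bX)]$ (averaging out any independent randomization inside $h$) and invoking a standard monotone-class argument forces $\E(h \mid \bX) = \gamma(\bX)$ almost surely; since $h \in \{0,1\}$, this pins down the conditional law $h \mid \bX \sim \text{Bernoulli}(\gamma(\bX))$, which is precisely the conditional law of $h_s$ in~\eqref{eq:h2}. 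Finally, for the error of $h_s$, a tower-property calculation using $\E(h_s \mid \bX) = \gamma(\bX)$ inside the Proposition~\ref{errid} formula collapses each term to $\gamma(\bX)\{1 - \gamma(\bX)\}$, giving $\mathcal{E}_s = 2\E[\gamma(\bX)\{1 - \gamma(\bX)\}]$.

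The main obstacle is the uniqueness half of the characterization. It requires a careful treatment of stochastic classifiers $h(\bX, U)$, an unambiguous interpretation of expectations over the auxiliary randomization $U$, and showing that matching the full joint distribution of $(\bX, h)$ with $(\bX, C)$—rather than merely the marginal $\Pb(h=1)$—really forces the conditional law (not just the conditional mean) of $h$ given $\bX$ to be $\text{Bernoulli}(\gamma(\bX))$, so that any such classifier agrees in distribution with $h_s$.
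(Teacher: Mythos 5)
Your proposal is correct and follows essentially the same route as the paper: the error formula for $h_q$ by direct substitution into Proposition~\ref{errid}, optimality via a rearrangement inequality that compares $\E[\gamma h]$ to $\E[\gamma h_q]$ using the threshold $q$ as the separating value (your Neyman--Pearson phrasing is a cleaner packaging of the paper's region-decomposition argument, but it is the same inequality), and the characterization of $h_s$ by showing the two properties force $\E(h \mid \bX) = \gamma(\bX)$ and hence a $\text{Bernoulli}\{\gamma(\bX)\}$ conditional law. Your treatment of the uniqueness step is in fact slightly more careful than the paper's, which asserts $\E(h\mid\bX)=\gamma(\bX)$ without spelling out the joint-distribution and monotone-class argument you give.
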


\noindent We prove Theorem \ref{hsthm} and all other results in the Appendix. Since $\mathcal{E}_q \leq \mathcal{E}_s$ and $\mathcal{E}_s$ equals the asymptotic nearest-neighbor error, we can transport results from the standard classification setting accordingly. The following theorem from \citet{cover1967nearest,devroye1996probabilistic} shows how these errors yield bounds on the optimal error $\mathcal{E}(h_0)$, and indicates how much worse they can be compared to  $\mathcal{E}(h_0)$. 

\begin{proposition}[\citet{cover1967nearest,devroye1996probabilistic}] \label{thmdgl}
Suppose there is a unique $(1-\mu)$-quantile $q$ such that $\Pb( \gamma > q)=\mu$. Then the optimal classification error $\mathcal{E}(h_0)$ is bounded as
$$ \frac{1}{2} \left( 1 - \sqrt{1 - 2 \mathcal{E}_s} \right) \leq \frac{1}{2} \left( 1 - \sqrt{1 - 2 \mathcal{E}_q} \right) \leq \mathcal{E}(h_0) \leq \mathcal{E}_q \leq \mathcal{E}_s . $$
We further have the upper bound $\mathcal{E}_q \leq \mathcal{E}_s \leq \mathcal{E}(h_0) \{ 1- \mathcal{E}(h_0)\} \leq 2 \mathcal{E}(h_0)$. 
\end{proposition}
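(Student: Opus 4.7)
The plan is to assemble the chain of inequalities in the stated order, leveraging Theorem~\ref{hsthm} for the two middle links and then establishing the Cover--Hart-style quadratic bound, which can be inverted to yield the two lower bounds.

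First I would dispatch the middle links $\mathcal{E}(h_0) \leq \mathcal{E}_q \leq \mathcal{E}_s$. The right inequality follows from Theorem~\ref{hsthm}: the stochastic classifier $h_s$ is strength-calibrated (since $\E[h_s(\bX)] = \E[\gamma(\bX)] = \mu$), and Theorem~\ref{hsthm} says $h_q$ minimizes classification error over all strength-calibrated rules, so $\mathcal{E}_q \leq \mathcal{E}_s$. The left inequality is the usual Bayes optimality argument: $h_0$ minimizes the classification error $\mathcal{E}(h) = \E[\gamma(\bX)\{1-h(\bX)\} + \{1-\gamma(\bX)\}h(\bX)]$ pointwise (following the proof sketch already used for the optimality of $h_0$ in Section~\ref{sec:classprop}), so $\mathcal{E}(h_0) \leq \mathcal{E}_q$.

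The main work is in the key Cover--Hart inequality $\mathcal{E}_s \leq 2\mathcal{E}(h_0)\{1-\mathcal{E}(h_0)\}$. By Theorem~\ref{hsthm} we have $\mathcal{E}_s = 2\E[\gamma(\bX)\{1-\gamma(\bX)\}]$, and writing $g(\bx) = \min\{\gamma(\bx), 1-\gamma(\bx)\}$ we note that $\gamma(\bx)\{1-\gamma(\bx)\} = g(\bx)\{1-g(\bx)\}$ (the product is invariant under $\gamma \mapsto 1-\gamma$) and that $\mathcal{E}(h_0) = \E[g(\bX)]$ by the definition of the Bayes classifier. Thus
\begin{equation*}
\mathcal{E}_s = 2\E[g(\bX)] - 2\E[g(\bX)^2] \leq 2\E[g(\bX)] - 2\{\E[g(\bX)]\}^2 = 2\mathcal{E}(h_0)\{1-\mathcal{E}(h_0)\},
\end{equation*}
where the inequality is Jensen. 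Since $\mathcal{E}_q \leq \mathcal{E}_s$ this upper bound also applies to $\mathcal{E}_q$, and $2\mathcal{E}(h_0)\{1-\mathcal{E}(h_0)\} \leq 2\mathcal{E}(h_0)$ trivially gives the final inequality in the displayed chain after the ``further'' claim.

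Finally, I would invert the quadratic to obtain the two lower bounds. Setting $R = \mathcal{E}(h_0)$ and noting $R \leq 1/2$ (since $h_0$ is the Bayes classifier), the map $R \mapsto 2R(1-R)$ is strictly increasing on $[0,1/2]$ with inverse $y \mapsto \tfrac{1}{2}(1 - \sqrt{1-2y})$. Applying this inverse to the inequalities $\mathcal{E}_q \leq 2R(1-R)$ and $\mathcal{E}_s \leq 2R(1-R)$ yields $\tfrac{1}{2}(1-\sqrt{1-2\mathcal{E}_q}) \leq R$ and $\tfrac{1}{2}(1-\sqrt{1-2\mathcal{E}_s}) \leq R$, while monotonicity of $t \mapsto \tfrac{1}{2}(1-\sqrt{1-2t})$ together with $\mathcal{E}_q \leq \mathcal{E}_s$ gives the leftmost ordering. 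The only subtle step is the Jensen bound above; everything else is bookkeeping, and the ``further'' claim drops out for free once the Jensen step is in hand.
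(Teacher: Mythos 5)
Your argument is correct, and it is genuinely more self-contained than what the paper does: the paper derives $\mathcal{E}(h_0)\leq\mathcal{E}_q\leq\mathcal{E}_s$ from Theorem~\ref{hsthm} exactly as you do, but for the quadratic (Cover--Hart) inequality it simply cites Theorem~3.1 of \citet{devroye1996probabilistic} rather than proving it. Your Jensen step --- writing $g=\min\{\gamma,1-\gamma\}$, noting $\gamma(1-\gamma)=g(1-g)$ and $\mathcal{E}(h_0)=\E\{g(\bX)\}$, and bounding $\E(g^2)\geq(\E g)^2$ --- is precisely the classical proof, and the inversion of $R\mapsto 2R(1-R)$ on $[0,1/2]$ is handled correctly. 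One small point worth making explicit: Theorem~\ref{hsthm} is stated for deterministic strength-calibrated rules $h:\mathcal{X}\mapsto\{0,1\}$, whereas $h_s$ is stochastic; the extension is immediate because $\mathcal{E}(h)$ depends on a stochastic rule only through $p(\bx)=\Pb\{h(\bx,U)=1\mid\bX=\bx\}$ linearly, but you should say so since you are invoking the theorem for $h_s$.

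Be aware, however, that what you have proved is the \emph{corrected} form of the statement, not the statement as printed, and you should not paper over the discrepancy. First, your Jensen bound gives $\mathcal{E}_s\leq 2\,\mathcal{E}(h_0)\{1-\mathcal{E}(h_0)\}$; the displayed claim $\mathcal{E}_s\leq\mathcal{E}(h_0)\{1-\mathcal{E}(h_0)\}$ (without the factor $2$) cannot be right, since it would force $\mathcal{E}_s\leq\mathcal{E}(h_0)$ and contradict the first chain --- the factor of $2$ has evidently been dropped in the statement. Second, your final sentence claims that monotonicity of $t\mapsto\tfrac12(1-\sqrt{1-2t})$ together with $\mathcal{E}_q\leq\mathcal{E}_s$ ``gives the leftmost ordering,'' but it gives the \emph{reverse} of the printed ordering: it yields $\tfrac12(1-\sqrt{1-2\mathcal{E}_q})\leq\tfrac12(1-\sqrt{1-2\mathcal{E}_s})\leq\mathcal{E}(h_0)$, so the $\mathcal{E}_s$-based lower bound is the tighter of the two. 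Both quantities are valid lower bounds on $\mathcal{E}(h_0)$, so nothing downstream breaks, but as written your proof asserts an inequality that your own argument contradicts; state the corrected ordering and note the typo instead.
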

\noindent Proposition \ref{thmdgl} follows from our Theorem \ref{hsthm} together with Theorem 3.1 of \citet{devroye1996probabilistic}, and shows that the errors of the stochastic and quantile classifiers can be quite informative about the optimal error $\mathcal{E}(h_0)$ of unconstrained classifiers. For example, if compliance status can be correctly predicted for 75\% of the population with either classifier (e.g., $\mathcal{E}_s= 0.25$) then the optimal classifier can have no better than  86\% accuracy.  Theorem~\ref{hsthm} further indicates that the errors $\mathcal{E}_q$ and $\mathcal{E}_s$ can never be worse than twice that of the best unconstrained classifier, which is particularly informative when $\mathcal{E}_q$ or $\mathcal{E}_s$ are not too large. 

\subsection{Estimation} 

The simplest way to estimate the proposed classification rules is via plug-in estimators. For example, the plug-in estimator of the Bayes decision function $h_0$ is given by
\begin{equation}
\widehat{h}_0(\bx) = \one\{ \widehat\gamma(\bx) > 1/2 \} . \label{eq:excesspi}
\end{equation}
Analogs of this estimator have been studied widely in the classification literature \citep{devroye1996probabilistic, audibert2007fast}. However, the form of the Bayes classifier $h_0$, in our setting, brings some additional complications relative to the standard classification setting, since  $\gamma(\bx) = \lambda_1(\bx)-\lambda_0(\bx)$ is a difference in regression functions. For example, the minimax convergence rate for estimating $\gamma$ can depend  not only on the smoothness of $\gamma$, but also on the smoothness of $\lambda_z$ and $\pi$. This is an open problem and beyond the scope of this paper; nonetheless we can still relate the error of $\widehat{h}_0$ to that of $\widehat\gamma$, as in standard classification problems. Specifically, as in Theorem 2.2 of \citet{devroye1996probabilistic} we have
$$ \mathcal{E}(\widehat{h}) - \mathcal{E}(h_0) \leq 2 \| \widehat\gamma - \gamma \| $$
where here and throughout we let $\| f \|^2 = \Pb(f^2) =  \int f(\bo)^2 \ d\Pb(\bo)$ denote the squared $L_2(\Pb)$ norm (in fact the above also holds replacing the $L_2$ with the $L_1$ norm). This shows that consistent estimation of the compliance score $\gamma$ is enough to yield a consistent plug-in estimator of the rule $h_0$, in terms of classification error. 

A plug-in estimator for the quantile rule $h_q$ is given by
\begin{equation} \label{eq:hqhat}
\widehat{h}_q(\bx) = \one\{ \widehat\gamma(\bx) > \widehat{q} \} 
\end{equation}
where $\widehat{q}$ is an estimate of the $(1-\mu)$ quantile of $\gamma$, i.e. an estimate of $q$ for which 
$\Pb(\gamma \leq q)=1-\mu$. For example one could use $\widehat{q}=\widehat{F}^{-1}(1-\widehat\mu)$, for initial estimators $\widehat{F}$ and $\widehat\mu$  of the distribution function and mean of the compliance score, respectively. In the next subsection we will detail an efficient estimator of $\mu$, which is doubly robust and can attain the minimax root-n convergence rate even if $(\widehat\pi,\widehat\lambda_z)$ converge at slower nonparametric rates. Finally a plug-in estimator of the stochastic classification rule is given by 
\begin{equation} \label{eq:hshat}
\widehat{h}_s(\bx) = \one\{\widehat\gamma(\bx) > U\} 
\end{equation}
for $U \sim \text{Unif}(0,1)$, so that $\widehat{h}_s \sim \text{Bernoulli}(\widehat\gamma)$.  We note that, although natural, the plug-in classifiers described above are not necessarily exactly strength-calibrated or distribution-matched when estimated from a finite-sample.
For the plug-in estimators in \eqref{eq:hqhat} and \eqref{eq:hshat} the next result gives a bound, relating excess classification error to error of the estimated compliance score (and quantile estimation error for $\widehat{h}_q$). For the quantile classifier we require a margin condition \citep{audibert2007fast},
which controls the behavior of $\gamma$ around the threshold $q$. Formally, we have the following condition:
\begin{assumption}[Margin Condition]
For some $\alpha > 0$ and for all $t$ we have that,
\begin{align}
\label{eqn:marginsiva}
\Pb(| \gamma - q| \leq t) \lesssim t^\alpha.
\end{align}
\end{assumption} 
\noindent The margin condition requires that there are not too many compliance scores near the quantile $q$. This is essentially equivalent to the margin condition used in standard classification problems \citep{audibert2007fast}, optimal treatment regime settings \citep{van2014optimal,luedtke2016statistical}, as well as other problems involving estimation of non-smooth covariate-adjusted bounds \citep{kennedy2017survivor}. 

Overall, the following result shows that plug-in classifiers using accurate nuisance estimates have small excess error. 
\begin{theorem} \label{hqesterr}
Let $\widehat{h}_q$ and $\widehat{h}_s$ be the plug-in classifiers defined in \eqref{eq:hqhat} and \eqref{eq:hshat}. Then for $\widehat{h}_s$
$$ | \mathcal{E}(\widehat{h}_s) - \mathcal{E}_s | \leq \left( \sqrt{1-2\mathcal{E}_s} \right) \| \widehat\gamma - \gamma \|.$$
Furthermore, under the margin condition, for $\widehat{h}_q$  we have that,
$$ | \mathcal{E}(\widehat{h}_q) - \mathcal{E}_q | \lesssim \Big( \|\widehat\gamma - \gamma \|_\infty + | \widehat{q}-q| \Big)^\alpha.$$
\end{theorem}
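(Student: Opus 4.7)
The plan is to treat the two claims separately, with most of the work being direct computation via the identifying formula for $\mathcal{E}(h)$ from Proposition~\ref{errid}.

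For the stochastic classifier, the key observation is that conditional on $\bX$, $\widehat h_s \sim \text{Bernoulli}\{\widehat\gamma(\bX)\}$ independent of $C$ (since $U$ is drawn independently). Thus Proposition~\ref{errid} yields
\begin{align*}
\mathcal{E}(\widehat h_s) = \E\big[\gamma(\bX)\{1-\widehat\gamma(\bX)\} + \{1-\gamma(\bX)\}\widehat\gamma(\bX)\big].
\end{align*}
Plugging in the expression $\mathcal{E}_s = 2\E[\gamma(\bX) - \gamma(\bX)^2]$ from Theorem~\ref{hsthm} and simplifying gives the clean identity $\mathcal{E}(\widehat h_s) - \mathcal{E}_s = \E[\{1-2\gamma(\bX)\}\{\widehat\gamma(\bX)-\gamma(\bX)\}]$. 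Applying Cauchy--Schwarz then yields $|\mathcal{E}(\widehat h_s)-\mathcal{E}_s| \leq \|1-2\gamma\|\,\|\widehat\gamma-\gamma\|$. The final step is the algebraic identity $\|1-2\gamma\|^2 = 1 - 4\E[\gamma(\bX)] + 4\E[\gamma(\bX)^2] = 1 - 2\mathcal{E}_s$, which delivers the stated bound with the correct constant.

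For the quantile classifier, the same proposition gives the analogous identity $\mathcal{E}(\widehat h_q) - \mathcal{E}_q = \E[\{1-2\gamma(\bX)\}\{\widehat h_q(\bX)-h_q(\bX)\}]$. Using $|1-2\gamma|\leq 1$,
\begin{align*}
|\mathcal{E}(\widehat h_q) - \mathcal{E}_q| \leq \Pb\{\widehat h_q(\bX) \neq h_q(\bX)\}.
\end{align*}
The next (and only genuinely combinatorial) step is to argue that whenever $\widehat h_q(\bx) \neq h_q(\bx)$, necessarily $|\gamma(\bx) - q| \leq \|\widehat\gamma-\gamma\|_\infty + |\widehat q - q| =: \delta$. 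This follows from a direct triangle-inequality argument: if, say, $\gamma(\bx) > q$ while $\widehat\gamma(\bx) \leq \widehat q$, then $\gamma(\bx) - q = \{\gamma(\bx)-\widehat\gamma(\bx)\} + \{\widehat\gamma(\bx) - \widehat q\} + \{\widehat q - q\} \leq \delta$, and symmetrically for the reverse case. Applying the margin condition~\eqref{eqn:marginsiva} with $t = \delta$ yields $\Pb(|\gamma(\bX)-q|\leq \delta) \lesssim \delta^\alpha$, which is the desired bound.

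Neither piece involves a substantive obstacle: the whole argument reduces to the identity from Proposition~\ref{errid}, Cauchy--Schwarz, and the standard plug-in/margin-condition trick. The most delicate point is really the first bound's constant, $\sqrt{1-2\mathcal{E}_s}$, which one has to recognize as arising from $\|1-2\gamma\|_{L_2(\Pb)}$ via the same identity used in Theorem~\ref{hsthm}; this is what makes the stochastic-classifier bound tighter than a naive application of $|1-2\gamma|\leq 1$ would give.
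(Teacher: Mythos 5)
Your proof is correct and follows essentially the same route as the paper's: for $\widehat h_s$ the identity $\mathcal{E}(\widehat h_s)-\mathcal{E}_s=\E\{(1-2\gamma)(\widehat\gamma-\gamma)\}$ plus Cauchy--Schwarz and $\|1-2\gamma\|^2=1-2\mathcal{E}_s$ is exactly the paper's argument, and for $\widehat h_q$ your inline triangle-inequality claim that $\widehat h_q\neq h_q$ forces $|\gamma-q|\le\|\widehat\gamma-\gamma\|_\infty+|\widehat q-q|$ is precisely the paper's Lemma~\ref{ineqlem} applied to $f=\gamma-q$, followed by the margin condition. The only cosmetic difference is that you bound $|1-2\gamma|\le 1$ immediately while the paper carries $|1-2\gamma|\le 2(|\gamma-q|+|q-1/2|)$ through; both yield the same $\delta^\alpha$ rate.
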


\begin{remark}
From a theoretical standpoint, we might consider if the margin assumption may be eliminated in the analysis of the plug-in quantile classifier. In Appendix~\ref{app:nomargin}, we show that if we can obtain reasonable bounds on the errors 
$ \| \widehat\gamma - \gamma \|_\infty$ and $|\widehat{q}-q|$, a slight modification of the plug-in quantile classifier in~\eqref{eq:hqhat} achieves a similar guarantee without the margin assumption.
\end{remark}
 The next result shows a further unique property of $\widehat{h}_s$, which is that it can be used to  estimate complier characteristics of the form $\theta=\E\{ f(\bX) \mid C=1\}$, by simply computing corresponding averages in the group of predicted compliers with $\widehat{h}_s=1$. For example one might be interested in, for a given variable $X_j$, the complier-specific mean $f(\bX)=X_j$ or distribution function $f(\bX)=\one(X_j \leq t)$. The proposed estimator is then given by
\begin{equation} \label{eq:thetahat}
\widehat\theta = \Pn\{ f(\bX) \mid \widehat{h}_s(\bX)=1 \} = \frac{ \Pn\{ f(\bX) \widehat{h}_s(\bX) \} } { \Pn\{ \widehat{h}_s(\bX) \} } 
\end{equation}
where $\Pn$ denotes the empirical measure so that sample averages can be written as $\Pn\{ f(\bO)\} = \frac{1}{n} \sum_{i=1}^n f(\bO_i)$. For simplicity we suppose $\widehat\gamma$ is fit in a separate independent sample; this will be discussed in more detail after stating the result. 

\begin{theorem} \label{thetahat}
Assume that $f$ is bounded,  
then for the estimator $\widehat\theta$ defined in \eqref{eq:thetahat} we have that
$$ |\widehat\theta - \theta| = O_\Pb\left( \frac{1}{\sqrt{n}} + \| \widehat\gamma - \gamma \| \right), $$
whenever $\widehat\gamma$ is constructed from a separate independent sample. 
\end{theorem}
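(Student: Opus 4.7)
\textbf{Proof plan for Theorem \ref{thetahat}.}

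The plan is to reduce the claim to standard bias/variance bounds on the numerator $\widehat{N}=\Pn\{f(\bX)\widehat{h}_s(\bX)\}$ and denominator $\widehat{D}=\Pn\{\widehat{h}_s(\bX)\}$ of $\widehat\theta$, exploiting the key identity that the stochastic classifier $h_s(\bx)\sim\text{Bernoulli}\{\gamma(\bx)\}$ satisfies $\E\{g(\bX)h_s(\bX)\}=\E\{g(\bX)\gamma(\bX)\}$ for any measurable $g$, so that the target decomposes as
\[
\theta = \frac{\E\{f(\bX)\gamma(\bX)\}}{\E\{\gamma(\bX)\}} \equiv \frac{N}{D},
\]
with $D=\mu\geq\epsilon>0$ by Assumption 2 (positivity for the IV combined with Assumption 5).

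First I would condition on the auxiliary sample used to fit $\widehat\gamma$ and treat $\widehat\gamma$ as a fixed function. Writing $\widehat{h}_s(\bX_i)=\one\{\widehat\gamma(\bX_i)>U_i\}$ with independent $U_i\sim\text{Unif}(0,1)$, the summands defining $\widehat{N}$ are i.i.d.\ conditionally on $\widehat\gamma$ with conditional mean
\[
\E[f(\bX_i)\one\{\widehat\gamma(\bX_i)>U_i\}\mid\widehat\gamma]=\int f(\bx)\widehat\gamma(\bx)\,d\Pb(\bx),
\]
(integrating $U_i$ out first) and uniformly bounded by $\|f\|_\infty$. Therefore Chebyshev gives the stochastic-error bound $\widehat{N}-\int f\widehat\gamma\,d\Pb=O_\Pb(1/\sqrt n)$, and Cauchy--Schwarz yields the bias bound
\[
\left|\int f\widehat\gamma\,d\Pb-\int f\gamma\,d\Pb\right|=|\E\{f(\bX)(\widehat\gamma-\gamma)(\bX)\}|\leq\|f\|_\infty\|\widehat\gamma-\gamma\|.
\]
Combining yields $|\widehat{N}-N|=O_\Pb(1/\sqrt n+\|\widehat\gamma-\gamma\|)$. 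Taking $f\equiv 1$ in the same argument gives the analogous bound $|\widehat{D}-D|=O_\Pb(1/\sqrt n+\|\widehat\gamma-\gamma\|)$.

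To pass from numerator/denominator bounds to a ratio bound I would use the identity
\[
\widehat\theta-\theta=\frac{\widehat{N}-N}{\widehat{D}}-\frac{N(\widehat{D}-D)}{\widehat{D}\,D},
\]
noting that $D=\mu\geq\epsilon$ by the positivity/strong-monotonicity assumption, and that $\widehat{D}\geq D/2$ with probability tending to one by the previous step. Since $|N|\leq\|f\|_\infty$ as well, both terms are $O_\Pb(1/\sqrt n+\|\widehat\gamma-\gamma\|)$, yielding the claimed rate. The main (and essentially only) subtlety is to keep careful track of the conditional independence given $\widehat\gamma$ (both across the $U_i$ and between the $U_i$ and the observed data $(\bX_i,Z_i,A_i,Y_i)$), so that the variance and bias are clean conditional quantities; the rest is bounded-integrand concentration plus Cauchy--Schwarz, and I do not anticipate any deeper obstacle.
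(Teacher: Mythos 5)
Your proposal is correct and follows essentially the same route as the paper: the key identity $\E\{g(\bX)h_s(\bX)\}=\E\{g(\bX)\gamma(\bX)\}$, a conditional Chebyshev bound for the empirical-process terms (the paper's Lemma~\ref{splitlem}), Cauchy--Schwarz for the bias terms, and the lower bound $\mu\geq\epsilon$ from strong monotonicity. The only cosmetic difference is the algebraic arrangement of the ratio decomposition (the paper factors out $1/\Pb(\gamma)$ and uses $|\widehat\theta|\leq\|f\|_\infty$ rather than requiring $\widehat{D}\geq D/2$ with high probability), which does not change the argument.
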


Theorem \ref{thetahat} shows that $\widehat\theta$ is consistent as long as $\widehat\gamma$ is, and that the convergence rate is of the same order as a typical plug-in estimator. This gives an alternative to the weighting approach of \citet{abadie2003semiparametric}. Our approach only requires computing usual statistics among predicted compliers. In general, however, this approach will not be fully efficient, for two reasons. The first is that $\widehat\theta$ is a plug-in estimator, not specially targeted to estimate $\theta$ well (partly evidenced by the first-order bias term $\| \widehat\gamma-\gamma \|$ in its convergence rate). We conjecture that $\widehat\theta$ might be able to attain full nonparametric efficiency under strong smoothness assumptions and for particular $\widehat\gamma$ estimators (e.g., kernel regression with undersmoothing). However, a more flexible approach would be to estimate $\theta$ with an appropriate doubly robust influence function-based estimator. The other reason the estimator $\widehat\theta$ is not fully efficient is because it uses only a single sample split, however this can be remedied by swapping samples and averaging; we formally include this approach in our subsequent proposed estimators of effect bounds and sharpness. Despite disadvantages with respect to efficiency, the proposed plug-in estimator of $\theta$ might be favored in some settings for its simplicity.

\section{Bounding Effects in Identifiable Subgroups}
\label{sec:bounds}

In this section we consider the second feature of so-called sharp instruments: obtaining tighter bounds on effects in identifiable subgroups, i.e., subgroups defined not by principal strata (e.g., compliers) but by observed covariates. In the toy example from Section~\ref{sec:intro} we saw a case where the local effect actually reduced to such a subgroup effect (among those with $X>0.5244\ldots$). This raises the question of when this can occur, and if it cannot, how to quantify the extent to which it can nearly occur. We derive bounds on effects in any identifiable subgroup and derive the corresponding bound length, and characterize the optimal subgroup that minimizes bound length, among all subgroups of a given size. Finally, we propose efficient nonparametric bound estimators, and describe their asymptotic properties.

\subsection{Bounds \& Bound Length}

Define the treatment effect in an identifiable subgroup $\{ \bx : g(\bx)=1\}$ corresponding to an arbitrary measurable subgroup indicator $g: \mathcal{X} \mapsto \{0,1\}$ as
$$ \beta(g) = \E(Y^{a=1} - Y^{a=0} \mid g=1). $$
Our first result gives bounds on this effect under the instrumental variable assumptions, for any given $g$. Before stating our result, let us first introduce some notation. Define,
\begin{align}
\beta_j(g) &= \E \Big\{ \E(V_{j,1} \mid \bX,Z=1) - \E(V_{j,0} \mid \bX,Z=0)  \Bigm| g=1 \Big\} 
\end{align}
for $j \in \{l,u\}$ where
\begin{align}
\label{eqn:siva}
V_{u,1} &= YA + 1-A , \ \ \
& V_{u,0} &= Y(1-A) , \\
V_{l,1} &= YA , 
& V_{l,0} &= Y(1-A) + A. 
\end{align}
With these definitions in place we have the following result.
\begin{theorem} \label{thm:bounds}
Under Assumptions 1--5, and if $\Pb(Y \in [0,1])=1$, the effect $\beta(g)$ in the identifiable subgroup defined by $g: \mathcal{X} \mapsto \{0,1\}$ is bounded as
\begin{align*}
\beta_l(g) \leq \beta(g) \leq \beta_u(g).
\end{align*}
\end{theorem}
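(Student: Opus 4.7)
The plan is to reduce the theorem to a conditional Manski-style bound on $\tau(\bX) \equiv \E(Y^{a=1} - Y^{a=0} \mid \bX)$ and then average over the subgroup $\{g=1\}$. Since $g$ is a measurable function of $\bX$, iterated expectation gives $\beta(g) = \E\{\tau(\bX) \mid g(\bX)=1\}$, and any almost-sure pointwise inequality on $\tau(\bX)$ is preserved under this conditional average. So it suffices to show, almost surely in $\bX$,
\begin{align*}
\E(V_{l,1} \mid \bX, Z=1) - \E(V_{l,0} \mid \bX, Z=0) \leq \tau(\bX) \leq \E(V_{u,1} \mid \bX, Z=1) - \E(V_{u,0} \mid \bX, Z=0).
\end{align*}

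For the pointwise bounds I would first use unconfoundedness and the exclusion restriction (Assumptions 3 and 4) to obtain $\E(Y^{a=1} \mid \bX) = \E(Y^{a=1} \mid \bX, Z=1)$ and analogously $\E(Y^{a=0} \mid \bX) = \E(Y^{a=0} \mid \bX, Z=0)$. The key step is the decomposition by $A$:
\begin{align*}
\E(Y^{a=1} \mid \bX, Z=1) = \E(Y^{a=1} A \mid \bX, Z=1) + \E\{Y^{a=1}(1-A) \mid \bX, Z=1\}.
\end{align*}
By consistency (Assumption 1) the first term equals the observable $\E(YA \mid \bX, Z=1)$, while the second term is unidentified but satisfies $0 \leq \E\{Y^{a=1}(1-A) \mid \bX, Z=1\} \leq \E(1-A \mid \bX, Z=1)$ because $Y \in [0,1]$. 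Plugging in the two endpoints yields $\E(V_{l,1} \mid \bX, Z=1) \leq \E(Y^{a=1} \mid \bX) \leq \E(V_{u,1} \mid \bX, Z=1)$, with $V_{l,1} = YA$ and $V_{u,1} = YA + 1 - A$. The mirror-image decomposition at $Z=0$ (where consistency identifies the $A=0$ cell and $Y^{a=0} A \in [0,A]$ is bounded) yields $\E(V_{u,0} \mid \bX, Z=0) \leq \E(Y^{a=0} \mid \bX) \leq \E(V_{l,0} \mid \bX, Z=0)$. Subtracting the two envelopes delivers the pointwise sandwich above.

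I do not foresee a genuine obstacle: this is the conditional analogue of the classical Manski no-assumptions bound, layered onto the standard IV identification of $\E(Y^a \mid \bX)$. Positivity (Assumption 2) is needed only to ensure that the conditional expectations appearing in $\beta_l(g)$ and $\beta_u(g)$ are well defined on a set of full probability, and the final averaging over $\bX \mid g=1$ is routine. One point worth flagging in the writeup is that strong monotonicity (Assumption 5) is not actually invoked in the above argument; under monotonicity one could in principle combine information from both $Z=0$ and $Z=1$ to sharpen the bounds further (e.g., always-takers give $Y^{a=1}$ evidence at $Z=0$), but the theorem as stated uses only $Z=1$ for $\E(Y^{a=1} \mid \bX)$ and only $Z=0$ for $\E(Y^{a=0} \mid \bX)$ and does not attempt this refinement.
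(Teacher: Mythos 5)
Your proof is correct, and it takes a genuinely different route from the paper's. The paper decomposes $\beta(g)$ by principal strata, writing $1 = (A^{z=1}-A^{z=0}) + (1-A^{z=1}) + A^{z=0}$ and invoking strong monotonicity to interpret the three pieces as the complier, never-taker, and always-taker contributions; the complier piece is then point-identified by the standard IV argument, and only the terms $\E(Y^{a=1} \mid A^{z=1}=0, g=1)$ and $\E(Y^{a=0} \mid A^{z=0}=1, g=1)$ are replaced by their extremes $0$ and $1$. You instead bound each arm's mean potential outcome separately via the Manski decomposition $\E(Y^{a=1}\mid\bX) = \E(Y^{a=1}A^{z=1}\mid\bX) + \E\{Y^{a=1}(1-A^{z=1})\mid\bX\}$, identify the first summand as $\E(YA\mid\bX,Z=1)$, trap the second in $[0,\E(1-A\mid\bX,Z=1)]$, and mirror this at $Z=0$; subtracting and averaging over $\{g=1\}$ recovers exactly $\beta_l(g)$ and $\beta_u(g)$. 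Your observation that monotonicity is never used is correct and worth having: the paper's bounds are the conditional no-assumptions bounds and remain valid without Assumption 5, whereas the paper's derivation makes the connection to LATE identification explicit (which is what underlies Corollary 1's interpretation of the bound length as $\Pb(C=0\mid g=1)$). One small point to tighten in a writeup: the step $\E(Y^{a=1}\mid\bX)=\E(Y^{a=1}\mid\bX,Z=1)$ does not follow literally from Assumption 3 as stated, since $Z \ind (A^z,Y^z)\mid\bX$ with $Y^z = Y^{zA^z}$ does not by itself imply $Z\ind Y^{a=1}\mid\bX$ (e.g., for never-takers $Y^{a=1}$ is not a function of the listed potential outcomes); it is cleaner to decompose $\E(Y^{a=1}\mid\bX)$ directly and use only $Z \ind (A^{z=1}, Y^{z=1}A^{z=1})\mid\bX$ together with consistency, which yields the same identified and bounded pieces.
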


Theorem \ref{thm:bounds} generalizes the results of \citet{robins1989analysis, manski1990nonparametric, balke1997bounds} to allow for covariate adjustment and conditional effects; these previous bounds are recovered by taking $\bX=\emptyset$ and $g=1$ with probability one. The logic used in the proof of Theorem \ref{thm:bounds} follows that of this earlier work. Specifically, as shown above, Assumptions 1--5 allow one to express $\beta(g)$ in terms of observed data quantities and two non-identified terms of the form $\E(Y^{a=t} \mid A^{z=t}=1-t, g=1)$ for $t \in \{0,1\}$; bounds are obtained by replacing these latter quantities with their most extreme values of 0 and 1. Note the condition that $Y \in [0,1]$ is immaterial as long as $Y$ is bounded in some finite range $[y_{\min},y_{\max}]$, since then one can work with $Y^*=(Y-y_{\min})/(y_{\max}-y_{\min}) \in [0,1]$ and transform back.

An important consequence of Theorem \ref{thm:bounds} for our work is in the length of the corresponding bounds, which  provides a basis for quantifying near-identification of effects $\beta(g)$ in identifiable subgroups. This length is given in the following corollary.

\begin{corollary} \label{cor:bndlen}
The length of the bounds in Theorem \ref{thm:bounds}, for any subgroup $h$, is 
$$ \ell(g) \equiv \beta_u(g) - \beta_l(g) = \E\{ 1- \gamma(\bX) \mid g=1 \} . $$
\end{corollary}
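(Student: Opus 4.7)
The plan is to simply compute $\beta_u(g) - \beta_l(g)$ directly from the definitions, since conditional expectation is linear and so we can combine the inner $V_{u,j}$ and $V_{l,j}$ terms before taking any averages. The whole proof should reduce to an algebraic cancellation on the potential-outcome/treatment side, together with recognizing the conditional expectation of $A$ given $(\bX,Z=z)$ as $\lambda_z(\bX)$.

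First I would write
\begin{align*}
\beta_u(g) - \beta_l(g) = \E\Big\{ \E(V_{u,1} - V_{l,1} \mid \bX, Z=1) - \E(V_{u,0} - V_{l,0} \mid \bX, Z=0) \Bigm| g=1 \Big\}
\end{align*}
using linearity of conditional expectation and of outer expectation (the subgroup $\{g=1\}$ is measurable with respect to $\bX$ so conditioning on $g=1$ also passes through without issue). Next I would compute the two pointwise differences, which are trivial:
\begin{align*}
V_{u,1} - V_{l,1} &= (YA + 1 - A) - YA = 1 - A , \\
V_{u,0} - V_{l,0} &= Y(1-A) - \big(Y(1-A) + A\big) = -A.
\end{align*}

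Substituting back in and using $\E(A \mid \bX, Z=z) = \lambda_z(\bX)$ together with the definition $\gamma(\bX) = \lambda_1(\bX) - \lambda_0(\bX)$ gives
\begin{align*}
\beta_u(g) - \beta_l(g) &= \E\Big\{ (1 - \lambda_1(\bX)) - (-\lambda_0(\bX)) \Bigm| g=1 \Big\} \\
&= \E\{ 1 - \gamma(\bX) \mid g=1 \},
\end{align*}
which is the claim. There isn't really a main obstacle here: the construction of $V_{u,j}$ and $V_{l,j}$ in \eqref{eqn:siva} was precisely engineered so that the only terms that survive the upper-minus-lower subtraction are the constants and the treatment indicator $A$, making the resulting bound-length expression depend only on the compliance score $\gamma$. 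The one small thing to note is that $g(\bX)$ is $\bX$-measurable, so the iterated expectation can be rearranged freely; no further identification assumptions beyond those already used in Theorem~\ref{thm:bounds} are needed for this algebraic step.
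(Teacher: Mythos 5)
Your proof is correct and matches the paper's (implicit) argument: the corollary follows by directly subtracting the explicit expressions for $\beta_u(g)$ and $\beta_l(g)$, with the $YA$ and $Y(1-A)$ terms cancelling so that only $1-\lambda_1(\bX)+\lambda_0(\bX) = 1-\gamma(\bX)$ survives. The computation and the measurability remark about $g(\bX)$ are both fine.
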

\noindent Importantly, under Assumptions 1--5 we have 
$$ \ell(g) =\Pb(C=0 \mid g=1) , $$ 
so the bound length is also interpretable as the proportion of non-compliers in the subgroup $\{ \bx : g(\bx)=1\}$. This fact was noted previously for marginal effects (i.e., when $g=1$ with probability one) by \citet{balke1997bounds}, for example. 
It implies that the bounds on the subgroup effect $\beta(g)$ are strictly narrower than those on the average effect $\E(Y^{a=1}-Y^{a=0})$ whenever $\Pb(C=0 \mid g=1)<\Pb(C=0)$, i.e., whenever the proportion of non-compliers in the subgroup is less than the proportion overall.  
In Section~\ref{sec:sharp} we frame this condition in a  different way that shows how it is intimately related to our proposed notion of sharpness. 

Corollary \ref{cor:bndlen} further suggests exploring subgroups that minimize bound length. Among all possible subgroups, the one minimizing bound length is simply that which picks the subject(s) with the maximum compliance score, i.e., $\argmin_h \ell(g) = \one(\gamma=\gamma_{\max})$ for $\gamma_{\max} = \sup_{\bx \in \mathcal{X}} \gamma(\bx)$. However, in general this subgroup will have negligible size (unless there is a non-trivial point mass at $\gamma_{\max}$), leading to estimates with necessarily high finite-sample error. This is similar to the potential disadvantages of the optimal classification rule $h_0$ discussed in Section \ref{sec:classprop}. Therefore, as discussed there, it may be preferable to only consider subgroups of a particular minimum size. 
We let 
\begin{align*}
\mathcal{G}(t)=\{ g : \Pb(g=1)=t\}
\end{align*} 
denote the set of all subgroups of a given size $t$, and we assume there exists a unique quantile $\xi$ such that $\Pb(\gamma > \xi)=t$. 
The following result gives the form of optimal subgroups of a given size. 
\begin{proposition} \label{prop:hopt}
Let $F(t) = \Pb(\gamma \leq t)$ denote the distribution function of the compliance score, then 
the subgroup that minimizes bound length among all those of size at least $t$ is given by
$$ \argmin_{g \in \mathcal{G}(t)} \ \ell(g) = \one\{ \gamma(\bX) > F^{-1}(1-t) \}.$$
\end{proposition}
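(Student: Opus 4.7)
The plan is to reduce bound-length minimization to a Neyman--Pearson style selection problem. By Corollary~\ref{cor:bndlen}, we have
\[
\ell(g) = \E\{1-\gamma(\bX) \mid g=1\} = 1 - \frac{\E[\gamma(\bX) g(\bX)]}{\Pb(g=1)},
\]
so that on the class $\mathcal{G}(t) = \{g : \Pb(g=1)=t\}$, minimizing $\ell(g)$ is equivalent to maximizing the unconditional expectation $\E[\gamma(\bX) g(\bX)]$ over binary functions $g$ subject to $\E[g(\bX)] = t$. This is the formulation I would work with.

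Next I would invoke a standard pointwise comparison. Let $\xi = F^{-1}(1-t)$ and $g^*(\bx) = \one\{\gamma(\bx) > \xi\}$; by the unique-quantile assumption, $\Pb(g^*=1) = t$, so $g^* \in \mathcal{G}(t)$. For any competitor $g \in \mathcal{G}(t)$, I would split the expectation according to whether $\gamma(\bX)$ exceeds the threshold:
\[
\E[\gamma(\bX)\{g^*(\bX) - g(\bX)\}] = \E[\gamma(g^* - g)\one\{\gamma > \xi\}] + \E[\gamma(g^* - g)\one\{\gamma \leq \xi\}].
\]
On $\{\gamma > \xi\}$ we have $g^* = 1 \geq g$, so $g^* - g \in \{0,1\}$ and $\gamma(g^* - g) \geq \xi(g^* - g)$. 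On $\{\gamma \leq \xi\}$ we have $g^* = 0 \leq g$, so $g^* - g \in \{-1, 0\}$ and $\gamma(g^* - g) \geq \xi(g^* - g)$ as well (multiplying a non-positive quantity by a smaller non-negative number). Combining the two,
\[
\E[\gamma(g^* - g)] \geq \xi \, \E[g^* - g] = \xi(t - t) = 0,
\]
so $g^*$ maximizes $\E[\gamma g]$ over $\mathcal{G}(t)$ and therefore minimizes $\ell(g)$, which is exactly the claimed representation.

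The argument is essentially the Neyman--Pearson lemma applied to the function $\gamma$ with ``level'' constraint $t$, so I do not expect any serious obstacle. The only subtle point is the role of the unique-quantile assumption, which is needed to ensure that $\Pb(g^*=1)$ equals $t$ exactly rather than only approximately; without it one would need either to allow randomization between adjacent level sets of $\gamma$ or to weaken the constraint to $\Pb(g=1) \geq t$, in which case the same argument goes through using any $\xi$ with $\Pb(\gamma > \xi) \leq t \leq \Pb(\gamma \geq \xi)$ together with a tie-breaking adjustment on $\{\gamma = \xi\}$.
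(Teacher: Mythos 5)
Your proof is correct, but it takes a different route from the paper. The paper proves this proposition by first writing $\ell(g) = 1 - \E(\gamma g)/\E(g) = 1 + \{\mathcal{E}(g) - \E(\gamma+g)\}/\{2\E(g)\}$, observing that at fixed $\E(g)=t$ minimizing bound length is equivalent to minimizing classification error, and then invoking the optimality of the quantile classifier already established in Theorem~\ref{hsthm} (which was itself proved by a mass-moving argument over the regions $R_0=\{\gamma : h_q=0\}$ and $R_1=\{\gamma : h_q=1\}$ bounded via $R_{0,\max}$ and $R_{1,\min}$). You instead give a self-contained Neyman--Pearson argument: reduce to maximizing $\E(\gamma g)$ at fixed size, then use the pointwise inequality $\gamma(g^*-g)\geq \xi(g^*-g)$ on both $\{\gamma>\xi\}$ and $\{\gamma\leq\xi\}$ and integrate to get $\E\{\gamma(g^*-g)\}\geq \xi\,\E(g^*-g)=0$. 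Your version is cleaner and avoids the detour through Theorem~\ref{hsthm}; the paper's version is more economical given that the earlier theorem is already in hand, and it makes explicit the equivalence between minimizing classification error and minimizing bound length at a fixed subgroup size, a point the text emphasizes immediately after the proposition. Your closing remark about the unique-quantile assumption (needed so that $\Pb(g^*=1)=t$ exactly, with randomization or a relaxed constraint $\Pb(g=1)\geq t$ as the fallback) correctly identifies the same caveat the paper handles by assuming a unique quantile $\xi$ with $\Pb(\gamma>\xi)=t$ just before stating the result.
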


\noindent Proposition \ref{prop:hopt} shows that, among all subgroups of size $t$, the subgroup that yields the tightest bounds is simply the group with the $100t\%$ highest compliance scores. This is perhaps expected given the form of the bounds from Corollary \ref{cor:bndlen}, and their interpretation as a proportion of non-compliers. Note also that, once we restrict to subgroups of a given size $t$, the minimizers of complier classification error and bound length are the same, i.e.,
$$  \argmin_{g \in \mathcal{G}(t)} \ \mathcal{E}(g) = \argmin_{g \in \mathcal{G}(t)} \ \ell(g) = \one\{ \gamma(\bX) > F^{-1}(1-t) \} . $$
Therefore, for subgroups of a given size, the problems of finding the classifier with best error and the subgroup with tightest bounds are equivalent, both leading to a version of the quantile classifier $h_q$ from Section \ref{sec:classprop}.

This suggests targeting novel subgroup effects of the form
$$ \E\{Y^{a=1} - Y^{a=0} \mid \gamma(\bX) > F^{-1}(1-t) \} . $$
These effects are similar in spirit to those proposed by \citet{follmann2000effect, joffe2003compliance}, which are also conditional on the compliance score, but these prior works 
use parametric models and do not use quantiles. Thus, our proposed effects can be viewed as a nonparametric generalization.

\subsection{Estimation \& Inference}
\label{sec:infl}

Now we turn to estimation and inference for bounds on $\beta(g)$. We focus in particular on $\beta(h_q)$, i.e., the effect among the $100\mu\%$ of the population with the highest compliance scores. Our bound estimators (and sharpness estimators presented in the next section) are built from the efficient influence function, and use sample splitting. These tools are used to combat bias from nonparametric estimation of nuisance functions (e.g., the compliance score $\gamma$), and to allow arbitrary complex and flexible nuisance estimators to be used.

Influence functions are a central element of nonparametric efficiency theory. We refer to \citet{bickel1993efficient}, \citet{van2002semiparametric}, \citet{van2003unified}, \citet{tsiatis2006semiparametric} and others for more detailed information, and so just give some brief description here. The efficient influence function is important because its variance yields a benchmark for nonparametric efficiency, and because it can be used to construct estimators that are in some cases minimax optimal and efficient in nonparametric models. Such estimators are typically doubly robust or have general second-order bias, and so can attain parametric rates of convergence, even in high-dimensional settings where nuisance functions are estimated at slower rates via flexible nonparametric methods. Mathematically, the efficient influence function corresponds to the score function in a one-dimensional submodel that is least favorable, in the sense of having minimal Fisher information for the parameter of interest, across all submodels. We refer to the earlier references for more details. 

To simplify notation in this section, for any random variable $T$ we let $$ \varphi_z(T; \boldsymbol\eta) =  \frac{\one(Z=z)}{\pi_z(\bX)} \Big\{ T - \E(T \mid \bX, Z = z) \Big\} + \E(T \mid \bX, Z=z) $$ denote the uncentered efficient influence function for the parameter $\E\{ \E(T \mid \bX, Z=z)\}$, where $\boldsymbol\eta=\{\pi_z(\bX),\E(T \mid \bX,Z=z)\}$ denotes the relevant nuisance functions. We use $\boldsymbol\eta$ for nuisance functions generally, though the actual functions depend on the choice of $T$. In particular we let 
\begin{align*}
\nu_{j,z}(\bX)=\E(V_{j,1} \mid \bX,Z=1)-\E(V_{j,0} \mid \bX,Z=0),
\end{align*} 
and let $\widehat\nu_{j,z}$ denote an estimate of $\nu_{j,z}$, for variables $V_{j,z}$ defined as in~\eqref{eqn:siva}.

Following \citet{robins2008higher, zheng2010asymptotic, chernozhukov2016double}, we propose to use sample splitting to allow for arbitrarily complex nuisance estimators $\boldsymbol{\widehat\eta}$ and avoid empirical process conditions, by constructing the estimated $\boldsymbol\eta$ values for each subject using data from only other subjects. Specifically we split the data into $K$ disjoint groups by drawing variables $(B_1,\ldots,B_n)$ independent of the data, with $B_i=b$ indicating that subject $i$ was split into group $b$. For example each $B_i$ could be drawn uniformly from $\{1,\ldots,K\}$, or to ensure equally sized groups $(B_1,\ldots,B_n)$ could be drawn uniformly from the set of permutations of sequences containing $n/K$ repetitions of each value of $b \in \{1,\ldots,K\}$. In our analysis we focus on the former setting, and treat $K$ as a fixed constant.  We first estimate the strength of the instrument by the weighted average of corresponding estimators across groups
\begin{align*} 
\widehat\mu =  \sum_{b=1}^K \left\{ \frac{1}{n} \sum_{i=1}^n \one(B_i=b) \right\} \Pn^b\{ \phi_\mu(\bO; \boldsymbol{\widehat\eta}_{\text{-}b}) \} = \Pn\{ \phi_\mu(\bO; \boldsymbol{\widehat\eta}_{\text{-}B}) \}
\end{align*}
where $\Pn^b$ denotes the sub-empirical distribution over the units $\{ i: B_i=b\}$ in group $b$, i.e., $\Pn^b\{ f(\bO)\} = \sum_{i=1}^n f(\bO_i) \one(B_i=b) / \sum_{i=1}^n \one(B_i=b)$, the function 
\begin{align*}
\phi_\mu(\bO;\boldsymbol\eta)=\varphi_1(A;\boldsymbol\eta)-\varphi_0(A;\boldsymbol\eta),
\end{align*} is the uncentered influence function for $\mu=\E(\gamma)$, and here $\boldsymbol{\widehat\eta}_{\text{-}b}$ denotes estimators of $\boldsymbol{\eta}=(\pi_z,\lambda_z)$ constructed using only those units with $B_i \neq b$. 
Then for $j \in \{l,u\}$ we propose estimating $\beta_j(h_q)$ with  $\widehat\beta_j(\widehat{h}_q)$, where
$$ \widehat\beta_j(\widehat{h}_q) = \Pn\Big[ \Big\{ \varphi_1(V_{j,1};\boldsymbol{\widehat\eta}_{\text{-}B}) - \varphi_0(V_{j,0}; \boldsymbol{\widehat\eta}_{\text{-}B}) \Big\} \widehat{h}_{q,\text{-}B}(\bX) \Big] / \Pn\{ \widehat{h}_{q,\text{-}B}(\bX)\} $$
for  $\widehat{h}_{q,\text{-}b}=\one(\widehat\gamma_{\text{-}b} > \widehat{q}_{\text{-}b})$ and $\widehat{q}_{\text{-}b}$ the $(1-\widehat\mu)$ quantile of $\widehat\gamma$ solving $\Pn^b\{ \one( \widehat\gamma_{\text{-}b} > \widehat{q}_{\text{-}b})\} =\widehat\mu$ (at least up to $o_\Pb(1/\sqrt{n})$ error).

Before stating our next result, we define the remainder terms that appear in our result:
\begin{align}
R_{1,n} &=\| \widehat\pi_1 - \pi_1 \| \left(  \max_z  \| \widehat\lambda_z - \lambda_z \|  +  \max_z \| \widehat\nu_{j,z} - \nu_{j,z} \| \right) \\
R_{2,n} &=\left( \| \widehat\gamma - \gamma \|_\infty  + |\widehat{q}-q| \right)^\alpha,\label{eqn:remone}
\end{align}
where $\alpha > 0$ is the margin exponent in~\eqref{eqn:marginsiva}.
The next theorem gives the rate of convergence for our proposed estimator, as well as nonparametric conditions under which it is asymptotically normal and efficient. 

\begin{theorem} \label{betahat}
Assume that $\Pb\{\epsilon \leq \widehat\pi_z(\bX) \leq 1-\epsilon\}=1$ for $z=0,1$ and some $\epsilon>0$, and that $\| \widehat\pi_1 - \pi_1 \| + \max_z  \| \widehat\lambda_z - \lambda_z \|  +  \max_z \| \widehat\nu_{j,z} - \nu_{j,z} \| + \Pb( \widehat{h}_q \neq h_q) = o_\Pb(1)$. 
\begin{enumerate}
\item If the margin condition holds for some $\alpha$, then
\begin{align*}
\widehat\beta_j(\widehat{h}_q) - \beta_j(h_q) = O_\Pb \left( \frac{1}{\sqrt{n}} + R_{1,n} + R_{2,n} \right).
\end{align*}
\item If it also holds that $R_{1,n} + R_{2,n}=o_\Pb(1/\sqrt{n})$ then 
$$ \sqrt{n}\left\{ \widehat\beta_j(\widehat{h}_q) - \beta_j(h_q) \right\} \indist N\left(0,  \var\Big[ \{ \varphi_1(V_{j,1}) - \varphi_0(V_{j,0}) \} h_q - \beta_j(h_q) \phi_\mu \Big] / \mu^2 \right) . $$
\end{enumerate}
\end{theorem}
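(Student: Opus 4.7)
The plan is to exploit the ratio structure $\widehat\beta_j(\widehat h_q) = \widehat N/\widehat D$ versus $\beta_j(h_q) = N_0/\mu$, where I let $\phi_j(\bO;\boldsymbol\eta) = \varphi_1(V_{j,1};\boldsymbol\eta) - \varphi_0(V_{j,0};\boldsymbol\eta)$, so $\widehat N = \Pn[\phi_j(\bO;\boldsymbol{\widehat\eta}_{\text{-}B})\widehat h_{q,\text{-}B}(\bX)]$ and $N_0 = \E[\nu_j(\bX) h_q(\bX)] = \beta_j(h_q)\mu$. The algebraic identity
\begin{align*}
\widehat\beta_j(\widehat h_q) - \beta_j(h_q) = \frac{(\widehat N - N_0) - \beta_j(h_q)(\widehat D - \mu)}{\widehat D}
\end{align*}
reduces the problem to separately controlling $\widehat N - N_0$ and $\widehat D - \mu$. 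The denominator is easy: the construction of $\widehat q_{\text{-}b}$ forces $\Pn^b[\widehat h_{q,\text{-}b}] = \widehat\mu + o_\Pb(n^{-1/2})$ on each fold, so $\widehat D = \widehat\mu + o_\Pb(n^{-1/2})$ and $\widehat D - \mu$ inherits the standard doubly robust expansion of $\widehat\mu - \mu$, with influence function $\phi_\mu$ and product bias bounded by $\|\widehat\pi_1 - \pi_1\|\,\max_z \|\widehat\lambda_z - \lambda_z\|$.

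For the numerator, I would apply the standard three-term decomposition
\begin{align*}
\widehat N - N_0 = (\Pn - \Pb)[\phi_j(\boldsymbol\eta_0) h_q] + (\Pn - \Pb)[\phi_j(\boldsymbol{\widehat\eta}_{\text{-}B})\widehat h_{q,\text{-}B} - \phi_j(\boldsymbol\eta_0) h_q] + \Pb[\phi_j(\boldsymbol{\widehat\eta}_{\text{-}B})\widehat h_{q,\text{-}B} - \nu_j h_q].
\end{align*}
The first term is the leading $O_\Pb(n^{-1/2})$ CLT contribution. The second (empirical process) term is where sample splitting buys us everything: conditioning on the auxiliary fold used to fit $\boldsymbol{\widehat\eta}_{\text{-}b}$ and $\widehat h_{q,\text{-}b}$, its conditional variance is $O(\|\phi_j(\boldsymbol{\widehat\eta})\widehat h_q - \phi_j(\boldsymbol\eta_0)h_q\|^2/n) = o_\Pb(1/n)$ under the stated $L_2$-consistency and the $\Pb(\widehat h_q \neq h_q) = o_\Pb(1)$ assumption, and Chebyshev then yields $o_\Pb(n^{-1/2})$ unconditionally, in the style of \citet{chernozhukov2016double}.

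The main obstacle is the bias term, which I would split as $\Pb[(\phi_j(\boldsymbol{\widehat\eta}) - \nu_j)\widehat h_q] + \Pb[\nu_j(\widehat h_q - h_q)]$. For the first piece, iterating expectations inside $\varphi_z$ (using $\Pb(Z=z \mid \bX) = \pi_z$) produces the classical doubly robust product form $\sum_z (-1)^{1-z}\E[(\pi_z - \widehat\pi_z)(\nu_{j,z} - \widehat\nu_{j,z})\widehat h_q/\widehat\pi_z]$, and positivity, boundedness of $\widehat h_q \in \{0,1\}$, together with Cauchy--Schwarz, give exactly the $\|\widehat\pi_1 - \pi_1\|\,\max_z \|\widehat\nu_{j,z} - \nu_{j,z}\|$ contribution to $R_{1,n}$. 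For the second piece, boundedness of $\nu_j$ gives $|\Pb[\nu_j(\widehat h_q - h_q)]| \lesssim \Pb(\widehat h_q \neq h_q)$; the crucial observation is that $\widehat h_q(\bX) \neq h_q(\bX)$ forces $|\gamma(\bX) - q| \leq \|\widehat\gamma - \gamma\|_\infty + |\widehat q - q|$, and so the margin condition~\eqref{eqn:marginsiva} applied conditionally on the out-of-fold nuisance estimates delivers exactly $R_{2,n}$. Handling this non-smooth plug-in classifier via the margin exponent is the principal technical difficulty, analogous to arguments in \citet{audibert2007fast} and \citet{luedtke2016statistical}.

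Combining these pieces proves Part 1. For Part 2, once $R_{1,n} + R_{2,n} = o_\Pb(n^{-1/2})$, the above analyses upgrade to the asymptotic linear expansions $\widehat N - N_0 = (\Pn - \Pb)[\phi_j(\boldsymbol\eta_0) h_q] + o_\Pb(n^{-1/2})$ and $\widehat D - \mu = (\Pn - \Pb)[\phi_\mu(\boldsymbol\eta_0) - \mu] + o_\Pb(n^{-1/2})$. Substituting into the ratio identity and applying Slutsky's lemma (since $\widehat D \inprob \mu$) yields
\begin{align*}
\sqrt{n}\{\widehat\beta_j(\widehat h_q) - \beta_j(h_q)\} = \sqrt{n}\,(\Pn - \Pb)\!\left[\frac{\phi_j(\boldsymbol\eta_0) h_q - \beta_j(h_q)\phi_\mu(\boldsymbol\eta_0)}{\mu}\right] + o_\Pb(1),
\end{align*}
and the CLT delivers the stated normal limit with variance $\var[\{\varphi_1(V_{j,1}) - \varphi_0(V_{j,0})\}h_q - \beta_j(h_q)\phi_\mu]/\mu^2$.
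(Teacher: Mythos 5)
Your proposal is correct and follows essentially the same route as the paper's proof: the same ratio decomposition into numerator and denominator errors, the same three-term split of the numerator into a CLT term, a sample-splitting/Chebyshev-controlled empirical process term, and a bias term that separates into the doubly robust product remainder $\|\widehat\pi_1-\pi_1\|\max_z\|\widehat\nu_{j,z}-\nu_{j,z}\|$ and the classifier-mismatch term $\Pb\{\nu_j(\widehat h_q-h_q)\}$ handled via the indicator inequality and the margin condition. Part 2 via asymptotic linearity, Slutsky, and the delta-method-style variance also matches the paper.
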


Theorem \ref{betahat} shows that the error in estimating bounds on $\beta(h_q)$ consists of a doubly robust second-order term $R_{1,n}$ that will be small if either $\pi_z$ or $(\lambda_z,\nu_{j,z})$ are estimated accurately, along with a term $R_{2,n}$ that will be small if the compliance score $\gamma$ is estimated accurately, and particularly so depending on a margin condition. 

If the exponent in the margin condition is too small, e.g., $\alpha \leq 1$, then the proposed estimators will not in general be asymptotically normal or even $\sqrt{n}$-consistent, for example if $\widehat\gamma$ and $\widehat{q}$ are estimated nonparametrically at slower than $\sqrt{n}$-rates. In general we expect the margin condition to be weakest when the instrument is sharper, i.e., $\alpha$ is likely larger for sharper instruments, and smaller for more blunt instruments, since then $\gamma$ is more flat and likely puts more mass around the quantile $q$. In Appendix~\ref{sec:margin} we consider some examples, and illustrate for which $\alpha$ values the condition holds. An alternative approach would be to avoid the margin condition by instead targeting a smooth approximation of the non-smooth functional $\beta_j(h_q)$, e.g., in the same spirit as \citet{kennedy2017nonparametric}, or smooth but wider bounds.

Importantly, under the conditions of Theorem \ref{betahat} that ensure asymptotic normality, one can use the approach of \citet{imbens2004confidence} to construct valid confidence intervals for the partially identified effect $\beta(h_q)$. We implement this in the \verb|npcausal| R package.

\section{Summarizing Sharpness}
\label{sec:sharp}

So far we have discussed two primary features that make an instrument sharp: accurate prediction of compliers, and tight bounds on effects in identifiable subgroups. In this section we present a new summary measure of sharpness that captures these two properties, separate and apart from strength. We characterize how this measure is related to the complier classification error and bound length quantities from previous sections, and discuss efficient nonparametric estimation and inference. We suggest our sharpness measure be reported alongside strength in practice.

\subsection{Proposed Measure \& Properties}

To summarize sharpness we use the proportion of variance in the instrument's compliance explained by covariates, specifically that proportion explained by the highest compliance score values; this is equivalent to the correlation between the true and predicted compliance status. Although we view this measure as a (strength-independent) summary of how well one can predict compliance and obtain tight bounds on identifiable subgroup effects, we refer to it as sharpness for simplicity.

\begin{definition} \label{sharp}
The \textit{sharpness} $\psi$ of instrument $Z$ with latent compliance indicator $C$ and compliance score $\gamma$ is defined as 
$$ \psi = \frac{\cov(C,h_q)}{\var(C)} = \text{corr}(C,h_q) $$
where $h_q=\one\{ \gamma(\bX) > F^{-1}(1-\mu)\}$ is the quantile classifier defined in \eqref{eq:quantile}, which selects subjects with the top $100\mu\%$ compliance scores.
\end{definition}

We will now give some motivation and intuition for our proposed sharpness measure. First, as a ratio of covariances, it is easily interpretable as a measure of variance explained. In particular, it represents the proportion of variation in compliance explained by the highest $100\mu\%$ compliance scores (it is in the unit interval when $\gamma$ is continuously distributed). In this sense is can be viewed as a model-free and population version of a classical $R^2$ measure, indicating to what extent compliance can be predicted by covariates (through the compliance score). In fact sharpness is also the slope of a population regression of compliance $C$ on predicted compliance $h_q$. At one extreme, if the highest compliance scores do not predict compliance at all, i.e., $C \ind h_q$ (say if $\gamma \approx 0.5$ so that $C$ is just a coin flip), then  the sharpness measure is zero. Conversely, if compliance is perfectly predictable, i.e., $C=\one(\gamma>q)=h_q$, then  sharpness is one. For the toy example in Figure 1, the sharpness is 0\%, 40\%, and 100\% for instruments 1, 2, and 3, respectively.

One could substitute other classifiers for $h_q$ and re-define sharpness as 
$\psi(h) = \cov(C,h)/\var(C)$ for some other $h: \mathcal{X} \mapsto \{0,1\}$, such as $h_0$ or $h_s$ discussed in Section~\ref{sec:class}. We focus on $h_q$ for three main reasons: first, it is optimal among classifiers with size $\mu$, i.e., 
\begin{align*}
\argmin_{h \in \mathcal{G}(\mu)} \mathcal{E}(h) = \argmin_{h \in \mathcal{G}(\mu)} \ell(h) = \argmax_{h \in \mathcal{G}(\mu)} \psi(h) = h_q.
\end{align*} 
Second, using the classifier $h_q$ yields simple and interpretable relationships between $\mathcal{E}(h_q)$, $\ell(h_q)$, and $\psi$, as will be discussed shortly; and finally, the classifier has an easy interpretation as selecting the highest $100\mu\%$ of compliance scores.

The proposed sharpness measure is further interpretable since, for any classifier $h: \mathcal{X} \mapsto \{0,1\}$, we show in Appendix~\ref{sec:youdenequiv} that
$$ \frac{ \cov(C,h) }{ \var(C) }  = \Pb(h=1 \mid C=1) - \Pb(h=1 \mid C=0) . $$
Thus, in addition to measuring variance explained, sharpness also measures the difference between true positive and false positive rates. In particular, for the quantile classifier we have that 
\begin{align*}
\psi = \Pb(\gamma > F^{-1}(1 - \mu) \mid C=1) - \Pb(\gamma > F^{-1}(1 - \mu) \mid C=0).
\end{align*} This quantity is typically called the Youden index, and is a popular summary measure of classifier performance \citep{fluss2005estimation,schisterman2005optimal}. 

One might question the additional benefits of reporting sharpness $\psi$, beyond just the classification error $\mathcal{E}(h)$ or bound length $\ell(h)$.  One crucial feature of $\psi$ is that, unlike say classification error $\mathcal{E}(h)$, it is formally separate from instrument strength $\mu$, in the sense of variation independence. This means for example that $\psi$ -- unlike $\mathcal{E}(h)$ -- cannot be small solely due to instrument strength (or lack thereof). As an illustrative example, consider an instrument for which $\gamma=0.05$ with probability one. Then the optimal classifier in terms of prediction error is given by $h_0=0$, and this uninteresting rule classifies 95\% of subjects correctly (an impressive error rate). However, this instrument has zero sharpness in the intuitive sense of the motivating example from Section~\ref{sec:intro}, and this fact is not reflected by the classification error. In particular, with respect to both classification error and strength, the instrument with $\gamma=0.05$ is virtually indistinguishable from one with $\gamma=\Phi(-2.7+1.4x)$ for $X \sim N(0,1)$. Both yield approximately 5\% classification error and strength, but in the latter case more information is available: we know that subjects with larger $x$ values are more likely to be compliers; in fact we have $\psi = \cov(C,h_q)/\var(C) \approx 50\%$ for the second instrument, compared to $\psi=0$ for the first. 

More formally, sharpness and strength are truly separate measures in the sense that they are variation independent in the presence of a continuous covariate. In particular, for an instrument with any given strength $\mu \in [\epsilon,1-\epsilon]$ we can construct a congenial compliance score $\gamma$ with any sharpness value $\psi \in [0,1]$; conversely, for an instrument with any given sharpness $\psi \in [0,1]$ we can construct a congenial compliance score with any strength value $\mu \in [\epsilon,1-\epsilon]$. For example, suppose without loss of generality that $X \sim N(0,1)$, which can be satisfied for any continuous covariate $X^*$ with cumulative distribution function $G$ via the transformation $X = \Phi^{-1}\{G(X^*)\}$ for $\Phi$ the $N(0,1)$ distribution function. Then for $\gamma(x) = \Phi(b_0+b_1 x)$ we can always find particular $(b_0, b_1)$ values to satisfy $\E(C) = \mu$ and $\cov(C,h_q) = \psi \mu(1-\mu)$ for any $(\mu,\psi) \in [\epsilon,1-\epsilon]^2$. For the case where $\psi=0$ or $\psi=1$, we can simply take $\gamma=\mu$ and $\gamma=\one\{x >\Phi^{-1}(1-\mu)\}$, respectively. More details are given in Section \ref{sec:varind} of the Appendix, along with a plot to illustrate the bijective relationship between $(\mu,\psi)$ and $(b_0, b_1)$. 

Although compliance status $C$ is not directly observed, sharpness is still identified under usual instrumental variable assumptions, simply because the compliance score is identified.
\begin{proposition} \label{psi_id}
Under Assumptions 1--3 and 5, sharpness is identified as $$\psi = { \E\{ \gamma(\bX) h_q(\bX) - \mu^2\} }/{ \mu(1-\mu) }.$$ 
\end{proposition}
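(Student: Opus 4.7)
The plan is to directly unpack the definition $\psi = \cov(C,h_q)/\var(C)$ and show every piece is a functional of the observed data distribution under Assumptions 1--3 and 5. Since $C$ is a Bernoulli variable with mean $\mu = \Pb(C=1)$, the denominator is $\var(C) = \mu(1-\mu)$, and the paper has already established (just after equation~\eqref{eq:late}) that $\mu = \E\{\gamma(\bX)\}$ is identified via $\gamma(\bx) = \lambda_1(\bx) - \lambda_0(\bx)$. So the denominator is immediately identified.

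For the numerator, I would write $\cov(C,h_q) = \E(C h_q) - \E(C)\E(h_q)$ and handle the two terms separately. The key observation is that the quantile classifier $h_q(\bX) = \one\{\gamma(\bX) > F^{-1}(1-\mu)\}$ is a deterministic function of $\bX$ alone, so by iterated expectations
\begin{equation*}
\E\{C h_q(\bX)\} = \E\{\E(C \mid \bX) h_q(\bX)\} = \E\{\gamma(\bX)h_q(\bX)\},
\end{equation*}
where the second equality uses the identification $\Pb(C=1 \mid \bX) = \gamma(\bX)$ already recorded in Section~\ref{sec:not} from Assumptions 1--3 and 5. For the cross-term, I would invoke the strength-calibration of $h_q$ established in Section~\ref{sec:classprop} (via the uniform distribution of $F(\gamma)$ under the unique-quantile assumption) to conclude $\E\{h_q(\bX)\} = \mu$, so that $\E(C)\E(h_q) = \mu^2$.

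Putting the pieces together gives $\cov(C,h_q) = \E\{\gamma(\bX)h_q(\bX)\} - \mu^2 = \E\{\gamma(\bX)h_q(\bX) - \mu^2\}$ (the constant $\mu^2$ can be pulled inside the expectation), and dividing by $\var(C) = \mu(1-\mu)$ yields the stated identifying expression. No step involves any genuine obstacle; the only subtlety worth flagging is that $h_q$ itself depends on the unknown functional $\gamma$ and its quantile, but since both are identified from the observed data under the stated assumptions, $h_q$ is an identified (i.e.\ data-dependent) transformation of $\bX$, which is all that is needed for the argument to go through.
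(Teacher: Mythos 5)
Your proposal is correct and matches the paper's own (one-line) justification: the paper states that the result "follows easily from the definition of sharpness together with the fact that $\E(C \mid \bX)=\gamma(\bX)$," which is precisely the iterated-expectation step you carry out, combined with $\var(C)=\mu(1-\mu)$ and the strength-calibration $\E(h_q)=\mu$. Nothing to add.
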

\noindent Proposition \ref{psi_id} follows easily from the definition of sharpness together with the fact that $\E(C \mid \bX)=\gamma(\bX)$, and is of course critical for constructing estimators of sharpness from observed data, which will be presented in the next subsection. 

Having defined, motivated, and identified the sharpness measure $\psi$, we now turn to characterizing its relation to classification error and bound length. The next result shows that, keeping strength fixed, sharper instruments yield more accurate complier classification and tighter bounds on identifiable subgroup effects. 

\begin{theorem} \label{sharprel}
The classification error $\mathcal{E}(h_q)$ and bound length $\ell(h_q)$ can be expressed in terms of strength $\mu$ and sharpness $\psi$ as 
$$ \mathcal{E}(h_q) = 2 \mu(1-\mu) (1-\psi) $$
$$ \ell(h_q) = (1-\mu) (1-\psi) . $$
\end{theorem}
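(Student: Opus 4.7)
The plan is to derive both identities by direct algebraic manipulation, using only the identification formula for $\psi$ from Proposition \ref{psi_id}, the formula for $\mathcal{E}_q$ from Theorem \ref{hsthm}, and the bound length formula from Corollary \ref{cor:bndlen}. The key observation is that all three objects can be expressed in terms of the single quantity $\E\{\gamma(\bX) h_q(\bX)\}$, so once we extract this from the definition of sharpness, both conclusions drop out.

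First, I would rearrange Proposition \ref{psi_id} to obtain the identity
\begin{equation*}
\E\{\gamma(\bX) h_q(\bX)\} = \mu^2 + \psi \mu(1-\mu).
\end{equation*}
Next, for the bound length, Corollary \ref{cor:bndlen} gives $\ell(h_q) = \E\{1-\gamma(\bX) \mid h_q = 1\}$. Since $h_q$ is strength-calibrated with $\Pb(h_q = 1) = \mu$, I would write
\begin{equation*}
\ell(h_q) = \frac{\E\{h_q(\bX)\} - \E\{\gamma(\bX) h_q(\bX)\}}{\mu} = \frac{\mu - \mu^2 - \psi \mu(1-\mu)}{\mu} = (1-\mu)(1-\psi),
\end{equation*}
which is the second claim.

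For the classification error, Theorem \ref{hsthm} gives $\mathcal{E}(h_q) = 2\E[\gamma(\bX) \one\{\gamma(\bX) \leq q\}]$. I would split this using $\E\{\gamma(\bX)\} = \mu$ to write
\begin{equation*}
\mathcal{E}(h_q) = 2\bigl(\mu - \E\{\gamma(\bX) h_q(\bX)\}\bigr) = 2\bigl(\mu - \mu^2 - \psi\mu(1-\mu)\bigr) = 2\mu(1-\mu)(1-\psi),
\end{equation*}
yielding the first claim. There is no real obstacle here: the proof is a one-line substitution for each identity, and the main content lies in having correctly identified $\psi$ through Proposition \ref{psi_id} beforehand. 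The only mild subtlety is appealing implicitly to the unique-quantile assumption of Theorem \ref{hsthm} to guarantee $\Pb(h_q = 1) = \mu$ exactly, so that dividing by $\mu$ in the bound length step is clean.
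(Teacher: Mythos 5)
Your proof is correct and follows essentially the same route as the paper: both arguments reduce each quantity to the single moment $\E\{\gamma(\bX)h_q(\bX)\}$ and substitute $\E(\gamma h_q)=\mu^2+\psi\mu(1-\mu)$, the only difference being that the paper records the resulting identities for an arbitrary classifier $h$ (with extra terms in $\E h-\mu$) before specializing to $\E(h_q)=\mu$. Your remark about the unique-quantile assumption guaranteeing $\Pb(h_q=1)=\mu$ is also the right caveat.
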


The theorem indicates the precise relationship between complier classification error, bound length, strength, and sharpness for $h_q$. The result follows from the fact that, defining $\psi(h)=\cov(C,h)/\var(C)$ for general classifiers $h$, we have
$$ \mathcal{E}(h) = 2 \mu(1-\mu)\{ 1 - \psi(h) \} + (1 - 2\mu) (\E h - \mu) \ , 
\  \ell(h) = (1-\mu)  \{1 - \mu \psi(h) / \E h \}$$
together with the fact that $\E(h_q)=\mu$. Theorem \ref{sharprel} has several important consequences. First, it shows that strength and sharpness are fundamental aspects of the quality of an instrument, since together they completely determine the best error for classifying compliers and the tightest bounds on identifiable subgroup effects, among all classifiers/subgroups of size $\mu$. It also shows that for fixed strength, sharper instruments yield better complier classification and tighter bounds on identifiable subgroup effects. As expected, perfect complier prediction $\mathcal{E}(h_q)=0$ and point identification $\ell(h_q)=0$ requires perfect sharpness $\psi=1$ (note we must have $\mu \leq 1-\epsilon$ because if $\mu=1$ then $A=Z$, which means the instrument cannot be unconfounded if the treatment is confounded). 

Of more practical relevance, Theorem \ref{sharprel} also shows that non-zero sharpness is an important sufficient condition for better complier prediction and tighter bounds on identifiable subgroup effects. Focusing first on complier prediction, we observe that if 
$\psi>0$ then there exists a classifier that attains better error than the naive strength-calibrated classifier (which simply flips a coin with probability $\mu$). This follows since if $\psi>0$ then $\mathcal{E}(h_q)<2\mu(1-\mu)$, which is the error of the rule $h \sim \text{Bern}(\mu)$. Further, since the classifier $h_q$ attains a better error than the coin flip rule, then $h_0$ does as well, since the error of $h_q$ is a lower bound for the latter. 
Turning our attention to bound lengths, we note that if $\psi>0$ then there exists an identifiable subgroup (of size $\mu$) yielding tighter bounds than those on the average treatment effect. This follows since non-zero sharpness $\psi>0$ implies $\ell(h_q) < 1-\mu$, which is the length of the bounds on the average treatment effect $\E(Y^{a=1}-Y^{a=0})$ as derived for example by \citet{robins1989analysis, manski1990nonparametric, balke1997bounds}. The size of $\psi$ indicates the percent reduction in the length of the bounds, e.g., bounds on the subgroup effect $\beta(h_q)$ are precisely $100\psi\%$ tighter than those on the average treatment effect. The only way tighter bounds could be obtained would be to consider smaller subgroups. 

In summary, the sharpness measure proposed in Definition 1 captures the proportion of variance in an instrument's compliance explained by the highest compliance scores. It is an interpretable and strength-independent reflection of (i) how accurately compliers can be classified and (ii) how tightly effects in identifiable subgroups can be bounded.  We suggest that it be reported alongside strength in instrumental variable analyses; in the next subsection we propose methods for estimation and inference.

\subsection{Estimation \& Inference}

Here we propose an estimator for sharpness $\psi$ that, like estimators from previous sections, uses influence functions to correct bias from nonparametric nuisance estimation, and incorporates sample splitting to avoid empirical process restrictions. We refer back to Section~\ref{sec:infl} for more details and notation. 

Our sharpness estimator relies on the strength estimator $\widehat\mu=\Pn\{ \phi_\mu(\bO; \boldsymbol{\widehat\eta}_{\text{-}B}) \}$ from Section~\ref{sec:bounds}, as well as an estimator $\widehat\xi=\Pn(\widehat\phi_{\xi,\text{-}B})$ of 
\begin{align*}
\xi=\E( \gamma h_q),
\end{align*} where  $\phi_{\xi}=\phi_\mu(\bO;\boldsymbol{\eta}) h_q(\bX)$ and $\widehat\phi_{\xi,\text{-}b}=\phi_\mu(\bO;\boldsymbol{\widehat\eta}_{\text{-}b}) \widehat{h}_{q,\text{-}b}(\bX)$ are the corresponding influence function for $\xi$ and its estimate.  In particular, we estimate sharpness as
$$ \widehat\psi = \frac{ (\widehat\xi - \widehat\mu^2 )}{\widehat\mu(1-\widehat\mu)} $$ 
which appropriately combines influence-function-based estimators of the corresponding terms from Definition 1 (i.e., the numerator is the estimator of the covariance between compliance $C$ and the classifier $h_q$). To concisely state our results we define the remainder terms:
\begin{align*}
R_{1,n} &=\| \widehat\pi_1 - \pi_1 \| \left(  \max_z  \| \widehat\lambda_z - \lambda_z \| \right) \\
R_{2,n} &=\left( \| \widehat\gamma - \gamma \|_\infty  + |\widehat{q}-q| \right)^{1+\alpha},
\end{align*}
where once again $\alpha > 0$ is the margin exponent (see~\eqref{eqn:marginsiva}). We note that in comparison with the remainder in~\eqref{eqn:remone} for the estimation of sub-group effects the remainder here for the estimation of sharpness is of lower order, i.e. we are able to estimate sharpness at much faster rates. 
With these definitions in place, the next theorem gives corresponding convergence rates, as well as conditions under which $\widehat\psi$ is asymptotically normal and efficient. 

\begin{theorem} \label{psihat}
Assume  that $\Pb\{\epsilon \leq \widehat\pi_z(\bX) \leq 1-\epsilon\}=1$ for $z=0,1$ and some $\epsilon>0$, and $\| \widehat\pi_1 - \pi_1 \| + \max_z \| \widehat\lambda_z - \lambda_z \| + \Pb(\widehat{h}_q \neq h_q)=o_\Pb(1).$ 
\begin{enumerate}
\item If the margin condition holds for some $\alpha > 0$ the,
$$  \widehat\psi - \psi = O_\Pb\bigg(  \frac{1}{\sqrt{n}} + R_{1,n} + R_{2,n}  \bigg) . $$
\item If it also holds that $R_{1,n} + R_{2,n} = o_\Pb(1/\sqrt{n})$ then
$$ \sqrt{n} (\widehat\psi - \psi) \indist N\left(0, \var\left[ \frac{\{ \phi_\mu h_q + q(\phi_\mu - h_q)  - \xi \}}{(\mu-\mu^2)}  + \frac{ (2\mu\xi - \xi - \mu^2) }{ (\mu-\mu^2)^2 } (\phi_\mu - \mu) \right]  \right) . $$
\end{enumerate}
\end{theorem}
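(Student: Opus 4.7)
The strategy is to apply the delta method to $\psi = g(\mu,\xi)$ with $g(m,x) = (x-m^2)/(m-m^2)$, which reduces the problem to establishing joint asymptotic linearity of $(\widehat\mu, \widehat\xi)$ around $(\mu,\xi)$ with explicit influence functions. A quick computation gives $\partial_x g(\mu,\xi) = 1/(\mu-\mu^2)$ and $\partial_m g(\mu,\xi) = (2\mu\xi - \xi - \mu^2)/(\mu-\mu^2)^2$, which already reproduces the linear combination inside the limiting variance in the theorem. It then suffices to show that (i) $\widehat\mu - \mu = (\Pn - \Pb)\phi_\mu + O_\Pb(R_{1,n})$ by the standard doubly-robust argument, and (ii) the less standard analog $\widehat\xi - \xi = (\Pn - \Pb)[\phi_\mu h_q + q(\phi_\mu - h_q)] + O_\Pb(R_{1,n} + R_{2,n}) + o_\Pb(1/\sqrt n)$.

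For $\widehat\mu$, sample splitting and the $L_2$-consistency of $\widehat{\boldsymbol\eta}_{\text{-}B}$ reduce the empirical process to $(\Pn - \Pb)\phi_\mu + o_\Pb(1/\sqrt n)$, and the standard product bias calculation for $\phi_\mu = \varphi_1(A;\boldsymbol\eta) - \varphi_0(A;\boldsymbol\eta)$ controls the remainder by $R_{1,n}$. For $\widehat\xi$ the same sample-splitting step gives $\widehat\xi - \xi = (\Pn-\Pb)(\phi_\mu h_q) + \text{bias} + o_\Pb(1/\sqrt n)$, where the bias decomposes as $\Pb[(\phi_\mu(\widehat\eta) - \phi_\mu(\eta))\widehat h_q] + \Pb[\phi_\mu(\eta)(\widehat h_q - h_q)]$. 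The first piece is $O_\Pb(R_{1,n})$ (the extra factor $\widehat h_q$ is bounded), while the second piece simplifies to $\Pb[\gamma(\widehat h_q - h_q)]$ using $\E(\phi_\mu(\eta)\mid \bX) = \gamma$; this last term is what requires the most care.

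To handle $\Pb[\gamma(\widehat h_q - h_q)]$ I would decompose $\gamma = q + (\gamma - q)$. Since $\widehat h_q \neq h_q$ forces $|\gamma - q| \leq \|\widehat\gamma - \gamma\|_\infty + |\widehat q - q|$, the margin condition gives $\Pb(\widehat h_q \neq h_q) \lesssim (\|\widehat\gamma - \gamma\|_\infty + |\widehat q - q|)^\alpha$, and multiplying by the $|\gamma - q|$ factor upgrades the exponent to $1+\alpha$, delivering the $R_{2,n}$ bound on $\E[(\gamma - q)(\widehat h_q - h_q)]$. The remaining piece $q\,\E(\widehat h_q - h_q)$ is where the Neyman-orthogonal correction $q(\phi_\mu - h_q)$ emerges: the calibration $\Pn \widehat h_q = \widehat\mu$ (by definition of $\widehat q$) together with sample splitting gives $\E(\widehat h_q) = \widehat\mu - (\Pn - \Pb)h_q + o_\Pb(1/\sqrt n)$, and substituting the linearization $\widehat\mu - \mu = (\Pn - \Pb)\phi_\mu + O_\Pb(R_{1,n})$ yields $q\,\E(\widehat h_q - h_q) = q(\Pn - \Pb)(\phi_\mu - h_q) + O_\Pb(R_{1,n}) + o_\Pb(1/\sqrt n)$.

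Collecting these pieces produces the claimed linearization of $\widehat\xi$. Part (1) then follows from the crude bound $|\widehat\psi - \psi| \lesssim |\widehat\mu - \mu| + |\widehat\xi - \xi|$, while part (2) follows from the CLT applied to the joint influence function once $R_{1,n} + R_{2,n} = o_\Pb(1/\sqrt n)$, with the variance in the theorem being the delta-method variance using the partial derivatives computed above. The main obstacle is the analysis of $\widehat\xi$: simultaneously extracting the $(1+\alpha)$ exponent via the ``margin $\times$ contrast'' bound and recognizing that the empirical quantile calibration is precisely what generates the $q(\phi_\mu - h_q)$ correction in the influence function, reflecting the fact that the threshold $q$ is itself being estimated.
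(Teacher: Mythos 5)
Your proposal is correct and follows essentially the same route as the paper's proof: the same delta-method reduction to joint linearization of $(\widehat\mu,\widehat\xi)$, the same doubly robust bias bound giving $R_{1,n}$, the same decomposition $\Pb\{\gamma(\widehat h_q - h_q)\} = \Pb\{(\gamma-q)(\widehat h_q - h_q)\} + q\,\Pb(\widehat h_q - h_q)$ with the margin condition upgrading the exponent to $1+\alpha$, and the same use of the quantile calibration $\Pn\widehat h_q = \widehat\mu$ to generate the $q(\phi_\mu - h_q)$ correction. No substantive differences to report.
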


Theorem \ref{psihat} gives two main results. First, it shows that the proposed sharpness estimator is consistent with convergence rate that is second-order in nuisance estimation errors, under weak conditions (bounded IV propensity scores, consistent nuisance estimators, and the margin condition). This means  $\widehat\psi$ attains faster rates than those of its nuisance estimators, which comes from using influence functions for better bias correction than a general plug-in. We do not require any complexity or empirical process conditions, since we use sample splitting to separate the evaluation and estimation of the influence function. Second, Theorem \ref{psihat} shows that if the second-order nuisance errors converge to zero at a faster than $\sqrt{n}$ rate, the estimator is asymptotically normal, and efficient by virtue of the fact that we are working in a nonparametric model (where the only influence function is the efficient one). This condition on the nuisance estimation is satisfied, for example, if $\alpha=1$ and the nuisance estimators converge at faster than $n^{1/4}$ rates; this can hold under nonparametric smoothness, sparsity, or other structural conditions. 

The  asymptotic variance in the second part of Theorem \ref{psihat} can be easily estimated with its corresponding plug-in, 
from which Wald-type confidence intervals can be constructed. Since such intervals may go outside the unit interval, we give an improved logit-transformed interval in Corollary \ref{logit} in the Appendix (which is implemented in the \verb|npcausal| R package).

\section{Simulations \& Illustration}
\label{sec:sims}
In this section we report the results of various simulations
we performed to illustrate the finite-sample performance of our proposed estimators. We also analyze data from a study of canvassing effects on voter turnout \citep{green2003getting} and study the sharpness of the instrument and explore some of its consequences. 

\subsection{Simulation Study}
To assess finite-sample performance, we considered simulations from the following model:
\begin{equation*}
\begin{gathered}
X \sim N(0,1),  \ C \mid X \sim \text{Bern}(\gamma) \text{ for } \gamma(x) = \Phi(b_0 + b_1 x) , \\
Z \mid X, C \sim \text{Bern}(\pi_1) \text{ for } \pi_1(x) = \expit(x) , \\
A = CZ + (1-C) A^* \text{ for } A^* \mid X,C,Z \sim \text{Bern}(0.5) , \\
Y = AY^{a=1} + (1-A) Y^{a=0} \text{ for } Y^a \mid X,C,Z,A \sim \text{Bern}(0.5 + (a-0.5)\beta) ,
\end{gathered}
\end{equation*}
with $(b_0,b_1)$ chosen to ensure given values $(\mu,\psi)$ of strength $\mu=30\%$ and sharpness as detailed in Appendix~\ref{sec:varind}. This model satisfies Assumptions 1--5 and implies 
\begin{equation} \label{sim:params}
\E(Y^{a=1}-Y^{a=0}) = \E(Y^{a=1}-Y^{a=0} \mid h_q)  = \beta . 
\end{equation}
We used the proposed methods to classify compliers and estimate sharpness and bounds. Nuisance functions were estimated with correctly specified logistic regression models, with $K=2$ sample splits. To assess performance we used empirical error $\Pn(\widehat{h} \neq C)$ for each classifier; length of estimated bounds for parameters \eqref{sim:params} with $\beta=20\%$; and bias, RMSE, and 95\% CI coverage of the sharpness estimator. All code is in Appendix~\ref{sec:code}. 

\begin{table}[h!]
\caption{Simulation results across 500 simulations (all figures are percentages). \label{tab:simtab}}
\begin{center}
\begin{tabular}{l rrr rr rrr}
\toprule
& \multicolumn{3}{c}{Class.\ error} & \multicolumn{2}{c}{Bound length}  & \multicolumn{3}{c}{Sharpness est.\ }  \\
\cmidrule(lr){2-4} \cmidrule(lr){5-6} \cmidrule(lr){7-9}
Setting & $\widehat{h}_0$ & $\widehat{h}_q$ & $\widehat{h}_s$  & ATE & $\beta(h_q)$ & Bias & SE & Cov \\
\midrule
$n=500$: & & & & & & & & \\
\ \ \ $\psi=0.2$ & 30.9 & 36.7 & 39.9 		& 68.8 & 61.2 			& -9.4 & 13.5 & 96.9 		\\
\ \ \ $\psi=0.5$ & 21.2 & 22.1 & 29.2 		& 69.9 & 36.2 			& -1.4 & 13.9 & 98.2 		\\
\ \ \ $\psi=0.8$ & 8.5 & 9.3 & 14.3 		& 70.4 & 13.9 			& 0.1 & 10.3 & 95.8 		\\
$n=1000$: & & & & & & & & \\
\ \ \ $\psi=0.2$ & 30.0 & 35.1 & 39.6 		& 70.1 & 59.7 			& -3.7 & 10.3 & 97.0 		\\
\ \ \ $\psi=0.5$ & 20.6 & 21.4 & 28.9 		& 69.9 & 35.4 			& -0.4 & 8.0 & 95.2 		\\
\ \ \ $\psi=0.8$ & 8.4 & 8.8 & 13.6 		& 70.1 & 14.4 			& -0.4 & 7.0 & 95.0 		\\
$n=5000$: & & & & & & & & \\
\ \ \ $\psi=0.2$ & 29.6 & 33.7 & 39.4 		& 70.0 & 56.4 			& -0.4 & 3.6 & 95.8 		\\
\ \ \ $\psi=0.5$ & 20.5 & 21.0 & 28.1 		& 70.1 & 34.9 			& 0.4 & 3.1 & 95.2 		\\
\ \ \ $\psi=0.8$ & 8.4 & 8.5 & 12.6 		& 70.0 & 14.1 			& -0.1 & 3.1 & 94.6 		\\
\bottomrule
\end{tabular}
\end{center}
\end{table}

The simulations illustrate what our theory predicts. Instruments with the same strength can yield drastically different complier classification error (between 39.9\% to 8.4\% here) and subgroup effect bound lengths (between 13.9\% to 70.4\%) depending on sharpness. Our proposed sharpness estimator has minimal bias decreasing with sample size, and confidence intervals attain nominal coverage (coverage was at least 95\% for all bound estimators).

\subsection{Data Analysis}

Here we illustrate the proposed methods by analyzing data from a study of canvassing effects on voter turnout. \citet{green2003getting} conducted a study of $n$ = 18,933 voters across six cities who were randomly assigned to receive encouragement to vote in local elections or not. Recall we are using an iid assumption; inference without this assumption is an important avenue of future research. Non-compliance arose since some voters who were assigned to receive encouragement could not be contacted. As a result \citet{green2003getting} estimated the complier average effect, where here compliers are those people who would be encouraged only when assigned to be. \citet{aronow2013beyond} argue that the local estimand is of limited interest, since in this study compliance is less an inherent characteristic, and more a feature of the design and could change over time (e.g., multiple contacts could increase compliance). Thus it is of interest to identify compliers based on observed characteristics, so as to better generalize the study results by understanding to which subpopulation the effect corresponds. 

In this study the measured covariates include city indicators (Bridgeport, Columbus, Detroit, Minneapolis, Raleigh, St.\ Paul), party affiliation, prior voting history, age, family size, race, and corresponding missingness indicators. We use our proposed methods to classify compliers, estimate bounds on the average treatment effect as well as the subgroup effect $\beta(h_q)$, and assess sharpness of the instrument (i.e., initial randomization). We used random forests (via the \verb|ranger| R package) to estimate the nuisance functions with $K=2$-fold sample splitting. 

In Figures \ref{fig:gamma} and \ref{fig:class}, we present estimated compliance scores and results from the three proposed complier classification methods, respectively. In both cases we plot the voter's estimated compliance scores against two important covariates: the voter's city and age. The estimated compliance scores ranged from 8\% to 69\% across the voter population. Overall, the results indicate that the set of compliers is very likely to contain people from Raleigh (city 5), across a range of ages, as well as older voters in Detroit (city 3). Relative to the estimated Bayes classifier $\widehat{h}_0$,  the quantile classifier $\widehat{h}_q$ classifies more voters as compliers (30\% versus only 4\%), mostly from Raleigh but also from Bridgeport and St.\ Paul. With the stochastic classifier $\widehat{h}_s$ it is somewhat more difficult to distinguish predicted compliers from the rest, based on city and age; however one can still clearly see overrepresentation in Raleigh and St.\ Paul. The estimated error of the quantile classifier is $2\widehat\mu(1-\widehat\mu)(1-\widehat\psi) = 33.3\%$, which yields bounds $27.2\% \pm 6.1\% = [21.1\%, 33.3\%]$ on the optimal error $\mathcal{E}(h_0)$ from Proposition \ref{thmdgl}.

Our nonparametric doubly robust analysis yielded an estimated local effect very similar to that of \citet{green2003getting} (5.7\%, 95\% confidence interval (CI): 2.5\%-8.9\%). However, we estimate that the instrument in this study was stronger than it was sharp, yielding $\widehat\mu = 30.1\%$ (95\% CI: 29.2\%-31.1\%) but  $\widehat\psi=20.9\%$ (95\% CI: 18.8\%-23.2\%). Figure \ref{fig:bds} shows the estimated local effect, along with bounds on the average treatment effect and subgroup effect $\beta(h_q)$; we used the \citet{imbens2004confidence} approach to construct confidence intervals for the subgroup effects. Although the bounds for the subgroup effect are narrower than for the average treatment effect, they are still relatively wide due to the instrument not being very sharp. In particular, the bounds on $\beta(h_q)$ are 20\% narrower than for the average treatment effect, but still cover zero; the estimated bounds on $\beta(h_q)$ are [-17.1\%, 38.7\%] with 95\% CI [-18.9\%, 41.2\%].

\begin{figure}[h!]
    \centering
    
    \begin{subfigure}[t]{0.51\textwidth}
        \centering
        \includegraphics[width=\linewidth]{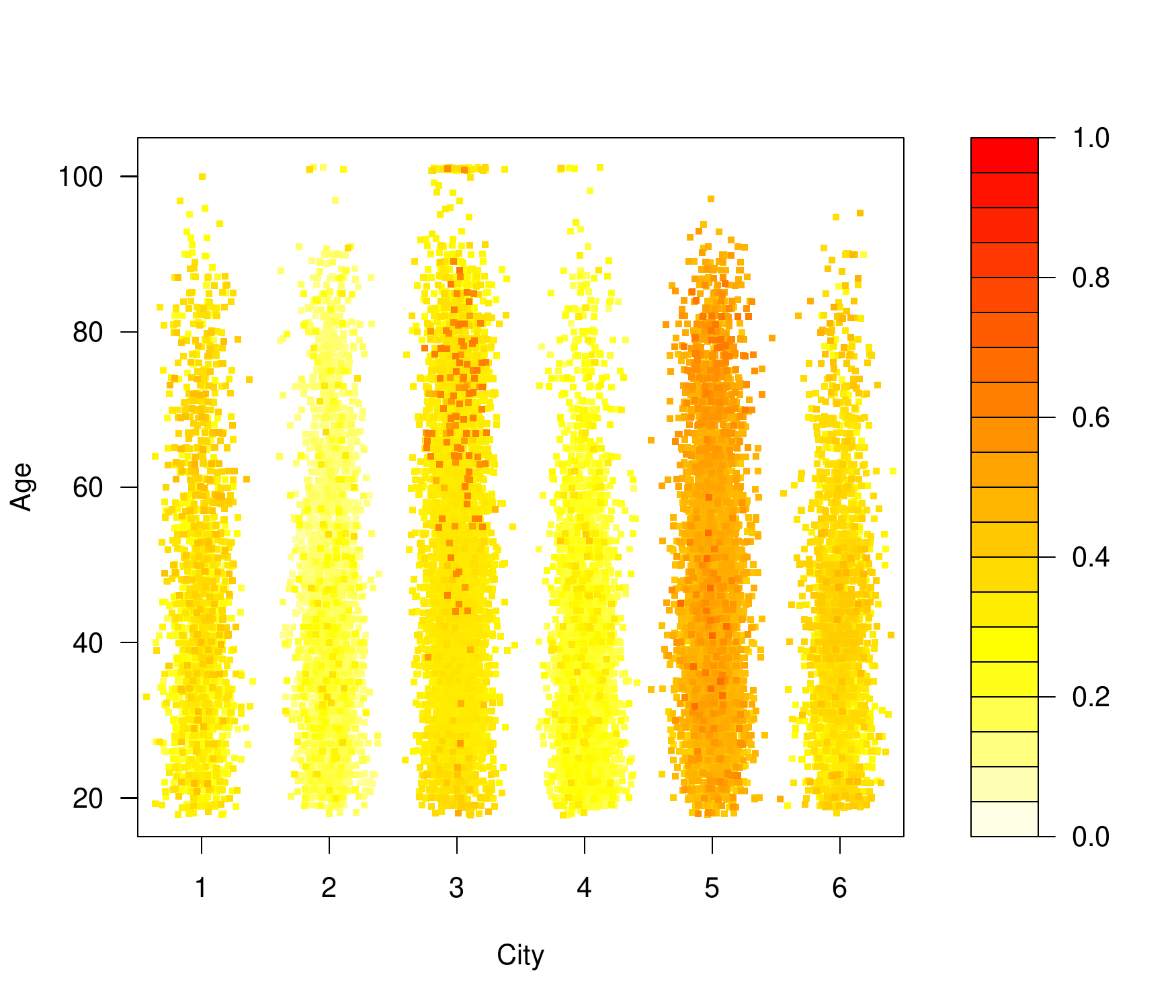} 
        \caption{Estimated compliance scores} \label{fig:gamma}
    \end{subfigure}
    \hfill \hspace*{-.2in}
    \begin{subfigure}[t]{0.445\textwidth}
        \centering
        \includegraphics[width=\linewidth]{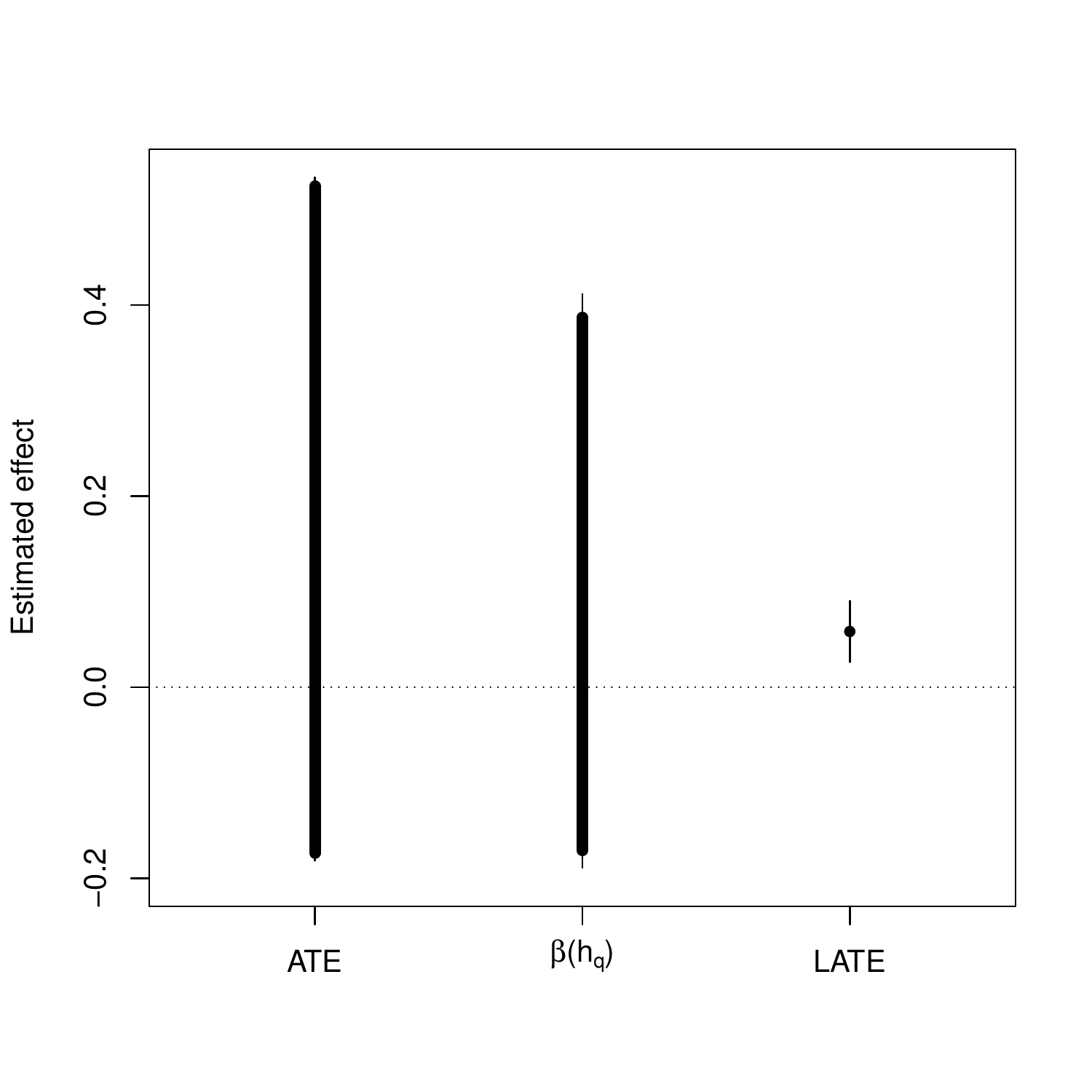} 
        \caption{Estimated effects} \label{fig:bds}
    \end{subfigure}

    \vspace{1cm}
    
    \begin{subfigure}[t]{\textwidth}
    \centering
        \includegraphics[width=\linewidth]{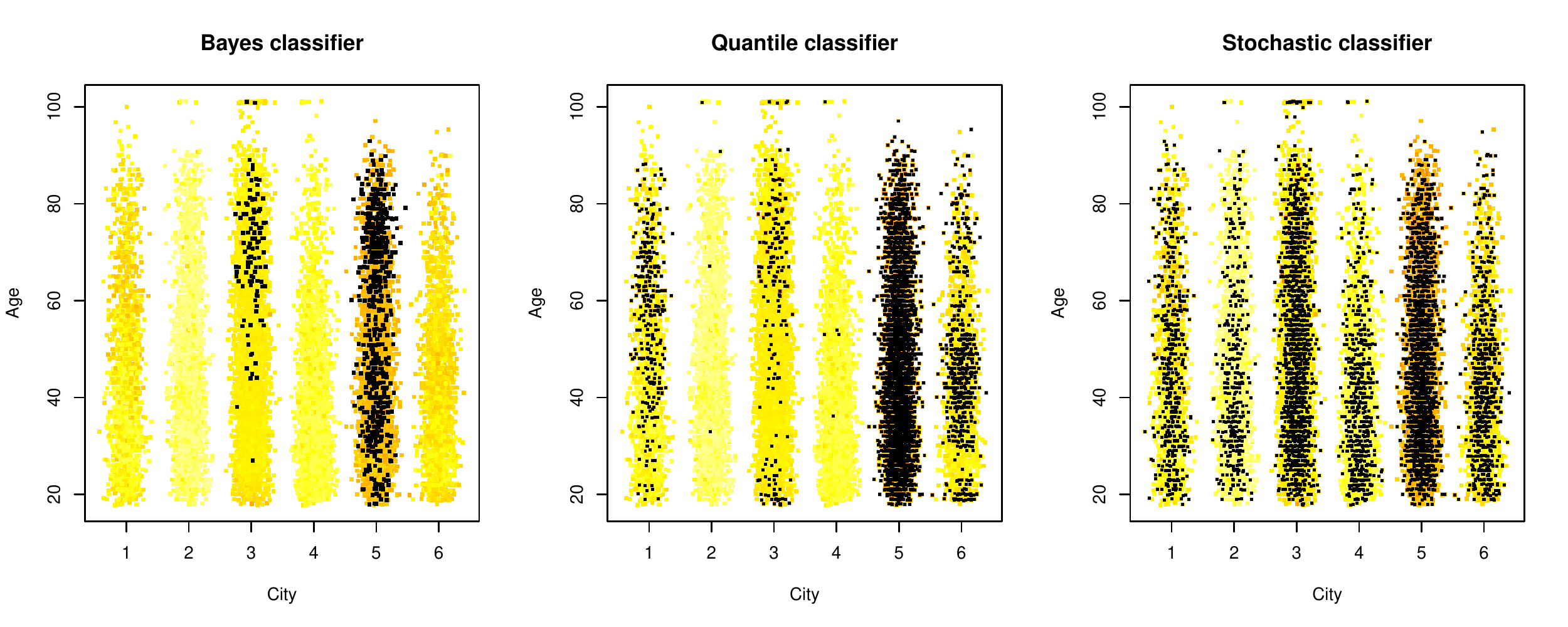} 
        \caption{Complier classification results, predicted compliers are marked in black.} \label{fig:class}
    \end{subfigure}
    
    \caption{Results from analysis of \citet{green2003getting} study of  canvassing effects.}
\end{figure}

\section{Discussion}

In this paper we introduce a new measure of instrument quality, called sharpness, which measures the variation in an instrument's compliance explained by the covariates (in particular, by the compliance scores), and which reflects how well one can predict who compliers are, and how tightly one can bound effects in identifiable subgroups. We propose complier classification rules and characterize their large-sample errors, as well as novel effects in identifiable subgroups defined by subjects with the highest compliance scores. We discuss nonparametric methods for estimating all of these quantities (classification rules, bounds, and sharpness) and give general rates of convergence results, as well as conditions under which the methods are efficient. Finally we have studied the methods via simulation, and applied them in a study of canvassing effects on voter turnout. Implementations of all our methods are publicly available in the \verb|npcausal| R package.

There are several caveats to mention, and ways in which our work could be generalized.  Although we have allowed for complex covariate information, we have focused on the relatively simple setting where both the instrument and treatment are binary. The binary instrument restriction can be removed without changing the estimands and methods too much (although some non-trivial statistical complications could result, as noted for example in \citet{kennedy2018robust}). A multivalued treatment, however, prevents (nonparametric) identification of the compliance score and even local treatment effects; therefore removing this restriction would necessitate a substantially different approach, for example involving estimands that are only partially identified without further assumptions. The same goes for removing the monotonicity restriction, a lack of which also prevents nonparametric identification. Although the binary/monotonic setup we consider here is widely used, it would  be useful in future work to consider analogs of sharpness for different instrumental variable models, such as those of \citet{robins1994correcting, tan2010marginal} that replace monotonicity with effect homogeneity restrictions. It would also be worthwhile to consider violations of the instrumental variable assumptions \citep{balke1997bounds, imbens1997estimating} might affect sharpness. 

In practice, we propose that sharpness should be assessed in instrumental variable studies, alongside strength. Sharp instruments  can help yield more generalizable causal effects (via better prediction of compliers, and tighter bounds on effects in identifiable subgroups), which has been a prominent concern  with standard instrumental variable methods. Given the substantial benefits of sharp instruments, this work also suggests new strategies for data collection and study design. Namely, one should aim to collect data not only on covariates that explain instrument assignment (so as to de-confound the instrument-treatment/outcome relationships for Assumption 3), but also on covariates that predict subjects' compliance behavior. Further, sharpness provides another factor to consider when choosing among instruments, in cases where numerous IVs are available (e.g., in A/B test settings involving many experiments with non-compliance). Importantly, both sharpness and strength can be assessed without outcome data; so if such data collection is costly, one can decide where to collect data on the basis of sharpness and strength.

\section*{References}
\bibliographystyle{abbrvnat}
\bibliography{bibliography}

%


\pagebreak 
\setcounter{page}{1}

\begin{frontmatter}
\title{Supplementary Material to \\ ``Sharp Instruments for Classifying Compliers and Generalizing Causal Effects''}
\runtitle{Sharp Instruments}

\begin{aug}
\author{\fnms{Edward H.} \snm{Kennedy}\ead[label=e1]{edward@stat.cmu.edu}},
\author{\fnms{Sivaraman} \snm{Balakrishnan}\ead[label=e2]{siva@stat.cmu.edu}},
\author{\fnms{Max} \snm{G'Sell}
\ead[label=e3]{mgsell@stat.cmu.edu}}

\runauthor{E.H. Kennedy et al.}

\affiliation{Carnegie Mellon University}

\end{aug}

\end{frontmatter}

\bigskip

\appendix

\section{First-stage F-test example}
\label{app:fstat}
In this section, we demonstrate empirically that the first-stage F-statistic \citep{bound1995problems, staiger1997instrumental, baiocchi2014instrumental} does not capture the sharpness of an instrument. 
Recall that the so-called first-stage F-statistic is the test statistic for a Wald test of the hypothesis that $\gamma(\bx)=0$ in the model
$$ \E(A \mid \bX, Z) = \lambda_0(\bX) + \gamma(\bX) Z . $$
The above formulation is nonparametric. However, in practice these tests are most commonly based on linear models without interactions, so that one assumes $\gamma(\bx)=\beta_1 \in \R$ and $\lambda_0(\bx) = \beta_0^\T \bx$ for $\beta_0 \in \R^p$. Whenever $\gamma(\bx)$ is assumed to be indexed by a finite-dimensional parameter, it is straightforward to also account for interactions. \\

Consider the following example. Suppose $X \sim \text{Unif}(0,1)$, and we have two putative instruments $(Z_1,Z_2)$, both with $Z_j \ind X$. Further suppose the instruments satisfy
$$ \E(A \mid X, Z_1) = 0.2 X + 0.25 Z_1  $$
$$ \E(A \mid X, Z_2) = 0.3 X + Z_2 (0.5 X) $$
so that
$$ \gamma_1(x)=0.25 \ \text{ and } \ \gamma_2(x)=0.5 x . $$
The instruments $Z_1$ and $Z_2$ are equally strong, since for both the proportion of compliers is $25\% = \E(0.5 X)$. However $Z_2$ is clearly a sharper instrument: the sharpness is $0\%$ for $Z_1$ but $\{(1-0.75^2)/4 - 0.25^2\}/(0.25 - 0.25^2) = 25\%$ for $Z_2$. \\

Nonetheless, when $n=1000$, the first-stage F-statistic is \textit{larger} on average for $Z_1$ than for $Z_2$, regardless of whether $\gamma_2(x)$ is modeled correctly. Specifically, based on 1000 simulations, the average F-statistic value is 101.5 for $Z_1$, but is 99.4 for $Z_2$ without testing the interaction (so the model is misspecified), and only 56.3 when also testing the interaction (so the model is correct). This relative ordering persists regardless of sample size. \\

R code for this simulation is given below:

\begin{verbatim}
library(AER); set.seed(1000)
nsim <- 1000; res <- matrix(nrow=nsim,ncol=3)
for (i in 1:nsim){ n <- 1000
  x <- runif(n); z <- rbinom(n,1,.5)
  a <- rbinom(n,1, 0.2*x + 0.25*z)
  fs <- glm(a ~ x+z); fn <- glm(a ~ x)
  res[i,1] <- waldtest(fs,fn, vcov=vcovHC(fs, type="HC0"))$F[2]
  a <- rbinom(n,1, 0.3*x + z*(0 + 0.5*x))
  fs <- glm(a ~ x+z); fn <- glm(a ~ x)
  res[i,2] <- waldtest(fs,fn, vcov=vcovHC(fs, type="HC0"))$F[2]
  fs <- glm(a ~ x*z); fn <- glm(a ~ x)
  res[i,3] <- waldtest(fs,fn, vcov=vcovHC(fs, type="HC0"))$F[2] }
apply(res,2,mean)
\end{verbatim}

\section{Proofs of Propositions \& Technical Lemmas}

\begin{proof}[Proof of Proposition~\ref{errid}]
Since $\E(C \mid \bX)=\gamma$ under Assumptions 1--3 and 5, we have 
$$\mathcal{E}(h) = \E\{ C(1-h) + (1-C)h \} = \E\{ \gamma(1-h) + (1-\gamma)h \}$$
where the last equality follows by iterated expectation.
\end{proof}

\medskip

\begin{proof}[Proof of Proposition~\ref{prop:hopt}]
This follows from the proof of Theorem \ref{hsthm} after noting that
$$ \ell(g) = 1 - \frac{\E(\gamma g)}{\E(g)} = 1 + \frac{\mathcal{E}(g) - \E(\gamma + g)}{2\E(g)} $$
so that minimizing $\ell(g)$ is equivalent to minimizing $\mathcal{E}(g)$ when $\E(g)=t$ is fixed.
\end{proof}

\medskip

In what follows, we give two lemmas used in the main paper. \\

First we provide a result used in the proofs of Theorems \ref{hqesterr} and \ref{betahat}. 
\begin{lemma} \label{ineqlem}
Let $\widehat{f}$ and $f$ take any real values. Then
$$ | \one(\widehat{f}>0) - \one(f>0) | \leq \one( |f| \leq | \widehat{f}-f| ) $$
\end{lemma}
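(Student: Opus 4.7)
The plan is to argue by cases on when the left-hand side equals $1$, since otherwise the inequality is trivially $0 \leq \one(\cdots)$. The left-hand side $|\one(\widehat{f}>0) - \one(f>0)|$ equals $1$ precisely when the two indicators disagree, i.e., when exactly one of $\widehat{f}>0$ and $f>0$ holds, and equals $0$ otherwise. So it suffices to show that in each of the two disagreement cases, the event $\{|f| \leq |\widehat{f}-f|\}$ occurs, which would force the right-hand side to be $1$ as well.

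First I would handle the case $\widehat{f} > 0$ and $f \leq 0$. Here $|f| = -f$, and since $\widehat{f} > 0$ we have $\widehat{f} - f > -f = |f| \geq 0$, which gives $|\widehat{f}-f| \geq |f|$. Next I would handle the case $\widehat{f} \leq 0$ and $f > 0$. Here $|f| = f$, and $f - \widehat{f} \geq f = |f|$, again giving $|\widehat{f}-f| \geq |f|$. Combining the two cases completes the verification.

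There is no real obstacle here; this is a routine case analysis on the signs of $\widehat{f}$ and $f$, and the inequality can be written in one or two lines once the cases are laid out. The only thing to be slightly careful about is the choice of which inequalities in $\widehat{f} > 0$ vs.\ $\widehat{f} \geq 0$ (and similarly for $f$) are strict, to match the strict inequalities in the indicator $\one(\cdot>0)$; I would use the convention above (strict on $\widehat{f}$ or $f$, non-strict on the other) so that the two disagreement cases are complementary and exhaustive.
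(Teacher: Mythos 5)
Your proof is correct and takes essentially the same approach as the paper: both reduce to showing that disagreement of the two indicators forces $|f| \leq |\widehat{f}-f|$, the paper via the identity $|\widehat{f}|+|f|=|\widehat{f}-f|$ for opposite signs and you via a direct two-case check. Your version is, if anything, slightly more careful about the boundary cases where one of the quantities equals zero.
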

\begin{proof}
This follows since 
\begin{align*}
| \one(\widehat{f}>0) - \one(f>0) | &= \one(\widehat{f},f \ \text{have opposite sign})
\end{align*}
and if $\widehat{f}$ and $f$ have opposite sign then
$$ |\widehat{f}| + |f| = | \widehat{f}-f| $$
which implies $|f| \leq |\widehat{f}-f|$. Therefore whenever $| \one(\widehat{f}>0) - \one(f>0) |=1$ it must also be the case that $\one( |f| \leq | \widehat{f}-f| ) =1$, which yields the result.
\end{proof}

Here we provide a lemma  used to prove Theorems \ref{thetahat}--\ref{betahat}. (Note that the $f$ in Lemma \ref{splitlem} here is unrelated to the $f$ in Theorem \ref{thetahat}.) \\ 

\begin{lemma} \label{splitlem}
Let $\widehat{f}(\bo)$ be a function estimated from a sample $\bO^N=(\bO_{n+1},\ldots,\bO_N)$, and let $\Pn$ denote the empirical measure over $(\bO_1,\ldots,\bO_n)$, which is independent of $\bO^N$. Then 
$$ (\Pn-\Pb) (\widehat{f}-f) = O_\Pb\left( \frac{ \| \widehat{f}-f \| }{\sqrt{n}}  \right) . $$ 
\end{lemma}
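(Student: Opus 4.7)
The plan is to exploit sample splitting by conditioning on the auxiliary sample $\bO^N$, so that $\widehat f$ becomes a fixed (non-random) function and $(\Pn - \Pb)(\widehat f - f)$ reduces to a mean-zero average of i.i.d.\ terms. Set $T_n = (\Pn - \Pb)(\widehat f - f) = \frac{1}{n}\sum_{i=1}^n \{ (\widehat f - f)(\bO_i) - \Pb(\widehat f - f)\}$. Since $\bO^N \ind (\bO_1,\ldots,\bO_n)$, conditional on $\bO^N$ the summands are i.i.d.\ mean-zero.

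First I would compute the conditional second moment: by independence,
\begin{equation*}
\E(T_n^2 \mid \bO^N) = \frac{1}{n}\, \var\!\big((\widehat f - f)(\bO_1) \,\big|\, \bO^N\big) \leq \frac{1}{n}\, \Pb\{(\widehat f - f)^2\} = \frac{1}{n}\, \|\widehat f - f\|^2,
\end{equation*}
where the last equality uses the definition of the $L_2(\Pb)$ norm and the fact that $\widehat f$ is fixed given $\bO^N$.

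Next I would apply the conditional Chebyshev inequality: for any $t>0$,
\begin{equation*}
\Pb\!\left( \sqrt{n}\, |T_n| > t\, \|\widehat f - f\| \,\Big|\, \bO^N \right) \leq \frac{n\, \E(T_n^2 \mid \bO^N)}{t^2\, \|\widehat f - f\|^2} \leq \frac{1}{t^2}.
\end{equation*}
Integrating over $\bO^N$ gives $\Pb(\sqrt{n}\,|T_n| > t\,\|\widehat f-f\|) \leq 1/t^2$, which yields $\sqrt{n}\,T_n / \|\widehat f-f\| = O_\Pb(1)$, i.e.\ the claimed rate. No step looks like a serious obstacle; the only mild subtlety is the case $\|\widehat f-f\|=0$, which is handled trivially since then $T_n=0$ almost surely on that event.
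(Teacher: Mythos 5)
Your proposal is correct and follows essentially the same route as the paper's proof: condition on the auxiliary sample so that $\widehat f$ is fixed, note the conditional mean is zero, bound the conditional variance by $\|\widehat f - f\|^2/n$, and conclude via a conditional Chebyshev inequality integrated over $\bO^N$. The only addition is your (valid) remark about the degenerate case $\|\widehat f - f\|=0$, which the paper leaves implicit.
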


\begin{proof}
First note that, conditional on $\bO^N$, the term in question has mean zero since
$$ \E\Big\{ \Pn(\widehat{f}-f) \Bigm| \bO^N \Big\}  = \E(\widehat{f}-f \mid \bO^N) = \Pb(\widehat{f}-f) . $$
The conditional variance is
\begin{align*}
\var\Big\{ (\Pn-\Pb) (\widehat{f}-f) \Bigm| \bO^N \Big\} &=  \var\Big\{ \Pn(\widehat{f}-f) \Bigm| \bO^N \Big\} = \frac{1}{n} \var(\widehat{f}-f \mid  \bO^N ) \leq \|\widehat{f}-f\|^2 /n .
\end{align*}
Therefore using Chebyshev's inequality we have
\begin{align*}
\Pb\left\{ \frac{ | (\Pn-\Pb)(\widehat{f}-f) | }{ \| \widehat{f}-f \| / \sqrt{n} } \geq t \right\} &= \E\left[ \Pb\left\{ \frac{ | (\Pn-\Pb)(\widehat{f}-f) | }{ \| \widehat{f}-f \| / \sqrt{n} } \geq t \Bigm| \bO^N \right\} \right] \leq \frac{1}{t^2} .
\end{align*}
Thus for any $\epsilon>0$ we can pick $t=1/\sqrt{\epsilon}$ so that the probability above is no more than $\epsilon$, which yields the result.
\end{proof}

\section{Proofs of Theorems} \label{sec:thm57}

\begin{proof}[Proof of Theorem~\ref{hsthm}]
Let $h: \mathcal{X} \mapsto \{0,1\}$ be an arbitrary classifier that is strength-calibrated so that $\E(h)=\mu$. If $h \neq h_q=\one\{ \gamma > F^{-1}(1-\mu)\}$ then we can view $h$ as moving some mass away from a region $R_1$ above $F^{-1}(1-\mu)$ to a region below, $R_0$. In particular we can always write
$$ h = \one\{ \gamma > F^{-1}(1-\mu) \} + (f_1 -1) \one(\gamma \in R_1) + f_0 \one(\gamma \in R_0)  $$
for $f_j: \mathcal{X} \mapsto \{0,1\}$, since for any classifier we could just pick $R_j=\{ \gamma : h_q=j\}$ so that $R_0 \cup R_1 = [0,1]$, and let $f_j=h$ for $j \in \{0,1\}$.  \\

Based on the above decomposition of $h$, the strength-calibration constraint that $\E(h)=\mu$ implies 
\begin{equation} \label{eq:constraint}
\E(f_0 \mid \gamma \in R_0) \Pb( \gamma \in R_0) = \E(1-f_1 \mid \gamma \in R_1) \Pb( \gamma \in R_1) . 
\end{equation}
We now define $R_{1,\min}$ and $R_{0,\max}$ as,
\begin{align*}
R_{1,\min} \equiv \inf \{ \gamma: \gamma \in R_1\} \geq \sup \{ \gamma: \gamma \in R_0 \} \equiv R_{0,\max},
\end{align*}
and note that 
\begin{align*}
\frac{\mathcal{E}_q - \mathcal{E}(h)}{2} &=  \E\{\gamma (1- h_q) \} -   \E\{\gamma (1- h) \} =  \E\{ \gamma (h-h_q) \} \\
&=  \E\{ \gamma (f_1 -1) \one(\gamma \in R_1) + \gamma f_0 \one(\gamma \in R_0)  \} \\
&=  \E( \gamma f_0 \mid \gamma \in R_0) \Pb(\gamma \in R_0)  - \E\{ \gamma(1-f_1) \mid \gamma \in R_1\} \Pb(\gamma \in R_1) \\
&\leq  R_{0,\max} \E( f_0 \mid \gamma \in R_0) \Pb(\gamma \in R_0)  - R_{1,\min} \E(1-f_1 \mid \gamma \in R_1) \Pb(\gamma \in R_1) 
\end{align*}
which is non-positive by the constraint \eqref{eq:constraint}. 

The expression for the error $\mathcal{E}_s$ follows since  $\E(h_s \mid \bX)=\gamma$, so that we have $\mathcal{E}(h_s) = \E(\gamma + h_s - 2  \gamma h_s) = 2 \E(\gamma - \gamma^2)$ by iterated expectation. 

To see why $h_s$ is the unique classifier that is strength-calibrated and distribution-matched, note that these properties imply for any $h$ satisfying them that
$$ \E(h \mid \bX) = \gamma(\bX) . $$
This implies that $h$ must be stochastic, i.e., $h: \mathcal{X} \times \mathcal{U} \mapsto \{0,1\}$ for $\mathcal{U}$ the support of some random variable $U$. But the only binary random variable with mean $\gamma$ is Bernoulli, yielding the classifier $h_s$. 
\end{proof}

\medskip

\begin{proof}[Proof of Theorem~\ref{hqesterr}]
Here and throughout we let $\Pb(\widehat{f})= \E\{\widehat{f}(\bO) \mid \bO_1,\ldots,\bO_n\}$ denote expectations over a new observation $\bO$, conditional on the data. For the quantile classifier result, note that
\begin{align*}
|\mathcal{E}(\widehat{h}_q) - \mathcal{E}_q | &= | \Pb\{ ( \gamma + \widehat{h}_q - 2 \gamma \widehat{h}_q) - ( \gamma + {h}_q - 2 \gamma {h}_q) \} | \\
&\leq \Pb\{|1-2\gamma| | \one(\widehat\gamma > \widehat{q}) - \one(\gamma > q) | \} \\
&\leq 2 \Pb\{ (| \gamma - q | + | q -1/2| ) \one( | \gamma - q| \leq | \widehat\gamma - \gamma| + |\widehat{q}-q| ) \} \\
& \leq 2 \Pb\{ ( | \widehat\gamma - \gamma | + |\widehat{q}-q| + |q - 1/2| ) \one( | \gamma - q| \leq | \widehat\gamma - \gamma| + |\widehat{q}-q| ) \}
\end{align*}
where the third line follows by the triangle inequality and Lemma \ref{ineqlem}. Now under the margin assumption $\Pb(|\gamma - q | \leq t) \lesssim t^\alpha$ we have
\begin{align*}
|\mathcal{E}(\widehat{h}_q) - \mathcal{E}_q | &\lesssim ( \| \widehat\gamma - \gamma \|_\infty + |\widehat{q}-q| )^\alpha .
\end{align*}

For the stochastic classifier
\begin{align*}
| \mathcal{E}(\widehat{h}_s) - \mathcal{E}_s | &= | \Pb\{ (\gamma + \widehat{h}_s - 2 \gamma \widehat{h}_s)  - 2 (\gamma-\gamma^2) \} | \\
&= | \Pb\{ (\gamma + \widehat\gamma - 2 \gamma \widehat\gamma)  - 2 (\gamma-\gamma^2) \} | \\
&=  | \Pb\{ (\widehat\gamma-\gamma) (1-2\gamma)  \} | \leq \| 1-2\gamma \| \| \widehat\gamma - \gamma \|
\end{align*}
where the second equality follows by iterated expectation since $\Pb(\widehat{h}_s \mid \bX) = \widehat\gamma$, and the last inequality  by Cauchy-Schwarz. The last line yields the result since 
$$ \| 1 - 2\gamma \| = \sqrt{ \E\{ (1-2\gamma)^2\} } = \sqrt{ 1 - 4 \E(\gamma - \gamma^2)} = \sqrt{1-2\mathcal{E}_s} . $$
\end{proof}

\medskip

\begin{proof}[Proof of Theorem~\ref{thetahat}]
Note that since $\E(h_s \mid \bX) = \gamma$, we have $\Pb(fh_s) = \Pb(f\gamma)$, and therefore $|\widehat\theta-\theta|$ equals
\begin{align*}
\left| \frac{\Pn(f\widehat{h}_s) }{\Pn(\widehat{h}_s) } - \frac{ \Pb(f\gamma)}{\Pb(\gamma) } \right| &= 
\Pb(\gamma)^{-1} \left| \widehat{\theta} \Big\{ \Pb(\gamma) - \Pn(\widehat{h}_s) \Big\} + \left\{\Pn(f\widehat{h}_s) -   \Pb(f\gamma) \right\}\right| \\
&= \Pb(\gamma)^{-1} \left| \widehat{\theta}  \Big\{ (\Pb - \Pn) (\widehat{h}_s) + \Pb (\gamma - \widehat{h}_s) \Big\} \right.\left.+ \Big\{ (\Pn - \Pb) (f\widehat{h}_s) + \Pb (f (\widehat{h}_s - \gamma)) \Big\} \right| \\
&=  \Pb(\gamma)^{-1} \left| \widehat{\theta}  \Big\{ (\Pb - \Pn) (\widehat{h}_s) + \Pb (\gamma - \widehat{\gamma}) \Big\} \right. \left.+ \Big\{ (\Pn - \Pb) (f\widehat{h}_s) + \Pb (f (\widehat{\gamma} - \gamma)) \Big\} \right| 
\end{align*}
By Lemma \ref{splitlem} and the fact that $f$ and $\widehat{h}_s$ are bounded we have that,
\begin{align*}
 (\Pb - \Pn) (\widehat{h}_s) = O_\Pb\left( \frac{1}{ \sqrt{n} }\right)~\text{and}~
 (\Pn - \Pb) (f\widehat{h}_s) = O_\Pb\left( \frac{1}{ \sqrt{n} }\right).
\end{align*}
Furthermore, by the Cauchy-Schwarz inequality we have that,
\begin{align*}
\Pb (\gamma - \widehat{\gamma}) \leq \|\widehat{\gamma} - \gamma\|,~\text{and}~\Pb (f (\widehat{\gamma} - \gamma)) \leq \|f\|_{\infty} \|\widehat{\gamma} - \gamma\|.
\end{align*}
By strong monotonicity $\Pb(\gamma) \geq \epsilon > 0$. Finally we observe that,
\begin{align*}
\widehat{\theta} = \frac{\Pn(f\widehat{h}_s) }{\Pn(\widehat{h}_s) } \leq \|f\|_{\infty},
\end{align*}
which is bounded by assumption, and putting these together we obtain the desired result.
\end{proof}

\medskip

\begin{proof}[Proof of Theorem~\ref{thm:bounds}]
For any subgroup $g$ we have for $\beta(g) = \E(Y^{a=1}-Y^{a=0} \mid g=1)$ and under Assumptions 1--5 that
\begin{align}
\beta(g) &= \E\Big[ (Y^{a=1}-Y^{a=0})  \Big\{ (A^{z=1}-A^{z=0}) + (1-A^{z=1}) + A^{z=0} \Big\} \Bigm| g=1\Big] \nonumber \\
&= \E\Big[ (Y^{a=1}-Y^{a=0}) \Big\{ \one(A^{z=1}>A^{z=0}) \\
& \hspace{.4in} + \one(A^{z=1}=A^{z=0}=0) + \one(A^{z=1}=A^{z=0}=1) \Big\} \Bigm| g=1 \Big] \nonumber \\
&= \E \Big[ \{ \E(Y \mid X,Z=1) - \E(Y \mid X, Z=0) \} + Y^{a=1}(1-A^{z=1}) \nonumber \\
& \hspace{.4in} - \E\{Y(1-A) \mid X,Z=1\} + \E(YA \mid X,Z=0) - Y^{a=0} A^{z=0} \Bigm| g=1 \Big] \nonumber \\
&= \E \Big[ \E(YA \mid X,Z=1) - \E\{Y(1-A) \mid X,Z=0\} \nonumber \\
& \hspace{.4in} + \E(Y^{a=1} \mid A^{z=1}=0, h=1) \E(1-A \mid X,Z=1) \label{eq:unident} \\
& \hspace{.4in} - \E(Y^{a=0} \mid A^{z=0}=1, h=1) \E(A \mid X,Z=0) \Bigm| g=1 \Big] \nonumber . 
\end{align} 
where the first equality follows by definition, the second by monotonicity, the third using standard IV identification and that
$$ Y^{z=1} (1-A^{z=1}) = Y^{z=1,a=A^{z=1}} (1-A^{z=1}) = Y^{z=1,a=0} (1-A^{z=1})  = Y^{a=0} (1-A^{z=1})$$
and similarly
$$ Y^{z=0} A^{z=0} = Y^{z=0,a=A^{z=0}} A^{z=0} = Y^{z=0,a=1} A^{z=0}  = Y^{a=1} A^{z=0} $$
(by consistency and the exclusion restriction), and the fourth by unconfoundedness of $Z$ and iterated expectation. 

Note the terms $\E(Y^a \mid A^{z=a}=1-a, g=1) $ in the last two lines of \eqref{eq:unident} are not identified. If without loss of generality $\Pb(Y \in [0,1])=1$, then plugging in 0 and 1 for these two terms shows that $\beta(g)$ is bounded above by
$$ \beta_u(g)  = \E \Big[ \E(YA + 1-A \mid X,Z=1) - \E\{Y(1-A) \mid X,Z=0\}  \Bigm| g=1 \Big] $$
and below by
$$ \beta_l(g) = \E \Big[ \E(YA \mid X,Z=1) - \E\{Y(1-A) + A \mid X,Z=0\}  \Bigm| g=1 \Big] . $$
\end{proof}

\medskip

\begin{proof}[Proof of Theorem \ref{betahat}]
Here we use similar logic that we develop in more detail in the proof of Theorem~\ref{psihat}.
To ease notation we give results for the case where there are two independent samples of size $n$, one of which is used solely for nuisance estimation. Then the logic from the proof of Theorem \ref{psihat} can be applied to analyze the actual proposed estimator, which randomly splits the sample, estimates nuisance functions on $K-1$ folds and evaluates the estimator on the held-out $K^{th}$ fold, and then swaps and averages. Thus in this proof the estimator is given by
$$ \widehat\beta_j(\widehat{h}_q) = \Pn\left[\{ \varphi_1(V_1;\boldsymbol{\widehat\eta}) - \varphi_0(V_0; \boldsymbol{\widehat\eta}) \} \widehat{h}_q(\bX; \boldsymbol{\widehat\eta}) \right] / \widehat\mu $$
where $(\boldsymbol{\widehat\eta}, \widehat\nu, \widehat\mu)$ are constructed in the independent sample, and we have removed the $j$ subscripts on $\nu$ and variables $V$ to ease notation. 

Let $\varphi_q = \{ \varphi_1(V_1;\boldsymbol{\eta}) - \varphi_0(V_0; \boldsymbol{\eta}) \} {h}_q(\bX; \boldsymbol{\eta})$ with $\widehat\varphi_q = \{ \varphi_1(V_1;\boldsymbol{\widehat\eta}) - \varphi_0(V_0; \boldsymbol{\widehat\eta}) \} \widehat{h}_q(\bX; \boldsymbol{\widehat\eta})$ the corresponding estimated version, and let $\phi_\mu=\varphi_1(A; \boldsymbol{\eta})-\varphi_0(A; \boldsymbol{\eta})$ as in the proof of Theorem \ref{psihat}. Then for $\beta_j=\beta_j(h_q)$ we have
\begin{align*}
\widehat\beta_j(h_q) - \beta_j &= \frac{1}{\Pn\widehat\phi_\mu} \left\{ (\Pn \widehat\varphi_q - \Pb \varphi_q) - \beta_j (\Pn \widehat\phi_\mu - \Pb \phi_\mu) \right\}
\end{align*}
By the results in Theorem \ref{psihat}, if $\widehat\pi_z$ is bounded away from zero, then
$$ \Pn \widehat\phi_\mu - \Pb \phi_\mu = (\Pn - \Pb) \phi_\mu + o_\Pb(1/\sqrt{n}) $$
if $\| \widehat\pi_1 - \pi_1 \| + \max_z \| \widehat\lambda_z - \lambda_z \| = o_\Pb(1)$ and $\| \widehat\pi_1 - \pi_1 \| \left(\max_z \| \widehat\lambda_z - \lambda_z \| \right) =o_\Pb(1/\sqrt{n})$. By the same exact logic as in Theorem \ref{psihat} we also have by Lemma \ref{splitlem} that
$$ \Pn  \widehat\varphi_q - \Pb \varphi_q = (\Pn-\Pb) \varphi_q + \Pb(\widehat\varphi_q - \varphi_q) +  o_\Pb(1/\sqrt{n}) $$
as long as $\| \widehat\varphi_q - \varphi_q \| \lesssim \| (\widehat\varphi_1-\widehat\varphi_0) - (\varphi_1-\varphi_0) \| + \| \widehat{h}_q - h_q \| = o_\Pb(1)$, which follows if
$$ \| \widehat\pi_1 - \pi_1 \| + \max_z \| \widehat\nu_z - \nu_z \| + \Pb( \widehat{h}_q \neq h_q) = o_\Pb(1) . $$
Now we have
\begin{align*}
\Pb(\widehat\varphi_q - \varphi_q) &= \Pb\{ (\widehat\varphi_1-\widehat\varphi_0) \widehat{h}_q - (\varphi_1-\varphi_0) h_q \} \\
&= \Pb\Big[ \{ (\widehat\varphi_1-\widehat\varphi_0) - (\varphi_1-\varphi_0)  \} \widehat{h}_q + (\varphi_1-\varphi_0) (\widehat{h}_q - h_q) \Big] .
\end{align*} 
From Theorem \ref{psihat}, the first term in the last line above will be $o_\Pb(1/\sqrt{n})$ if 
$$ \| \widehat\pi_1 - \pi_1 \| \left(\max_z \| \widehat\nu_z - \nu_z \| \right) =o_\Pb(1/\sqrt{n}) $$
and the second equals
\begin{align*}
\Pb\{ \nu (\widehat{h}_q - h_q) \} &\leq \Pb\{ \nu | \one(\widehat\gamma > \widehat{q}) - \one(\gamma > q) | \} \\
&\lesssim \Pb( | \gamma - {q} | \leq | \widehat\gamma - \gamma | + | \widehat{q} - q | ) \\
&\lesssim ( \| \widehat\gamma - \gamma \|_\infty + | \widehat{q} - q | )^\alpha
\end{align*}
where we used Lemma \ref{ineqlem} with the triangle inequality (and that $\nu$ is bounded) in the second line, and the margin assumption $\Pb( | \gamma - q| \leq t ) \lesssim t^\alpha$ in the third line. This yields the result. 
\end{proof}

\medskip

 \begin{proof}[Proof of Theorem~\ref{sharprel}]
Here we show that
$$ \mathcal{E}(h) = 2 \mu(1-\mu)\{ 1 - \psi(h) \} + (1 - 2\mu) (\E h - \mu) \ , 
\  \ell(h) = (1-\mu)  \{1 - \mu \psi(h) / \E h \}$$
from which Theorem \ref{sharprel} follows by the fact that $\E(h_q)=\mu$. 

First note that since $\psi(h)=\cov(C,h)/\var(C)=\E\{(\gamma - \mu)h\}/(\mu-\mu^2)$ we have
\begin{align*}
2\mu(1&-\mu)\{ 1 - \psi(h) \} + (1-2\mu)(\E h - \mu) \\
&= 2\{ \mu - \mu^2 - \E(\gamma h) + \mu \E(h) \} + (\E h - \mu - 2\mu \E h + 2 \mu^2) \\
&= \mu + \E h - 2 \E(\gamma h) = \mathcal{E}(h) .
\end{align*}

Similarly,
\begin{align*}
(1-\mu) \left\{ 1 - \frac{ \mu}{\E(h)} \psi(h) \right\} &= (1-\mu) \left\{ 1 - \frac{\E(\gamma h ) - \mu \E(h)}{(1-\mu) \E(h) } \right\} \\
&= 1 - \E(\gamma \mid h=1) = \ell(h) .
\end{align*}
\end{proof}

\medskip

\begin{proof}[Proof of Theorem \ref{psihat}]
First consider estimation of $\widehat\mu$. We let $\phi_{\mu}=\varphi_1(A;\boldsymbol{\eta})-\varphi_0(A;\boldsymbol{\eta})$ and $\widehat\phi_{\mu,\text{-}b}=\varphi_1(A;\boldsymbol{\widehat\eta}_{\text{-}b})-\varphi_0(A;\boldsymbol{\widehat\eta}_{\text{-}b})$ to ease notation. Note we can write  
$$ \widehat\mu = \sum_{b=1}^K \Pn\{ \widehat\phi_{\mu,\text{-}b} \ \one(B=b) \} \ \text{ and } \ \mu = \E(\phi_\mu)= \sum_{b=1}^K \Pb\{ \phi_{\mu} \ \one(B=b) \} . $$
Therefore
\begin{equation} \label{eq:muexp}
\widehat\mu - \mu = (\Pn-\Pb) \phi_\mu  + \sum_{b=1}^K \left[ (\Pn-\Pb) \{  ( \widehat\phi_{\mu,\text{-}b} - \phi_\mu ) \one(B=b) \}  + \Pb\{  ( \widehat\phi_{\mu,\text{-}b} - \phi_\mu ) \one(B=b) \} \right] .
\end{equation}
Now note that 
\begin{align*}
\| ( \widehat\phi_{\mu,\text{-}b} - \phi_\mu &) \one(B=b) \| \leq \| \widehat\phi_{\mu,\text{-}b} - \phi_\mu \| \lesssim \| \widehat\phi_{\mu} - \phi_\mu \|   \\
&= \left\| \frac{(2Z-1)(A-\widehat\lambda_Z)}{\pi_Z \widehat\pi_Z} ( \pi_Z - \widehat\pi_Z ) + \frac{(2Z-1)}{\pi_Z} (\lambda_Z - \widehat\lambda_Z) + (\widehat\gamma - \gamma) \right\| \\
&\lesssim \| \widehat\pi_1 - \pi_1 \| + \max_z \| \widehat\lambda_z - \lambda_z \|
\end{align*}
where the last result of the first line follows because $K$ is fixed so $n \lesssim n/K$, and the last line follows since $\widehat\pi_z$ is bounded away from zero. Therefore the first term inside the sum in \eqref{eq:muexp} is $o_\Pb(1/\sqrt{n})$ by Lemma~\ref{ineqlem}, since $\| \widehat\pi_1 - \pi_1 \| + \max_z \| \widehat\lambda_z - \lambda_z \| = o_\Pb(1)$ by assumption. For the second term inside the sum in  \eqref{eq:muexp} we similarly have
\begin{align*}
| \Pb\{  ( \widehat\phi_{\mu,\text{-}b} - \phi_\mu ) \one(B=b) \} | &\lesssim |\Pb( \widehat\phi_{\mu} - \phi_\mu ) | \\
&= | \Pb \{ \widehat\pi_1^{-1} (\pi_1 - \widehat\pi_1) (\lambda_1 - \widehat\lambda_1) + \widehat\pi_0^{-1} (\pi_1 - \widehat\pi_1) (\lambda_0 - \widehat\lambda_0) \} | \\
&\lesssim \| \widehat\pi_1 - \pi_1 \| \left(\max_z \| \widehat\lambda_z - \lambda_z \| \right) .
\end{align*}
Therefore we have 
$$ \widehat\mu - \mu = (\Pn-\Pb) \phi_\mu + O_\Pb\left(  \| \widehat\pi_1 - \pi_1 \| \left( \max_z \| \widehat\lambda_z - \lambda_z \|\right) \right) + o_\Pb(1/\sqrt{n})  . $$

Now let $\phi_{\xi}=\phi_\mu(\bO;\boldsymbol{\eta}) h_q(\bX)$ and $\widehat\phi_{\xi,\text{-}b}=\phi_\mu(\bO;\boldsymbol{\widehat\eta}_{\text{-}b}) \widehat{h}_{q,\text{-}b}(\bX)$, and consider the estimator $\widehat\xi=\Pn(\widehat\phi_{\xi,\text{-}B})$ of $\xi=\E( \gamma h_q)=\Pb(\phi_\xi)$. Therefore we can write $\widehat\xi-\xi$ as
\begin{align} \label{eq:xiexp}
 (\Pn-\Pb) (\widehat\phi_{\xi,\text{-}B} - \phi_\xi) + (\Pn-\Pb) \phi_\xi + \Pb\{ \widehat{h}_{q,\text{-}B}(\widehat\phi_{\mu,\text{-}B} - \phi_\mu) \} + \Pb\{ \phi_\mu(\widehat{h}_{q,\text{-}B} - h_q) \} .
\end{align}
For the first term, noting that $(\Pn-\Pb)(\widehat\phi_{\xi,\text{-}B} - \phi_\xi) = \sum_{b=1}^K (\Pn-\Pb)\{(\widehat\phi_{\xi,\text{-}b} - \phi_\xi)\one(B=b)\}$, we have
\begin{align*}
\| ( \widehat\phi_{\xi,\text{-}b} - \phi_\xi ) \one(B=b) \| &\leq \| \widehat\phi_{\xi,\text{-}b} - \phi_\xi \| \lesssim \| \widehat\phi_{\xi} - \phi_\xi \|   \\
&= \| \widehat{h}_q (\widehat\phi_\mu - \phi_\mu) + \phi_\mu (\widehat{h}_q - h_q) \| \lesssim \| \widehat\phi_\mu - \phi_\mu \| + \| \widehat{h}_q - h_q \| \\
&\lesssim \| \widehat\pi_1 - \pi_1 \| + \max_z \| \widehat\lambda_z - \lambda_z \| + \Pb(\widehat{h}_q \neq h_q)
\end{align*}
so the first term in \eqref{eq:xiexp} is $o_\Pb(1/\sqrt{n})$ by Lemma~\ref{ineqlem}. Similarly, for the third term
\begin{align*}
\Pb\{ \widehat{h}_{q,\text{-}B} (\widehat\phi_{\mu,\text{-}B}  - \phi_\mu) \} & = \sum_{b=1}^K \Pb\{ \widehat{h}_{q,\text{-}b} (\widehat\phi_{\mu,\text{-}b}  - \phi_\mu) \} \Pb(B=b) \lesssim \Pb\{\widehat{h}_q (\widehat\phi_\mu - \phi_\mu)\} \\
&= \Pb[ \widehat{h}_q \{ \widehat\pi_1^{-1} (\pi_1 - \widehat\pi_1) (\lambda_1 - \widehat\lambda_1) + \widehat\pi_0^{-1} (\pi_1 - \widehat\pi_1) (\lambda_0 - \widehat\lambda_0) \} ] \\
&\lesssim \| \widehat\pi_1 - \pi_1 \| \left( \max_z \| \widehat\lambda_z - \lambda_z \| \right) . 
\end{align*}
For the fourth term in \eqref{eq:xiexp} we have
\begin{align} \label{eq:xilast}
\Pb\{ \phi_\mu(\widehat{h}_{q,\text{-}B} - h_q) \}  &= \Pb\{ \gamma (\widehat{h}_{q,\text{-}B} - h_q) \} = \Pb\{ (\gamma-q) (\widehat{h}_{q,\text{-}B} - h_q) \} +  q \Pb (\widehat{h}_{q,\text{-}B} - h_q) .
\end{align}
For the first term on the far right side of \eqref{eq:xilast} we have
\begin{align*}
\Pb\{ (\gamma-q) (\widehat{h}_{q,\text{-}B} - h_q) \} &= \sum_{b=1}^K \Pb\{ (\gamma-q) (\widehat{h}_{q,\text{-}b} - h_q) \} \Pb(B=b) \\
& \lesssim \Pb\{ (\gamma-q) (\widehat{h}_q - h_q) \} \leq \Pb( |\gamma-q| | \one(\widehat\gamma > \widehat{q}) - \one(\gamma > q) | ) \\
& \leq \Pb\{ |\gamma-q| \ \one( | \gamma - q| \leq | \widehat\gamma - \gamma | + | \widehat{q} - q | ) \} \\
& \leq \Pb\{ (| \widehat\gamma - \gamma | + | \widehat{q} - q |) \ \one( | \gamma - q| \leq | \widehat\gamma - \gamma | + | \widehat{q} - q | ) \} \\
& \lesssim \left( \| \widehat\gamma - \gamma \|_\infty + |\widehat{q}-q| \right)^{1+\alpha} 
\end{align*}
 where the third line follows by Lemma~\ref{ineqlem} and the triangle inequality, the fourth by the indicator condition, and the fifth by the margin condition. Now note
\begin{align*}
\widehat\mu - \mu &= \Pn( \widehat{h}_{q,\text{-}B}) - \Pb (h_q) \\
&= (\Pn-\Pb)( \widehat{h}_{q,\text{-}B}-h_q ) + (\Pn-\Pb) h_q  + \Pb( \widehat{h}_{q,\text{-}B}- h_q ) \\
&=  (\Pn-\Pb) h_q  + \Pb( \widehat{h}_{q,\text{-}B}- h_q ) + o_\Pb(1/\sqrt{n})
\end{align*}
where the last line follows since $\| \widehat{h}_q - h_q \| = o_\Pb(1)$ by the fact that $\Pb( \widehat{h}_q \neq h_q) = o_\Pb(1)$, together with Lemma \ref{ineqlem}. Therefore rearranging yields for the second term in \eqref{eq:xilast}  that
\begin{align*}
q \Pb (\widehat{h}_{q,\text{-}B} - h_q)  &= q(\widehat\mu - \mu)  - (\Pn-\Pb) h_q + o_\Pb(1/\sqrt{n}) \\
&= q(\Pn-\Pb) (\phi_\mu - h_q)  + O_\Pb\left( \| \widehat\pi_1 - \pi_1 \| \left( \max_z \| \widehat\lambda_z - \lambda_z \|\right) \right) + o_\Pb(1/\sqrt{n}) .
\end{align*}
This logic is similar to that of \citet{luedtke2016optimal}, with an additional term due to the fact that the quantile $\widehat\mu$ is estimated.

Putting this all together gives
\begin{align*}
 \widehat\mu - \mu &= (\Pn-\Pb) \phi_\mu + O_\Pb\left(  \| \widehat\pi_1 - \pi_1 \| \left( \max_z \| \widehat\lambda_z - \lambda_z \|\right) \right) + o_\Pb(1/\sqrt{n}) \\
\widehat\xi - \xi &= (\Pn-\Pb) \{ \phi_\mu h_q + q(\phi_\mu - h_q) \} + O_\Pb\left(  \| \widehat\pi_1 - \pi_1 \| \left( \max_z \| \widehat\lambda_z - \lambda_z \|\right) \right) \\
& \hspace{.4in} + O_\Pb \Big( \left( \| \widehat\gamma - \gamma \|_\infty + |\widehat{q}-q| \right)^{1+\alpha}  \Big)  + o_\Pb(1/\sqrt{n})  . 
\end{align*}

The first result of Theorem \ref{psihat} now follows since
\begin{equation} \label{eq:dm_psi}
\widehat\psi - \psi = \frac{1}{\widehat\mu(1-\widehat\mu) } \Big(\widehat\xi - \xi \Big) + \frac{ \{ \xi (\widehat\mu + \mu) - \xi - \widehat\mu \mu \}}{\widehat\mu(1-\widehat\mu) \mu(1-\mu)}  \Big(\widehat\mu-\mu \Big) 
\end{equation}
implies
\begin{align*}
 \widehat\psi - \psi &= O_\Pb\bigg(  \frac{1}{\sqrt{n}} + \| \widehat\pi_1 - \pi_1 \| \left( \max_z \| \widehat\lambda_z - \lambda_z \|\right) +  \left( \| \widehat\gamma - \gamma \|_\infty + |\widehat{q}-q| \right)^{1+\alpha} \bigg) 
\end{align*}

For the second result, note that if 
$$\| \widehat\pi_1 - \pi_1 \| \left( \max_z \| \widehat\lambda_z - \lambda_z \|\right) +  \left( \| \widehat\gamma - \gamma \|_\infty + |\widehat{q}-q| \right)^{1+\alpha} = o_\Pb(1/\sqrt{n})$$
 then
\begin{align*}
\sqrt{n} \left\{ \begin{pmatrix} \widehat\mu \\ \widehat\xi \end{pmatrix} - \begin{pmatrix} \mu \\ \xi \end{pmatrix} \right\} &= \sqrt{n} (\Pn-\Pb) \begin{pmatrix} \phi_\mu \\ \phi_\mu h_q + q(\phi_\mu - h_q) \end{pmatrix} + o_\Pb(1/\sqrt{n}) \\
&\indist N\left( 0, \cov\begin{pmatrix} \phi_\mu \\ \phi_\mu h_q + q(\phi_\mu - h_q) \end{pmatrix}  \right) . 
\end{align*}
Now by the delta method (or \eqref{eq:dm_psi}) this implies
$$ \widehat\psi - \psi = (\Pn-\Pb)  \left\{ \frac{\phi_\mu h_q + q(\phi_\mu - h_q)  - \xi }{(\mu-\mu^2)}  + \frac{ (2\mu\xi - \xi - \mu^2) }{ (\mu-\mu^2)^2 } (\phi_\mu - \mu) \right\} + o_\Pb(1/\sqrt{n}) $$
which yields the result.
\end{proof}

\section{A Modified Plug-In Quantile Classifier}
\label{app:nomargin}
\noindent In this section we describe a modified plug-in quantile classifier and show that it has small excess classification error
even without the margin condition we assumed previously.
Concretely, we suppose  
that we are given
plug-in estimates, $(\widehat{\gamma}, \widehat{q})$ and scalars $(\kappa_1,\kappa_2)$ such that:
\begin{align*}
\|\widehat{\gamma} - \gamma\|_{\infty} \leq \kappa_1,~~~~~|\widehat{q} - q| \leq \kappa_2.
\end{align*}
Our arguments can easily be modified to the setting when the upper bounds on the errors 
$(\kappa_1,\kappa_2)$ only hold with high-probability but we do not consider  this extension for simplicity.

Consider the plug-in type classifier,
\begin{align}
\label{eq:hqhatmod}
\widehat{h}_q(\bx) = \mathbb{I}(\widehat{\gamma}(\bx) \geq \widehat{q} - (\kappa_1 + \kappa_2) \mathbb{I}(\widehat{\gamma}(\bx) \geq 1/2) + (\kappa_1 + \kappa_2) \mathbb{I}(\widehat{\gamma}(\bx) < 1/2)).
\end{align}
Intuitively, this classifier modifies the quantile classifier in~\eqref{eq:hqhat} to agree with the plug-in Bayes classifier 
in a small window around the estimated quantile $\widehat{q}$, thus avoiding expensive classification errors when $\widehat{\gamma}(\bx)$ is close to $\widehat{q}$.

\begin{theorem} 
Let $\widehat{h}_q$ be the plug-in classifier defined in \eqref{eq:hqhatmod}. Then for $\widehat{h}_q$  we have that,
$$  \mathcal{E}(\widehat{h}_q) - \mathcal{E}_q  \leq 2\|\widehat\gamma - \gamma \|_1 \leq 2\|\widehat\gamma - \gamma \|_2.$$
\end{theorem}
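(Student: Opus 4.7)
The plan is to reduce the excess classification error to a pointwise bound on the integrand, following the standard classification-theoretic template used for the Bayes classifier, but adapted to $h_q$ via the modified plug-in rule. By Proposition~\ref{errid}, for any classifier $h$,
\begin{align*}
\mathcal{E}(h) = \E\bigl[\gamma(\bX) + (1 - 2\gamma(\bX))\, h(\bX)\bigr],
\end{align*}
so I can write the excess error as
\begin{align*}
\mathcal{E}(\widehat{h}_q) - \mathcal{E}_q = \E\bigl[(1 - 2\gamma(\bX))\,(\widehat{h}_q(\bX) - h_q(\bX))\bigr].
\end{align*}
It therefore suffices to establish the almost-sure pointwise inequality
\begin{align*}
(1 - 2\gamma)\,(\widehat{h}_q - h_q) \leq 2\,|\widehat{\gamma} - \gamma|,
\end{align*}
since integrating yields $\mathcal{E}(\widehat{h}_q) - \mathcal{E}_q \leq 2\|\widehat\gamma - \gamma\|_1$, and Jensen's inequality then gives $\|\widehat\gamma - \gamma\|_1 \leq \|\widehat\gamma - \gamma\|_2$.

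The pointwise inequality is trivial where $\widehat{h}_q = h_q$, so the work is in the two disagreement cases. Consider first $\widehat{h}_q = 1$ and $h_q = 0$, i.e.\ $\gamma \leq q$. The definition \eqref{eq:hqhatmod} lets the classifier predict $1$ in two subcases: either (i) $\widehat\gamma \geq 1/2$ and $\widehat\gamma \geq \widehat q - (\kappa_1+\kappa_2)$, or (ii) $\widehat\gamma < 1/2$ and $\widehat\gamma \geq \widehat q + (\kappa_1 + \kappa_2)$. In subcase (ii), the error bounds $\|\widehat\gamma - \gamma\|_\infty \leq \kappa_1$ and $|\widehat q - q| \leq \kappa_2$ force $\gamma \geq \widehat\gamma - \kappa_1 \geq \widehat q + \kappa_2 \geq q$, which combined with $\gamma \leq q$ forces $\gamma = q$, a measure-zero event under the unique quantile assumption. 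In subcase (i), I need $1 - 2\gamma \leq 2|\widehat\gamma - \gamma|$: this is immediate if $\gamma \geq 1/2$ (the left side is nonpositive), and if $\gamma < 1/2$ then $\widehat\gamma \geq 1/2 > \gamma$, so $|\widehat\gamma - \gamma| \geq 1/2 - \gamma$, which gives the bound. The symmetric case $\widehat{h}_q = 0$, $h_q = 1$ is handled analogously: the subcase with $\widehat\gamma \geq 1/2$ contradicts $\gamma > q$ via the same $\kappa$-bounds, while the subcase with $\widehat\gamma < 1/2$ yields $\gamma - 1/2 \leq |\widehat\gamma - \gamma|$ either trivially or because $\widehat\gamma$ and $\gamma$ straddle $1/2$.

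The main conceptual content, and the one step I would write most carefully, is the case analysis above: specifically, verifying that the ``crossover'' subcases (where $\widehat\gamma$ sits on the ``wrong'' side of $1/2$ relative to its modified threshold) are ruled out by the $(\kappa_1+\kappa_2)$-buffer in \eqref{eq:hqhatmod} up to a $\{\gamma = q\}$ measure-zero event. This is exactly what the buffer was designed to enforce: it shifts the threshold enough so that any disagreement between $\widehat{h}_q$ and $h_q$ can be inherited from a $1/2$-crossing of $\widehat\gamma$ and $\gamma$, recovering the standard Bayes-classifier margin trick $|\gamma - 1/2| \leq |\widehat\gamma - \gamma|$ without assuming a margin condition on $\gamma$ near $q$. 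Once this pointwise inequality is established, the remainder of the proof is a one-line integration followed by Jensen's inequality.
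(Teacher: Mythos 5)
Your proposal is correct and follows essentially the same route as the paper's proof: reduce the excess error to the pointwise quantity $(1-2\gamma)(\widehat{h}_q - h_q)$, then do a case analysis showing that the $(\kappa_1+\kappa_2)$-buffer rules out disagreements except when $\widehat\gamma$ and $\gamma$ straddle $1/2$ (or the contribution is nonpositive), which yields the bound $2|\widehat\gamma-\gamma|$. The only cosmetic difference is that you organize the cases by which branch of the modified rule fires rather than by the sign of $q-1/2$, and you are in fact slightly more careful than the paper about the boundary event $\{\gamma=q\}$.
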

\begin{proof}

Suppose we consider a point $\bx$, then the excess classification error for the point $\bx$ is given by,
\begin{align*}
 \mathcal{E}^{\bx}(\widehat{h}_q) - \mathcal{E}^{\bx}_q = (1 - 2 \gamma(\bx)) (\widehat{h}_q(\bx) - h_q(\bx)).
\end{align*} 
We will suppose that without loss of generality that the true quantile $q \geq 1/2$ (an identical argument works in case $q < 1/2$). 
Now consider the following cases:
\begin{enumerate}
\item $\gamma(\bx) \geq q$, and $\widehat{\gamma}(\bx) \geq 1/2$
\item $\gamma(\bx) < q$, and $\widehat{\gamma}(\bx) \geq 1/2$
\item $\gamma(\bx) \geq q$, and $\widehat{\gamma}(\bx) < 1/2$
\item $\gamma(\bx) < q$, and $\widehat{\gamma}(\bx) < 1/2.$
\end{enumerate}
In cases (1) and (4), $\widehat{h}_q(\bx) = h_q(\bx)$,
and so the only time we can make an excess error is in cases (2) or (3), and we deal with each of these in turn.

In case (2), we can further consider two cases, (2a) when $\gamma(\bx) \geq 1/2$ and (2b) when $\gamma(\bx) < 1/2$. 
In the first of these cases, the term $(1 - 2 \gamma(\bx)) (\widehat{h}_q(\bx) - h_q(\bx))$ is always $\leq 0$, 
and in case (2b) we have that, 
\begin{align*}
(1 - 2 \gamma(\bx)) (\widehat{h}_q(\bx) - h_q(\bx)) \leq |1 - 2 \gamma(\bx)| \leq 2 |\gamma(\bx) - \widehat{\gamma}(\bx)|,
\end{align*}
since $\gamma(\bx)$ and $\widehat{\gamma}(\bx)$ are on opposite sides of $1/2$. Similarly, 
in case (3), we have that $\gamma(\bx)$ and $\widehat{\gamma}(\bx)$ are once again 
on opposite sides of $1/2$ and the same bound holds. 
Putting these cases together we observe that for any point $\bx$,
\begin{align*}
 \mathcal{E}^{\bx}(\widehat{h}_q) - \mathcal{E}^{\bx}_q \leq 2 |\gamma(\bx) - \widehat{\gamma}(\bx)|,
\end{align*} 
and taking the expectation over the distribution of the point $\bx$ yields the result.
\end{proof}

\section{Variation independence of strength and sharpness} \label{sec:varind}

For $X \sim N(0,1)$ and $\gamma(x)=\Phi(b_0+b_1 x)$ we have
\begin{align*}
\mu &= \int \Phi(b_0+b_1 x) \phi(x) \ dx = \Phi\left( {b_0} \Big/ {\sqrt{1+b_1^2}} \right) \\
q &= \Phi\left\{b_0 + b_1 \Phi^{-1}(1-\mu) \right\} = \Phi\left( b_0 - {b_0 b_1} \Big/ {\sqrt{1+b_1^2}} \right) \\
\psi &= \frac{\E\{\gamma \one(\gamma > q)\}-\mu^2}{\mu(1-\mu)} = \frac{1}{\mu(1-\mu)} \left\{ \int_{\gamma > q} \Phi(b_0+b_1x) \phi(x) \ dx - \mu^2 \right\} \\
&= \frac{1}{\mu(1-\mu)} \left[ \int_{-\Phi^{-1}(\mu)} \Phi\left\{ \left( \sqrt{1+b_1^2} \right) \Phi^{-1}(\mu) + b_1 x \right\} \phi(x) \ dx - \mu^2 \right] .
\end{align*}
The following figure plots $\psi=\psi(\mu,b_1)$ as a function of $b_1$ and $\mu$. This plot allows one to read off what $b_1$ value is needed to ensure given strength and sharpness $(\mu,\psi)$. Then $b_0$ can be obtained using $b_0=\Phi^{-1}(\mu) \sqrt{1+b_1^2}$. 

\begin{figure}[h!]
\spacingset{1}
\centering
\includegraphics[width=.5\textwidth]{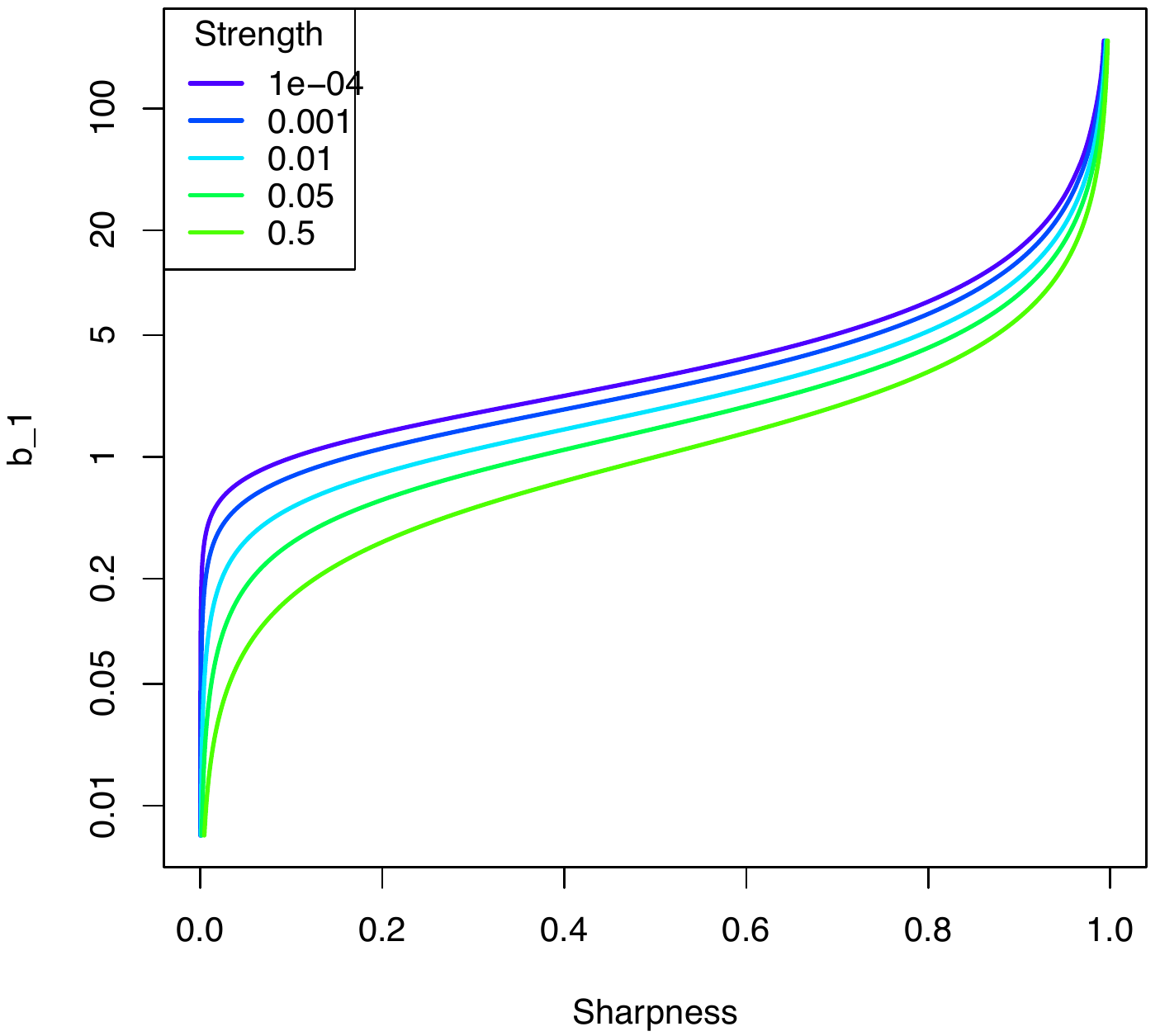}
\caption{Values of $b_1$ in $\gamma(x)=\Phi(b_0+b_1 x)$ needed to ensure given levels of sharpness and strength for $X \sim N(0,1)$.  }
\label{fig:explot}
\end{figure}

\section{Margin condition example} \label{sec:margin}

\begin{figure}[h!]
\centering
\hspace*{-.2in} \includegraphics[width=1.1\textwidth]{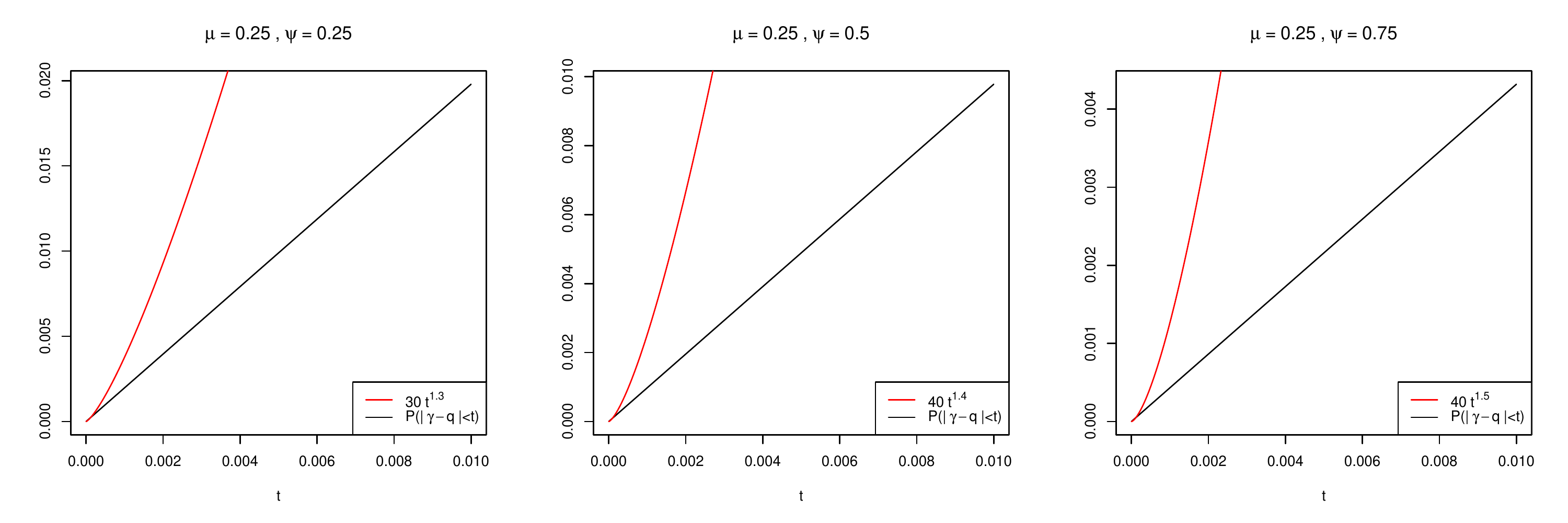}
\caption{$\Pb(| \gamma - q| \leq t)$ for the simulated example from Section \ref{sec:varind}. }
\label{fig:marginplot}
\end{figure}

\noindent In this section, we briefly investigate the margin condition for the example in Appendix~\ref{sec:varind}. Figure \ref{fig:marginplot} shows examples of $(C,\alpha)$ satisfying the margin condition~\eqref{eqn:marginsiva}, in the simulated example where $X \sim N(0,1)$ and $\gamma(x) = \Phi(b_0 + b_1 x)$, for three values of sharpness (the plots are similar when varying strength). A sharp IV with $\psi=0.75$ allows $\alpha = 1.5$. In each case, the margin parameters $(C,\alpha)$ are determined numerically by searching over a small grid of values to determine the best values for which the condition holds.

\section{Equivalence with Youden index} \label{sec:youdenequiv}

Here we show that variance explained $\cov(C,h)/\var(C)$ equals the Youden index. Note
\begin{align*}
\cov(C,h) &= \E(hC) - \mu\E(h) = \E(h \mid C=1) \mu - \mu \{ \E(h \mid C=1) \mu + \E(h \mid C=0) (1-\mu) \} \\
&= \mu (1-\mu) \{ \E(h \mid C=1) - \E(h \mid C=0) \}
\end{align*}
so the equivalence follows by the fact that $\var(C) = \mu(1-\mu)$.

\section{Logit-transformed confidence interval for sharpness}
In this section, we derive the logit-transformed 
confidence intervals for sharpness that we use in our numerical experiments.

\begin{proposition} \label{logit}
Assume the same conditions as in Theorem \ref{psihat}, and let 
$$ \varphi_\psi = \frac{\phi_\mu h_q + q(\phi_\mu - h_q)  - \xi }{(\mu-\mu^2)}  + \frac{ (2\mu\xi - \xi - \mu^2) }{ (\mu-\mu^2)^2 } (\phi_\mu - \mu) $$
denote the efficient influence function for $\psi$. Then
$$ \sqrt{n} \Big\{ \logit(\widehat\psi) - \logit(\psi) \Big\} \indist N\left(0, \var \left\{ \frac{\varphi_\psi(\bO;\boldsymbol\eta)}{\psi(1-\psi)} \right\} \right)  $$
and 
$$ \expit\left[ \logit(\widehat\psi) \pm 1.96 \sqrt{ \widehat\var \left\{ {\varphi_\psi(\bO;\boldsymbol{\widehat\eta})} / {(\widehat\psi-\widehat\psi^2)} \right\} \Big/ {n} }  \right] $$ 
is an asymptotic 95\% confidence interval for sharpness $\psi$ taking values in the unit interval. 
\end{proposition}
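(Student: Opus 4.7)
The plan is to derive this as a straightforward corollary of Theorem~\ref{psihat} via the delta method, followed by a Slutsky argument for the plug-in variance estimator. The content of the statement is essentially a transformation that guarantees confidence intervals remain in $(0,1)$.

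First, I would invoke Theorem~\ref{psihat}(2), which under the stated conditions already establishes
\begin{align*}
\sqrt{n}(\widehat\psi - \psi) \indist N\bigl(0, \var\{\varphi_\psi(\bO;\boldsymbol\eta)\}\bigr).
\end{align*}
Next I would apply the delta method with the map $\logit: (0,1) \to \R$, noting that $\logit$ is continuously differentiable on an open neighborhood of $\psi$ (which lies strictly inside $(0,1)$ under the strong monotonicity / positivity / non-degeneracy setup), with derivative $\logit'(x) = 1/\{x(1-x)\}$. Since $\psi(1-\psi)$ is a deterministic scalar, the variance scales by the square of this derivative, giving
\begin{align*}
\sqrt{n}\bigl\{\logit(\widehat\psi) - \logit(\psi)\bigr\} \indist N\!\left(0,\ \frac{\var\{\varphi_\psi(\bO;\boldsymbol\eta)\}}{\{\psi(1-\psi)\}^2}\right) = N\!\left(0,\ \var\!\left\{\frac{\varphi_\psi(\bO;\boldsymbol\eta)}{\psi(1-\psi)}\right\}\right),
\end{align*}
which is the first conclusion.

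For the confidence interval, I would replace $(\psi,\boldsymbol\eta)$ in the variance expression by $(\widehat\psi,\boldsymbol{\widehat\eta})$. Consistency of the plug-in variance follows from: (i) $\widehat\psi \inprob \psi$, which is immediate from Theorem~\ref{psihat}(1); (ii) consistency of the nuisance estimates $(\widehat\pi_z,\widehat\lambda_z,\widehat\gamma,\widehat q)$ assumed in Theorem~\ref{psihat}, so that $\varphi_\psi(\bO;\boldsymbol{\widehat\eta}) \to \varphi_\psi(\bO;\boldsymbol\eta)$ in $L_2(\Pb)$; and (iii) sample splitting plus the usual empirical-process-free argument (as in the proof of Theorem~\ref{psihat}) ensuring that $\widehat\var$ built from $\Pn$ converges in probability to the population variance. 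A Slutsky argument then yields
\begin{align*}
\frac{\sqrt{n}\{\logit(\widehat\psi) - \logit(\psi)\}}{\sqrt{\widehat\var\{\varphi_\psi(\bO;\boldsymbol{\widehat\eta})/(\widehat\psi-\widehat\psi^2)\}}} \indist N(0,1),
\end{align*}
so the standard Wald interval on the logit scale achieves asymptotic coverage $95\%$. Since $\expit$ is continuous and strictly monotone and maps $\R$ bijectively onto $(0,1)$, transforming the logit-scale interval by $\expit$ preserves coverage and guarantees the resulting interval lies in $(0,1)$.

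No step here presents a real obstacle: the asymptotic normality is already proven, and the rest is the delta method plus Slutsky. The only place that requires some care is verifying consistency of the plug-in variance estimator under the sample-splitting construction and the weak nuisance assumptions of Theorem~\ref{psihat}; this essentially reuses the cross-fitted remainder bounds already developed there, with no margin condition stronger than what is assumed for the first-order result needed.
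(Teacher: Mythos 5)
Your proposal is correct and matches the paper's argument, which is exactly the delta method applied to $\logit$ (using $\partial\,\logit(\psi)/\partial\psi = 1/(\psi-\psi^2)$) followed by monotonicity of $\expit$; the paper states this in a single sentence. Your additional care about consistency of the plug-in variance estimator under sample splitting is a reasonable elaboration of a step the paper leaves implicit, not a different route.
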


\noindent Proposition \ref{logit} follows from the delta method noting that $\partial \logit(\psi)/\partial \psi = 1/(\psi-\psi^2)$, together with the fact that $\expit(\cdot)$ is a monotone transformation.

\section{Simulation code} \label{sec:code}
In this section, we provide all the necessary code to reproduce our simulations in Section~\ref{sec:sims}.

\begin{verbatim}
install.packages("devtools"); library(devtools)
install_github("ehkennedy/npcausal"); library(npcausal)
expit <- function(x){exp(x)/(1+exp(x))}
logit <- function(x){log(x/(1-x))}

# set parameters
set.seed(2000); nsim <- 500; i <- 1
n <- 500; mu <- 0.3; psi <- 0.2; eff <- 0.2
cols <- c("psi","psi.ci1","psi.ci2","ate.lb","ate.ub","ate.ci1","ate.ci2",
          "bhq.lb","bhq.ub","bhq.ci1","bhq.ci2","h0err","hqerr","hserr")
res <- as.data.frame(matrix(nrow=nsim,ncol=length(cols)))
colnames(res) <- cols

# find values that yield set strength/sharpness
bseq <- exp(seq(-2.8,5.5,length.out=10000)); sharpfn <- function(b){ 
  (integrate(function(x){ pnorm(sqrt(1+b^2)*qnorm(mu)+b*x)*dnorm(x) },
  -qnorm(mu),Inf)$value - mu^2) / (mu-mu^2) }; psival <- sapply(bseq,sharpfn)
bval <- bseq[which.min(abs(psi-psival))]; aval <- qnorm(mu)*sqrt(1+bval^2)

for (i in 1:nsim){ print(i); flush.console()
# simulate data
x <- rnorm(n); gamma <- pnorm(aval + bval*x); c <- rbinom(n,1,gamma)
pi <- expit(x); z <- rbinom(n,1,pi); a <- c*z + (1-c)*rbinom(n,1,.5)
y1 <- rbinom(n,1,.5+eff/2); y0 <- rbinom(n,1,.5-eff/2); y <- a*y1 + (1-a)*y0

# estimate effects/strength/sharpness
res1 <- ivlate(y,a,z,cbind(1,x,gamma),nsplits=2, sl.lib=c("SL.glm"))
res2 <- ivbds(y,a,z,cbind(1,x,gamma),nsplits=2, sl.lib=c("SL.glm"))
res[i,1:3] <- res1$res[3,c(2,4,5)]; res[i,4:7] <- res2$res[1,2:5]
res[i,8:11] <- res2$res[2,2:5]
res$h0err[i] <- mean((res2$nuis$gamhat>0.5)!=c)
res$hqerr[i] <- mean(res2$nuis$hq!=c)
res$hserr[i] <- mean((res2$nuis$gamhat>runif(n))!=c)  }
# summarize results
mean(res$h0err); mean(res$hqerr); mean(res$hserr)
mean(res$ate.ub-res$ate.lb); mean(res$bhq.ub-res$bhq.lb)
mean(res$ate.ci1 < eff & eff < res$ate.ci2)
mean(res$bhq.ci1 < eff & eff < res$bhq.ci2)
mean(res$psi-psi); sd(res$psi)
mean(res$psi.ci1 < psi & psi < res$psi.ci2)
\end{verbatim}

\end{document}